\documentclass[12pt]{article}

\usepackage{eurosym}
\usepackage[left=2.2cm, right=2.2cm, top=2.2cm, bottom=2.2cm]{geometry}
\usepackage{amsmath,amsfonts,amssymb,amsthm}
\usepackage{lscape}
\usepackage{graphicx}
\usepackage{natbib} 
\usepackage{amsmath,amssymb,graphicx,setspace,rotating}
\usepackage[outdir=./]{epstopdf}
\usepackage[outdir=./]{epsfig} 
\usepackage{pdflscape}
\usepackage[section]{placeins}    
\usepackage{color}  
\usepackage{threeparttable} 
\usepackage{bm}      
                        
\setcounter{MaxMatrixCols}{10}     
\DeclareMathOperator*{\plim}{plim}    
\newtheorem{theorem}{Theorem}

\newtheorem{proposition}{Proposition} 
\newtheorem{assumption}{Assumption}  
\newtheorem{lemma}{Lemma}


\newcommand{\pto}{\overset{p}{\to}} 
\newcommand{\dto}{\overset{d}{\to}} 
\newtheorem{remark}{Remark}

\def\1{1\!{\rm l}}

\def \R {\mathbb{R}}
\def \E {\mathbb{E}}

\newcommand\floor[1]{\lfloor#1\rfloor}


\begin{document}

\title{Indirect Inference with a Non-Smooth Criterion Function\thanks{
    We would like to thank the Editor, Jianqing Fan, an Associate Editor and three anonymous referees for their constructive comments that greatly improved the paper.
    We also thank
    Jean-Jacques Forneron,    
    Pedro Sant'Anna,
    as well as 
    participants at the 2018 Shandong Econometrics Conference (Shandong University)
    the Workshop on Advances in Econometrics 2018 (Dogo Onsen)
    and
    BU 2019 Pi-day conference (Boston University). 
This is the revised version of a paper previously circulated under the title     
``Derivative-Based Optimisation with a Non-Smooth Simulated Criterion.'' 
First version: August 8, 2017.
  }
  \vspace{1.0cm}
} 

\date{\today}

\author{
  David T. Frazier\footnote{ 
  Department of Econometrics and Business Statistics, Monash University    
  (\texttt{david.frazier@monash.edu}).}
  \hspace{0.7cm}
  Tatsushi Oka\footnote{ 
  Department of Econometrics and Business Statistics, Monash University    
  (\texttt{tatsushi.oka@monash.edu}).}
  \hspace{0.7cm}
  Dan Zhu\footnote{Department of Econometrics and Business Statistics, Monash University (\texttt{dan.zhu@monash.edu}).}
  \vspace{0.7cm}
}

\maketitle

\begin{abstract}
Indirect inference requires simulating realisations of endogenous variables from the model under study. When the endogenous variables are discontinuous functions of the model parameters, the resulting indirect inference criterion function is discontinuous and does not permit the use of derivative-based optimisation routines. Using a change of variables technique, we propose a novel simulation algorithm that alleviates the discontinuities inherent in such indirect inference criterion functions, and permits the application of derivative-based optimisation routines to estimate the unknown model parameters. Unlike competing approaches, this approach does not rely on kernel smoothing or bandwidth parameters. Several Monte Carlo examples that have featured in the literature on indirect inference with discontinuous outcomes illustrate the approach, and demonstrate the superior performance of this approach over existing alternatives.
\end{abstract}

\vspace{1cm}
\noindent
\textit{Keywords:}
Simulation Estimators; Indirect Inference; Discontinuous Objective Functions; Dynamic Discrete Choice Models.\\

\vspace{0.5cm}
\noindent \textit{JEL Codes:} C10, C13, C15, C25

\newpage

\section{Introduction}

Simulation-based estimation methods, such as the method of simulated moments (\citealp{mcfadden1989method}, \citealp{duffie1993simulated}) and indirect inference (\citealp{Smith1993}, \citealp{GMR1993}, \citealp{GT1996}), are widely used inference procedures that are applicable to any model where simulation of data is possible. These methods are particularly useful in settings where the underlying structural model is too difficult for maximum likelihood estimation,  but where simulation from the model is straightforward. 

Given a fixed value of the unknown model parameters, simulation-based methods require the user to simulate synthetic realisations of endogenous variables, often referred to as simulated outcomes, from the underlying structural model. Once these simulated realisations have been generated, statistics based on the simulated data are calculated and then compared against statistics based on the observed data. Estimators for the unknown model parameters are then obtained by minimising a well-defined distance between the simulated summary statistics and their observed counterparts.

However, in many interesting cases,  the simulated outcomes from the structural model of interest are discontinuous transformations of the underlying model parameters, i.e., small changes in the parameter values can lead to substantial changes in the simulated data. Hence, the resulting sample criterion functions used in estimation will be discontinuous functions of the model parameters, even though the corresponding limit of the sample criterion function will often be differentiable in the model parameters. Indeed, the simulation of discontinuous outcomes, and the resulting discontinuity of the sample criterion function, is a relatively common occurrence in the indirect inference (hereafter, II) literature. Notable examples where II has been applied in the context of discontinuous simulated outcomes include the following: dynamic labor market models that are subject to the so-called ``initial-conditions'' problem (\citealp{AL2000}); switching-type models, such as autoregressive models with exponential marginal distributions (\citealp{IC2006}); structural models of human capital accumulation with learning (\citealp{Nagypal2006});  first-price auction models with heterogeneous bidders (\citealp{li2015}); certain dynamic sample selection models (\citealp{altonji2013modeling}); and the application of II to dynamic discrete choice models (see, e.g., \citealp{bruins2015generalized} for a discussion).

{Two potential solutions that can partially circumvent the difficulties encountered in II estimation with a discontinuous criterion function are the use of numerical derivatives within an optimisation scheme, and the use of derivative-free optimisation approaches. A simple solution to this issue would be to apply finite-differencing derivative estimates within a Newton-Raphson or quasi-Newton algorithm, even though the criterion function may be discontinuous in finite samples. The logic behind such an approach is often based on the following notion: if we could construct a criterion function using an infinite number of simulated samples, the resulting criterion function would be smooth enough to permit the use of numerical derivatives. While producing an infinite number of simulation is infeasible, if one takes the number of simulations used in II to be very large, which significantly increases the required computational effort, this would effectively smooth the discontinuous criterion function and give a basis for the use of such numerical derivatives within parameter estimation. \cite{gottard2017estimating} provide simulation results supporting this approach, when the number of simulations is larger than the sample size. However, computational issues aside, even with a large simulation size, estimating these derivatives using finite-differencing requires specifying a tuning parameter that imparts bias on the resulting estimates. In practice, the tuning parameter induces a trade-off between bias and variance and may have a substantial effect
on the estimators in finite samples, especially when the underlying model is discontinuous (see, e.g., \citealp{glynn1989optimization, andrieu2011gradient,detemple2005asymptotic}).}




{An alternative approach to finding II estimators in this setting is to use derivative-free methods, such as simplex-based algorithms or genetic algorithms. These methods can be quite useful when the dimension of the model parameters is relatively small, however, such methods often encounter difficulties when the dimension of the parameters is large (see \citealp{bruins2015generalized}, for a discussion of this issue). }

Recently, building on the initial work of \cite{keane2003generalized} and \cite{IC2006}, \cite{bruins2015generalized} have proposed a generalized II (GII) approach to alleviate the issue of discontinuous, as functions of the model parameters, simulated outcomes. The GII approach replaces the discontinuous simulated outcomes by a kernel-smoothed version that depends on a bandwidth parameter. For a positive value of the bandwidth parameter, GII allows the use of derivative-based optimisation routines to estimate the unknown model parameters. Furthermore, under certain regularity conditions, including that the bandwidth parameter shrinks to zero fast enough, the GII approach produces consistent and asymptotically normal estimates of the model parameters.
However, the application of GII can encounter certain difficulties in practice: the GII approach relies on a somewhat arbitrary choice of kernel function and bandwidth parameter; for any fixed sample size,  the use of artificially smoothed simulated outcomes imparts a non-negligible bias on the resulting parameter estimates; as is generally true with kernel smoothing methods, the choice of the bandwidth parameter is crucial for obtaining reliable performance.

The goal of this paper is to propose a novel simulation algorithm that yields a differentiable sample II criterion function in situations where data simulated under the structural model is discontinuous. Unlike the aforementioned GII approach, this new approach does not rely on any smoothing approaches, nor does it require the user to select a bandwidth parameter before the procedure can be implemented.

The key to this new II approach is a local change of variables (COV) technique that draws inspiration from the Hessian optimal partial proxy (HOPP) method of \cite{joshi2016optimal}. The HOPP method is a COV technique that allows the construction of unbiased estimators, up to a third-order term, for the derivatives of certain expectations that are of keen interest in financial mathematics, such as the so-called ``Greeks'' that are associated with option pricing. Additional COV strategies for calculating derivatives of similar expectations can be found in \cite{fu1994optimization}, \cite{Lyuu}, \cite{chan2009minimal} and \cite{peng2018new}, with these ideas first introduced by \cite{glynn1987likelilood} in the study of discrete-event systems.\footnote{We refer the interested reader to \cite{fu2006gradient} for an overview of these methods.} 

Unlike the problems to which the HOPP method is applied, which focuses on estimating derivatives of an expectation at a point, in II we are interested in obtaining uniformly, over the parameter space, consistent estimates for the derivatives of a simulated sample criterion function. {We propose a modification  of the HOPP approach  that can be applied to II estimation and demonstrate that this new approach results in II criterion functions that are continuously differentiable in the model parameters.} As a result, this new procedure permits the use of derivative-based optimisation routines to estimate the unknown model parameters, even though the original simulated outcomes are discontinuous. Critically, derivatives calculated from II criterion functions that use this technique are uniformly, over the parameter space, consistent estimators of their corresponding limit counterparts.

The approach considered herein amounts to a direct approximation of the II criterion function in a neighborhood of the point where the original criterion function is discontinuous. As such, our approach is a form of ``generalized indirect inference''. However, unlike the GII approach of \cite{bruins2015generalized}, which relies on a \textit{global} kernel smoothing approximation, the GII approach proposed herein relies on a \textit{local} approximation. To differentiate these two GII approaches, hereafter we refer to our approach as change of variable generalized indirect inference (GII-COV), while the kernel-based approach of \cite{bruins2015generalized} is referred to as kernel generalized indirect inference (GII-K). 

We demonstrate that our GII-COV approach yields consistent estimators for the derivatives of the simulated moments used within II estimation. As a result, GII-COV allows the consistent application of derivative-based optimisation routines to produce computationally efficient II parameter estimates, even though the model under study produces discontinuous simulated outcomes. A direct result of the GII-COV approach is a criterion function that is twice-continuously differentiable, uniformly in the parameters, which ensures that estimators obtained from this approach will have standard asymptotic properties, under fairly weak regularity conditions.

While numerical differentiation is the most common tool for calculating derivatives in econometrics, the computational tool we use for derivative calculation in this paper is automatic differentiation. This technique is common in computer science and financial mathematics (\citealp{glasserman2003monte}) and generally leads to faster derivative calculations than finite-differencing techniques, especially in a high-dimensional settings. Automatic differentiation {is a numerical procedure for estimating derivatives and can be viewed as a type of optimal finite-differencing derivative estimator; in particular, numerical derivatives calculated via automatic differentiation do not exhibit the bias associated with finite-differencing derivative estimates}. Instead, {derivatives calculated via automatic differentiation produce} exact numerical derivatives of the function under consideration, up to floating point errors. Therefore, the use of automatic differentiation will not only speed up the execution of a derivative-based optimisation algorithm but also produces results that are free from the bias inherent in finite-differencing derivative estimates.

The remainder of the paper is organised as follows. Section 2 supplies the general set-up and notation for the structural model and briefly reviews II estimation procedures. Section 3 proposes a change of variables technique that we use within our GII approach, and demonstrates that this approach to II permits the consistent application of derivative-based optimisation routines to estimate the unknown model parameters. Illustrative examples showcase the precise implementation details regarding this approach to II. Section 4 discusses the asymptotic properties of this approach. In Section 5, we apply our GII approach to several dynamic discrete choice models that have featured in the literature on II with discontinuous outcomes, and compare the resulting parameter estimates against the GII approach of \cite{bruins2015generalized} and two popular derivative-free methods. The results demonstrate that our approach compares favourably to existing approaches. Section 6 concludes. All proofs and tables are relegated to the appendix.

\section{Model, Examples and Standard Indirect Inference}

In this section we first present the model setup and describe
the standard indirect inference (hereafter, II) approach. In addition, we briefly examine the application of II in several economic examples.

In the remainder of the paper, we use the following notation.
Consider a $p\times1$ vector $x=(x_1,\cdots,x_p)'$
and a $q \times p$ matrix $A$.
We use $\|x\|$ to denote the Euclidean norm,
use $\|A\|$ to denote the operator norm (i.e.,
$\|A\| = \sup_{z\in \R^{p}: \|z\|=1}\| Az \|/\|z\|$
and define $\|x\|^{2}_{W}:=x'Wx$
for a $p \times p$ matrix $W$.
Let $f(x)=(f_{1}(x),...,f_{q}(x))'$
be
a $q\times1$ vector function consisting of differentiable scalar functions.
For $j=1,\dots, q$,
we denote by $\partial_{x_{i}}f_{j}(x)$ the derivative of $f_{j}(x)$
with respect to the $i$-th  component of $x$
for $i = 1, \dots, n$,
and the gradient  of $f_{j}(x)$ with respect to $x$
is denoted by
$\partial_{x} f_{j}(x)=
\big (\partial_{x_{1}}f_{j}(x),\dots,\partial_{x_{p}}f_{j}(x)\big)'$.
The gradient of the vector function $f(x)$ is given by
the $q \times p$ matrix
$\partial_{x} f(x) = \big (\partial_{x} f_{1}(x), \dots, \partial_{x} f_{q}(x)
\big)'$.
For $\delta>0$, define the $\delta$-neighborhood of the point $x^*\in\mathbb{R}^{p}$ as $\mathcal{N}_{\delta}(x^*):=\{x\in\mathbb{R}^{p}:\|x-x^*\|\leq\delta\}$. Also, let $\1[S]$ denote the indicator function on the set $S$.

\subsection{Models and Examples}\label{sec:me}

Assume the researcher wishes to conduct inference on unknown parameters $\theta\in\Theta\subseteq \mathbb{R}^{d_{\theta}}$, where $\Theta\subseteq\mathbb{R}^{d_{\theta}} $ denotes the parameter space of $\theta$, with $d_\theta$ its size, that govern the behavior of an endogenous variable $y$, whose support is $\mathcal{Y}$. Conditional on an exogenous variable $x$, with support $\mathcal{X}$, and an unobservable state variable $s$, with support $\mathcal{S}$, the endogenous variable $y$ evolves according to the following (causal) structural model:
\begin{flalign}
\label{eq:outcome}
y_{}&=
g (s_{}; \theta),\\ 
s_{}&= h (x_{},\epsilon_{};\theta^{}),\label{eq:state}
\end{flalign}where $\epsilon_{}$ is an error term that is independent and identically distributed (iid) according to the {known} cumulative distribution functions $F_{\epsilon}(\cdot)$, with corresponding density function $f_{\epsilon}(\cdot)$, and whose support is $\mathcal{E}$. The exogenous variables $x_{}$ are independent of $\epsilon_{}$, and the functions $g: \mathcal{S} \to \mathcal{Y}$ and $h:\mathcal{X} \times \mathcal{E} \to \mathcal{S}$ are known up to the unknown parameters $\theta^{}$.

We are interested in cases where the function $g(\cdot)$ is discontinuous in the state variable $s_{}$, and where, due to the complexity of the structural model, likelihood-based inference for $\theta$ is infeasible or prohibitively difficult. This model setup in \eqref{eq:outcome}-\eqref{eq:state} is fairly common in econometrics, and covers a wide variety of models. For illustration purposes, we provide four classes of examples considered under our framework.

\paragraph{Example 1 (Binary Choice Models with Serially Dependent Errors).}
Suppose that we observe a panel of realisations 
$\{(x_{it}', y_{it})' \in \mathbb{R}^{d_{x}}{\times}\mathbb{R}, i=1, \dots, n, t=1, \dots, T\}$
with cross-sectional unit $i $
and time period $t$,
generated from a binary choice model with autoregressive (AR) errors:
\begin{eqnarray*}
  y_{it}=\1[x_{it}'\gamma+v_{it} >0]
  \ \ \ \mathrm{with} \ \
  v_{it}=\rho v_{i,t-1}+\epsilon_{it},
\end{eqnarray*}
where the variable $x_{it}$ is a $d_{x} \times 1$ vector of exogenous variables with support $\mathcal{X} \subseteq \R^{d_{x}}$, the unobserved variable $v_{it}$ is generated by an AR(1) model
with unobserved iid innovation $\epsilon_{it}$, which follows the known distribution $F_{\epsilon}(\cdot)$. Here,
the state variable is $s_{it}=x_{it}'\gamma+\rho v_{i,t-1}+\epsilon_{it}$
and the structural parameters are $\theta = (\gamma', \rho)'$.
\hfill\(\Box\)

\paragraph{Example 2 (Ordered Probit Model with Individual Effects).}
Let $y_{it}$ be a categorical variable taking values in $\{0, 1, \dots, J\}$
for individual $i=1, \dots, n$ at time $t = 1, \dots, T$.
Given an observed, non-constant, vector $x_{it}$,
we assume that $y_{it} = j$ according to the model
\begin{eqnarray*}
  y_{it}=
  \left \{
  \begin{array}{ll}
    0, & \mathrm{if} \ \ s_{it} \le \delta_{1}\\
    1, & \mathrm{if} \ \ \delta_{1} < s_{it} \le \delta_{2} \\
     &  \vdots\\
    J, & \mathrm{if} \ \ \delta_{J} < s_{it}.
  \end{array}
  \right.
\end{eqnarray*}
Here,
$\delta_{1}, \dots, \delta_{J}$ are unknown threshold parameters
and $s_{it}$ is a state variable, which can be interpreted as the individuals latent utility, given by 
\begin{eqnarray*}
  s_{it}=x_{it}'\gamma + \sigma v_{i}+w_{it},
\end{eqnarray*}
where 
$v_{i}$ is an iid, time-invariant and individual-specific unobservable random variable following
$N(0,1)$, while
$w_{it}$ is an unobservable iid innovation following $N(0,1)$, and we define $\epsilon_{it}=(v_i,w_{it})'$. The structural parameters are $\theta=(\delta_{1},\dots,\delta_{J},\gamma,\sigma)'$.
\hfill\(\Box\)

\paragraph{Example 3 (Switching-type Models).}
{Let $v_{t}$ be iid exponentially distributed  with unity intensity parameter and let $u_{t}$ be iid uniform $[0,1]$, with $v_t$ and $u_t$ also independent.} The exponential autoregressive process evolves according to  
$$y_{t}=\phi_{}y_{t-1}+\mu\cdot v_{t}\1\left[u_{t}\leq \phi_{}\right],$$ where $0\leq\phi_{}<1$, $\mu{>}0$, { $\epsilon_t{=}(v_t,u_t)'$; see, e.g., \cite{DiCal2006} for II estimation of this model.  The state variable $s_{t}{=}(\mu v_{t},u_t)'$ is iid} and the structural parameters are $\theta{=}(\mu,\phi)'$.
\hfill\(\Box\)

\paragraph{Example 4 (G/G/1 Queue).} 
Let $y_{i}$ denote the inter-departure time for the $i$-th customer. Let $w_{i}$ be the corresponding inter-arrival time and $v_{i}$ the service time, with $v_{i}$ independent of $w_{i}$. Let $\mathbb{E}[w]>\mathbb{E}[v]$, and assume that we know  $v_{i}\sim f_{v}(\cdot;\theta_v)$ and $w_i\sim f_{w}(\cdot;\theta_w)$, with
both $f_{v}(\cdot)$ and $f_{w}(\cdot)$ known up to the unknown
parameters structural parameters $\theta=(\theta_v',\theta_w')'$. The inter-departure time process $\{y_i\}_{i=1}^{n}$ evolves according to, for $j>i$, 
\begin{flalign*}
y_{j}=\begin{cases}
v_j,& \text{ if }\sum_{i=1}^{j}w_{i}\leq\sum_{i=1}^{j-1}y_i\\v_j+\sum_{i=1}^{j}w_{i}-\sum_{i=1}^{j-1}y_i,&\text{ if }\sum_{i=1}^{j}w_{i}>\sum_{i=1}^{j-1}y_i.
\end{cases}
\end{flalign*}{In this example, the state variable is given by $s_i=(v_i,w_{i})'$} and specific parametric assumptions on $v_i$ and $w_i$ can be considered for the purposes of II estimation; see, e.g., {{\cite{heggland2004estimating}}} for a discussion of II estimation in queuing models. \hfill\(\Box\)
\\

{In what follows, to simplify discussion and notations, when discussing general quantities we will only consider a ``cross-sectional'' sample $\{(x_{i}',y_{i})':i=1,\dots,n\}$ from the structural model in equations \eqref{eq:outcome}-\eqref{eq:state}. However, we note that in cases where the observed data has a panel structure, with fixed time dimension $T$, as in Examples 1 and 2, this cross-sectional sample can always be obtained by redefining the observed variables. However, to avoid notational clutter, we will avoid such a scheme and instead focus only on cross-sectional setting. }




\subsection{Standard Indirect Inference}

The focus of II is to conduct estimation and inference on the true parameters of the structural model, denoted throughout by $\theta^0$,
when
the model has a complex parametric structure.
Even when the structural model is complex, it is often easy to simulate data
from the structural model given parameter values.

The first step of II is to estimate an intermediate or  auxiliary model, using the observed data and simulated data, separately.
II estimates of the structural parameters are then obtained
by minimising a well-chosen distance
between the two sets of estimated auxiliary parameters.

To formalise the above,
we denote the simulated unobservables 
by
$\{\epsilon_{i}^{r}\}_{i=1}^{n}$, for $r= 1, \dots, R$, 
where $R$ is the number of simulations for each observation, {and where each $\epsilon_i^r$ is generated iid from the known distribution $F_\epsilon$}. 
We can then construct simulated outcomes
$\{ {y}_{i}^{r}(\theta) \}_{i=1}^{n}$
for any $\theta\in \Theta$
according to
\begin{eqnarray}
  \label{sim1} 
  y_{i}^{r}(\theta)
  &=&
      g
      \big (
      s_{i}^{r}(\theta); \theta
      \big ),       \\
  \notag
  s_{i}^{r}(\theta)
  &=&
      h
      \big (
      x_{i}, \epsilon_{i}^{r};\theta
      \big).
\end{eqnarray}
To form an auxiliary model for the dependent variable $y_{i}$,
let $z_{i}$ be a vector of covariates that
consists of observed variables
with support $\mathcal{Z}$.
The auxiliary model is a tractable parametric model that attempts to capture the relationship between
$y_{i}$ and $z_{i}$.
Let $\mathcal{B} \subseteq \mathbb{R}^{d_{\beta }}$ be the parameter space for the auxiliary parameters with $d_{\beta }\geq d_{\theta }$.
We consider that
the auxiliary model implies 
{moment conditions characterized by
some known moment function 
$m:\mathcal{Y} {\times} \mathcal{Z} {\times} \mathcal{B} \to \mathbb{R}^{d_\beta}$, which satisfies, for some $\beta ^{0} \in \mathcal{B}$, }
\begin{equation}
  \mathbb{E}
  \left  [
m(y_{i},z_{i},\beta^{0} )
  \right ]=0.
  \label{aux}
\end{equation}%
We denote by $\hat{\beta}$ and $\hat{\beta}^{r}(\theta)$
estimators of the auxiliary parameter based on
the observed dataset and the $r$-th dataset simulated from the structural model, respectively. 
The auxiliary parameter estimates $\hat{\beta}$ and
$\hat{\beta}^{r}(\theta)$ are respectively
given as 
the solution to the sample and simulated counterpart of equation \eqref{aux}:
for $\theta \in \Theta$ and $r=1, \dots, R$,
\begin{eqnarray*}
  \frac{1}{n}
  \sum_{i=1}^{n}
  m(y_{i},z_{i},\beta) = 0
  \hspace{1cm}
  \mathrm{and} 
  \hspace{1cm}
  \frac{1}{n}  
  \sum_{i=1}^{n}
  m \big (y_{i}^{r}(\theta),z_{i},\beta \big ) =0.
\end{eqnarray*}

The most common II methods to estimate $\theta^0$ correspond to the so-called ``trinity'' of classical hypothesis tests:
Wald, Lagrange multiplier (LM) and likelihood ratio (LR).
The LM and Wald approaches to II estimation allow computationally simple and efficient estimation of $\theta^0$, in the class of II estimators. In contrast, the LR approach to II does not in general deliver efficient estimators (\citealp{GMR1993}). 
Thus, we focus on the LM and Wald approaches for II, respectively denoted by
LM-II and W-II in what follows.

The LM-II approach is based on the simulated auxiliary moments $M_{n}: \Theta \times \mathcal{B} \to \mathbb{R}^{d_{\beta}}$, given by 
\begin{eqnarray*}
  M_{n}(\theta, \beta)
  :=
  \frac{1}{nR}
  \sum_{i=1}^{n}
  \sum_{r=1}^{R}  
  m \big (y_{i}^{r}(\theta),z_{i}, \beta \big ).
\end{eqnarray*}
For LM-II, the criterion function $Q_{n}^{\text{LM}}: \Theta \to [0,\infty)$
is a quadratic form in the simulated auxiliary moments, evaluated at $\beta=\hat{\beta}$:
\begin{eqnarray*}
  Q_{n}^{\text{LM}}(\theta)
  :=
  \big \|
  M_{n}(\theta, \hat{\beta})
  \big \|^{2}_{\Omega_n},
\end{eqnarray*}
where
$\Omega_{n}$
is a sequence of positive-definite weighting matrices.
The W-II estimator is calculated using the (normed) difference between $\hat{\beta}$ and some version of $\hat{\beta}^r(\theta)$. A common version of the W-II estimator is to  define 
the average simulated auxiliary estimator
$
\bar{\beta}^{R}(\theta)
  :=
  R^{-1}
  \sum_{r=1}^{R}\hat{\beta}^{r}(\theta)$ and then minimize the criterion function $Q_{n}^{\text{W}}: \Theta \to [0,\infty)$, where
\begin{eqnarray*}
  Q_{n}^{\text{W}}(\theta)
  :=
  \big \|
  \bar{\beta}^{R}(\theta)
  -
  \hat{\beta}
  \big \|^{2}_{\Omega_n}.
\end{eqnarray*} 
The LM-II and W-II estimators of $\theta^0$ are then defined as the minimiser of their corresponding criterion functions.

When $y_{i}^r(\theta)$ is discontinuous in $\theta$,  as is the case in the structural model \eqref{sim1}, the resulting II criterion function is discontinuous in $\theta$
and derivative-based optimisation procedures cannot necessarily be trusted to deliver accurate estimates of $\theta^0$. We further examine this discontinuity in the confines of Example 1. 

\vspace{0.5cm}
\paragraph{Example 1 (cont.).}
Let 
$\{(u_{i1}^{r}, \dots, u_{iT}^{r})\}_{i=1}^{n}$
be simulated uniform random variables that are
used to generate simulated outcomes,
for $r=1, \dots, R$.
For fixed $\theta = (\gamma', \rho)'$,
the simulated unobservable term $v^{r}_{it}$ is constructed recursively as
$v_{it}^{r}=\rho v_{i,t-1}^{r}+F_{\epsilon}^{-1}(u_{it}^{r})$, with $v_{i1}^r=F_{\epsilon}^{-1}(u_{i1}^{r})$,
 and the simulated state variable is given by
$s_{it}^{r}(\theta)= x_{it}'\gamma + \rho v^r_{i,t-1}+F_{\epsilon}^{-1}(u_{it}^{r})$. 
Simulated outcomes $y_{it}^{r}(\theta)$ are then generated via
\begin{eqnarray} 
  y_{it}^{r}(\theta)
  =
  \1 [s_{it}^{r}(\theta) > 0 ]
  =
  \1 [
	F_{\epsilon}(-  x_{it}'\gamma - \rho   v^r_{i,t-1} ) 
	< u_{it}^{r}
  ]\label{new1}.
\end{eqnarray}
For this model, \citet{Li2010} and \citet{bruins2015generalized} suggest the linear probability model as an auxiliary model for II:
\begin{eqnarray*}
  y_{it}=z_{it}'\beta+\nu_{it},  
\end{eqnarray*}
where
$ z_{it}=[x_{it}',x_{i,t-1}']'$
and
$\nu_{it}$ is an error term.
We set 
$m(y_{it}, z_{it}, \beta)= z_{it}(y_{it} - z_{it}'\beta )$
as the moment function,
and 
the auxiliary parameter estimates are given by 
\begin{eqnarray*}
  \hat{\beta}
  =
  \bigg (\sum_{i=1}^{n}\sum_{t=2}^{T}z_{it}z_{it}' \bigg)^{-1}
  \sum_{i=1}^{n}\sum_{t=2}^{T}z_{it}y_{it}
  \ \ \mathrm{and} \ \
  \hat{\beta}^{r}(\theta)
  =
  \left(\sum_{i=1}^{n}\sum_{t=2}^{T}z_{it}z_{it}' \right)^{-1}
  \sum_{i=1}^{n}\sum_{t=2}^{T}z_{it}y_{it}^{r}(\theta),
\end{eqnarray*}
for $r=1, \dots, R$.
{
  Given a parameter value $\theta {\in} \Theta$ and simulated samples $\{y_{it}^{r}(\theta)\}_{r=1}^{R}$,
  one can then construct the criterion function for 
  the LM-II or the W-II approach. However, because the map $\theta \mapsto y_{it}^{r}(\theta)$ is discontinuous,
  derivatives of $Q_{n}^{\text{LM}}(\theta)$ and $Q_{n}^{\text{W}}(\theta)$
  need not exist.}
\hfill\(\Box\)
 
\vspace{0.5cm}


In each of the examples treated in the previous subsection, II estimation is based on the discontinuous mapping $\theta \mapsto y^r_{i}(\theta)$. As a result, derivative-based optimisation procedures may not deliver accurate estimates for the unknown model parameters.
To solve this issue, we propose to alter the standard II simulation approach by introducing a (sequence of) change of variables that will alleviate the discontinuity in the map $\theta \mapsto y^{r}_{i}(\theta)$. This alternative simulation approach will deliver II criterion functions that allow the application of derivative-based optimisation routines that can consistently estimate the unknown model parameters, even though the original model is non-smooth in the parameters.

\section{A New Generalized Indirect Inference Approach}\label{general}
Let $\theta^*$ denote a point at which we wish to evaluate the LM-II or W-II criterion function. According to equation \eqref{sim1}, $y_{i}^{r}(\theta)$ is discontinuous and the derivatives $\partial_{\theta} Q^{\text{LM}}_{n}(\theta^*)$
and 
$\partial_{\theta} Q^{\text{W}}_{n}(\theta^*)$
need not exist. The logic behind our approach is the observation that II is not required to simulate outcomes that perfectly represent the actual data generating process, so long as the difference between the two vanishes as the sample size increases. 

Define the population criterion functions for the LM and Wald approaches as 
\begin{eqnarray*}
  Q^{\text{LM}}(\theta)
  :=
  \big \|
  M(\theta, \beta^{0})
  \big\|_{\Omega}^{2}
  \ \ \ \mathrm{and} \ \ \
  Q^{\text{W}}(\theta)
  :=
  \big \|
  \beta(\theta) - \beta^{0}  
  \big \|^{2}_{\Omega},
\end{eqnarray*}
where 
$
M(\theta, \beta)
:=
\mathbb{E}
[
{m}(y_{i}^{r}(\theta),z_{i}, \beta)
]
$,
$\beta(\theta)$ is 
the solution to 
$
M(\theta, \beta)=0
$
and 
$\Omega$ is some positive definite matrix, and where $Q^{{l}}(\theta)$, $l=\text{LM,W}$, is assumed to be twice continuously differentiable. 
In a nutshell, our approach is to construct a ``generalized" II criterion, say
$Q_{n}^{\text{LM}}(\theta,\theta^*)$
or
$Q_{n}^{\text{W}}(\theta,\theta^*)$
for $\theta, \theta^{\ast} \in \Theta$, satisfying: for $l=\text{LM, W}$,
\begin{eqnarray*}
  \plim_{n\rightarrow\infty}
  \partial^{}_{\theta}Q^{l}_{n}(\theta,\theta^*)\big{|}_{\theta=\theta^*}
  =\partial_{\theta} Q^{l}(\theta^{\ast})
  \ \ \ \mathrm{and} \ \ \
  \plim_{n\rightarrow\infty}\partial^{2}_{\theta }Q^{l}_{n}(\theta,\theta^*)\big{|}_{\theta=\theta^*}=\partial^{2}_{\theta} Q^{l}(\theta^{\ast}).
\end{eqnarray*}
In the following section, we demonstrate how to construct such a criterion using a ``change of variables'' (hereafter, COV) technique. This COV will ensure that a small change in the parameter, say from $\theta^*$ to $\theta^*\pm\delta$, for small $\delta>0$, will not drastically alter the simulated data, thus alleviating the discontinuity. 

\subsection{Assumptions}

To present the COV technique used in this paper, we first introduce assumptions on the structural model in equations \eqref{eq:outcome}-\eqref{eq:state}
and the auxiliary model. 
The assumptions presented below are not restrictive and, with minor modifications, cover all of the examples referenced in this paper.

\vspace{0.3cm}
\begin{assumption}\label{one} \ 
  \normalfont
  \begin{itemize}
  \item[(a)]
    The observed random variables
    $(x_{i}', y_{i})'$  
    are iid~across cross-sectional unit $i=1,\dots,n$, and
    $x_{i}$ is independent of the innovation $\epsilon_{i}$.     
  \item[(b)]
    The innovation $\epsilon_{i}$ is iid
    and
    with known
    continuously differentiable density, $f_\epsilon$.
  \item[(c)]
    The function
    $h(x, \epsilon; \theta)$
    is twice continuously differentiable in
    $\epsilon \in \mathcal{E}$
    and   
    $\theta \in \Theta$
    for any $x \in \mathcal{X}$.
  \item[(d)] The parameter spaces $\Theta$ and $\mathcal{B}$ are compact. 
  \end{itemize}
\end{assumption}
 
\begin{assumption}\label{two}\normalfont \
  \begin{itemize}
  \item[(a)]
    Assumption 1 holds for the simulated process
    for all $\theta \in \Theta$.
  \item[(b)]
    $m(y, z, \beta)$ is continuous at each $\beta\in\mathcal{B}$, for any
    $(y,z) \in \mathcal{Y} \times \mathcal{Z}$.
  \item[(c)]
For all $i=1,\dots, n$,
there exists a random variable $\bar{m}_{i}$ such that
$\|m(y, z_{i}, \beta)\|\leq
\bar{m}_{i}$
with 
$\mathbb{E}
[
\bar{m}_{i}^2] 
<\infty$
for all $\beta\in\mathcal{B}$
and for all $y \in \mathcal{Y}$.
  \end{itemize}
\end{assumption}

\begin{assumption}
  \label{g} \normalfont \ 
  \begin{itemize}
  \item[(a)]
    For any $i = 1, \dots, n $ and $\theta \in \Theta$,
    there exist a finite integer $J_{}$
    and 
    a collection of random functions
    $\{c_{i}^{j}(\theta) \}_{j=0}^{J_{}+1}$
    with 
    $c_{i}^{j}:\Theta {\to} [0,1]$
    such that 
    the function $g(\cdot)$ in (\ref{eq:outcome})
    can be written as 
    \begin{eqnarray*}
      g(s_{i}; \theta)
      =
      \sum_{j=0}^{J}\alpha_{j} 
      \1[c_{i}^{j}(\theta) < u_{i} \le c_{i}^{j+1}(\theta)].
    \end{eqnarray*}
    Here
    $\alpha_{0}, \dots, \alpha_{J}$
    are known constants
    and 
    the random variable $u_{i}:=F_{\epsilon}(\epsilon_{i})$ follows the standard uniform. 
    Also, the random function $\{c_{i}^{j}(\theta)\}_{j=0}^{J+1}$
    are twice-continuously differentiable 
    and satisfy that 
    $c_{i}^{0}(\theta) = 0$,
    $c_{i}^{J+1}(\theta) = 1$,
    and 
    $c_{i}^{j}(\theta) < c_{i}^{j+1}(\theta)$.  
 \item[(b)]
   For each $\ell \in \{1,2\}$,
   there exists a random variable $\nabla^{\ell} \bar{c}_{i}$
   so that 
   $\sup_{\theta \in \Theta}\|\partial_{\theta}^{\ell} c_{i}^{j}(\theta)\| \le
   \nabla^{\ell} \bar{c}_{i}
   $
   and
   $E| \nabla^{\ell} \bar{c}_{i} |^2 < \infty$
   for any $(i, j) \in \{1, \dots, n\}{\times}\{0, \dots, J\}$.
\end{itemize}

\end{assumption}
\vspace{0.5cm} 

{The above assumptions are fairly weak, and in the supplemental appendix, we demonstrate that, up to minor modifications, Assumptions 1-3 are satisfied for each of the examples presented in Section 2.1. The specific interpretation of the assumptions is as follows.} Assumption \ref{one} requires that the discontinuities in $y_{i}^{r}(\theta)$ arise only from $g(\cdot)$ and ensures that we can present the COV in the most general context. {The Assumption of iid data in \ref{one}(a) can be extended to independent non-identically distributed (inid) data, or weakly dependent data, at the cost of further notation and more involved technical arguments. In particular, under these more general assumption, only the structure of the random functions $c_j^j(\theta)$ that partition the support of $u_{i}$ will change; we refer the interested reader to the supplementary appendix for an example with weakly dependent data.}
Throughout the remainder, we will refer to $\{c^{j}_{i}(\theta)\}_{j=0}^{J +1}$ as critical point functions.
Assumption \ref{two} imposes regularity conditions on the simulated moment function. The regularity conditions in Assumption \ref{two} are standard in the literature on II with discontinuous outcomes.
Assumption \ref{g} restricts the analysis to univariate $\epsilon_{i}$, however, the extension to multivariate $\epsilon_{i}$ is almost automatic, and can be accomplished by further decomposing the simulation algorithm into corresponding scalar innovations. Assumption \ref{g} formalizes the structure of the model in a way that ensures we can sequester the discontinuities into various regions of the support for $\epsilon_{i}$. As such, Assumption \ref{g} requires an upper bound on the number of possible discontinuities
in $y^{r}_{i}(\theta)$ that arise from the structural function  $g(\cdot)$.
The term $J$ can be interpreted as the maximum number of discontinuities allowed by the model.  
For example, in the dynamic binary panel models, $J=1$ for all $i$ and $t$.
Under Assumption \ref{g}(a),
the discontinuous sizes, $\alpha_{0}, \dots, \alpha_{J}$, {are assumed to be
known constants, which holds for the first two examples examined in Section 2.1. A minor modification of Assumption \ref{g}(a) also covers the second two examples covered in Section 2.1. Assumption \ref{g}(a) is employed for simplicity and can be extended, at the cost of additional notation, to the case where
the discontinuous sizes, $\{\alpha_{j}\}_{j=0}^{J}$, are twice-differentiable functions of
the structural parameters.
Assumptions \ref{g}(a)-(b) are needed to establish uniform convergence
over the parameter space and are used for our asymptotic analysis given later.

{Assumption \ref{g}(a) restricts the class of distributions for the errors $\epsilon_i$ to be continuous. While it may be feasible to extend this approach to cases where $\epsilon_i$ is discrete, given that these examples are much less frequent in econometrics than their continuous counterpart, we do not consider such situations herein.  Before concluding, we note that even though $\epsilon_i$ is restricted to be a continuous random variable, this assumption is satisfied in a wide variety of examples, such as those given in Examples 1-4, as well as censored-type models, such as Tobit-type models.}

\subsection{Approximate Derivatives}

Under Assumptions \ref{one}-\ref{g}, we carry out a change of variables on the original uniform random variables $\{u_{i}^{r}\}$ to construct new simulated outcomes that will be differentiable.
Let $\theta^* \in \Theta$ denote a point at which we wish to evaluate the derivative of the function $\theta \mapsto y^r_{i}(\theta)$, and  consider the following transformation of $u_i^r$:\footnote{There are many such transformations that will accomplish our goal, see \cite{chan2009minimal} and \cite{Lyuu}. The above transformation is chosen as there is theoretical evidence to suggest that $u_{i}(\theta,\theta^*)$ in \eqref{new_U} is optimal, in terms of minimizing mean squared error, for indicator functions (\citealp{joshi2016optimal}). Since many of the commonly encountered discontinuities in II arise from simulating indicator function, this optimality should transfer to our settings. 
}
\begin{equation}
  u_{i}^{r}(\theta,\theta^*)
  :=
  c^{j}_{i}(\theta)
  +
  \frac{
    c^{j+1}_{i}(\theta)-c^{j}_{i}(\theta)
  }{
    c^{j+1}_{i}(\theta^*)-c^{j}_{i}(\theta^*)
  }
  \{u_{i}^{r}-c^{j}_{i}(\theta^*)\},
  \label{new_U}
\end{equation}
for
$c^{j}_{i}(\theta^*)<u_{i}^{r}\leq c^{j+1}_{i}(\theta^*)$
with $j = 0, \dots, J$, and
the corresponding Jacobian term of this COV
is defined as 
$w^{r}_{i}(\theta,\theta^*):=
\partial u_{i}^{r}(\theta,\theta^*)/ \partial u_{i}^{r}
$, i.e.,
for
$c^{j}_{i}(\theta^*)<u_{i}^{r}\leq c^{j+1}_{i}(\theta^*)$,
\[
w^{r}_{i}(\theta,\theta^*)=\frac{c^{j+1}_{i}(\theta)-c^{j}_{i}(\theta)}{c^{j+1}_{i}(\theta^*)-c^{j}_{i}(\theta^*)}.
\]

Under Assumption \ref{g}(a), the difference
$
  c^{j+1}_{i}(\theta^*)-c^{j}_{i}(\theta^*)
$
is strictly positive almost surely for all $\theta^{\ast} \in \Theta$, and all $(i, j)\in \{1, \dots, n\} {\times} \{0, \dots, J\}$. {Thus, there exists a constant $\bar{w}$
	such that
	$|w_{i}^{r}(\theta, \theta^{\ast})| \le \bar{w}$ 
	for all $(\theta, \theta^{\ast}) \in \Theta^2$.
 This boundedness of the Jacobian term $w_i^r(\theta,\theta^*)$, for all $(\theta,\theta^*) \in\Theta^2$, is critical to derive the uniform Laws of Large Numbers that will be needed to obtain uniform convergence of our proposed II criterion function and their derivatives.}
Moreover, under Assumption \ref{g}(b), 
thus there exists a $\nabla \bar{w}_{i}$
such that
$\|\partial_{\theta}w_{i}^{r}(\theta,\theta^{\ast})\|
\le \nabla \bar{w}_{i}
$
for all $(\theta, \theta^{\ast}) \in \Theta^2$
and
$\mathbb{E}|\nabla \bar{w}_{i}^2| < \infty$.

Given the transformed series $\{u_{i}^{r}(\theta, \theta^{\ast})\}$,
for $r=1, \dots, R$,
we then construct new simulated outcomes according to 
\begin{eqnarray*}
  y_{i}^{r}(\theta,\theta^*)
  :=
  \sum_{j=0}^{J}
  \alpha_{j}
  \1\left[c_{i}^{j}(\theta)<u_{i}^{r}(\theta,\theta^*)\leq c_{i}^{j+1}(\theta)\right].
\end{eqnarray*}
II can now proceed by replacing the moment function
$m\big (y_{i}^{r}(\theta), z_{i}, \beta \big )$
with the following moment function:
\begin{eqnarray*}
  {m}_{i}^{r}(\theta,\theta^*,\beta)
  :=
  m\big (y_{i}^{r}(\theta,\theta^*),z_{i},\beta \big)\cdot
  w_{i}^{r}(\theta,\theta^* ),
\end{eqnarray*}
and the moment conditions for II estimation can be defined through the function
$M_{n}: \Theta {\times} \Theta {\times} \mathcal{B} \to \mathbb{R}^{d_{\beta}}$,
given by 
\begin{eqnarray*}
  M_{n}(\theta, \theta^{\ast}, \beta)
  :=
  \frac{1}{nR}
  \sum_{i=1}^{n}
  \sum_{r=1}^{R}  
  {m}_{i}^{r}(\theta,\theta^*,\beta).
\end{eqnarray*}

The result below shows that the derivatives of the moment function $M_{n}(\theta, \theta^{\ast}, \beta)$  
 with respect to $\theta$ are unbiased and uniformly consistent estimators of their corresponding limit counterparts. 

\vspace{0.5cm}
\begin{proposition}\label{mc1}
  Suppose that 
  Assumptions \ref{one}-\ref{g} hold.
  Then, 
  the first and second derivatives
  of $M_{n}(\theta, \theta^{\ast}, \beta)$ with respect to $\theta$, calculated at $\theta=\theta^*$,
  exist and are unbiased estimators for
  the first and second derivatives  of
  $M(\theta^*, \beta)$ with respect to $\theta$, calculated at $\theta=\theta^*$. Furthermore,
  as $n\rightarrow\infty$,  
  \begin{eqnarray*}
    \partial_{\theta}
    M_{n}(\theta, \theta^{\ast}, \beta)
    \big{|}_{\theta=\theta^*}
    \pto
    \partial_{\theta}
    M(\theta^{\ast}, \beta)    
     \ \ \mathrm{and}  \ \ 
    \partial_{\theta_{k}}
    \partial_{\theta_{l}}
    M_{n}(\theta, \theta^{\ast}, \beta)
    \big{|}_{\theta=\theta^*}
    \pto
    \partial_{\theta_{k}}
    \partial_{\theta_{l}}
    M(\theta^{\ast}, \beta),
  \end{eqnarray*}
  uniformly in $(\theta^{\ast},\beta) \in \Theta \times \mathcal{B}$
  and for every $k ,l \in \{1 \dots, d_{\theta}\}$.
\end{proposition}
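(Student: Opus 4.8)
The plan is to exploit the defining feature of the change of variables in \eqref{new_U}: it carries the interval $(c_i^j(\theta^*),c_i^{j+1}(\theta^*)]$ exactly onto $(c_i^j(\theta),c_i^{j+1}(\theta)]$. First I would verify this by checking endpoints — substituting $u_i^r=c_i^j(\theta^*)$ returns $c_i^j(\theta)$, and $u_i^r=c_i^{j+1}(\theta^*)$ returns $c_i^{j+1}(\theta)$ — so, being affine and increasing in $u_i^r$, the map sends one half-open interval bijectively onto the other. Consequently, whenever $u_i^r\in(c_i^j(\theta^*),c_i^{j+1}(\theta^*)]$ the transformed draw satisfies $\1[c_i^j(\theta)<u_i^r(\theta,\theta^*)\le c_i^{j+1}(\theta)]=1$ for \emph{every} $\theta$, so $y_i^r(\theta,\theta^*)=\alpha_{j}$ does not depend on $\theta$. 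This is the crucial simplification: the whole $\theta$-dependence of $m_i^r(\theta,\theta^*,\beta)=m(y_i^r(\theta,\theta^*),z_i,\beta)\,w_i^r(\theta,\theta^*)$ sits in the Jacobian weight $w_i^r(\theta,\theta^*)$, which by Assumption \ref{g}(a) is a ratio of $C^2$ functions with strictly positive denominator, hence twice continuously differentiable in $\theta$. Existence of the first two $\theta$-derivatives of the finite average $M_n$ then follows by differentiating a finite sum, with the derivative landing only on $w_i^r$.

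For unbiasedness I would compute both sides in closed form. Integrating the uniform $u_i^r$ over $[0,1]$ conditional on $x_i$ gives $M(\theta,\beta)=\E[\sum_{j=0}^J m(\alpha_j,z_i,\beta)\{c_i^{j+1}(\theta)-c_i^j(\theta)\}]$, a $C^2$ function of $\theta$; differentiation under the expectation is licensed by the integrable bounds of Assumptions \ref{two}(c) and \ref{g}(b), yielding $\partial_\theta M(\theta^*,\beta)=\E[\sum_j m(\alpha_j,z_i,\beta)\{\partial_\theta c_i^{j+1}(\theta^*)-\partial_\theta c_i^j(\theta^*)\}]$. On the estimator side, using $w_i^r(\theta^*,\theta^*)=1$ and the reduction above, $\partial_\theta m_i^r(\theta,\theta^*,\beta)|_{\theta=\theta^*}=m(\alpha_{j},z_i,\beta)\,\{\partial_\theta c_i^{j+1}(\theta^*)-\partial_\theta c_i^j(\theta^*)\}/\{c_i^{j+1}(\theta^*)-c_i^j(\theta^*)\}$ on the event $u_i^r\in(c_i^j(\theta^*),c_i^{j+1}(\theta^*)]$. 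Taking the conditional expectation, the probability of that event is exactly $c_i^{j+1}(\theta^*)-c_i^j(\theta^*)$, which cancels the denominator of the weight-derivative; summing over $j$ and taking the outer expectation reproduces $\partial_\theta M(\theta^*,\beta)$ term by term. The identical cancellation, with the second derivative of $w_i^r$ in the numerator, gives unbiasedness of $\partial_{\theta_k}\partial_{\theta_l}M_n|_{\theta=\theta^*}$; this exact cancellation of the sampling probability against the Jacobian denominator is the heart of the argument.

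The remaining task is the uniform-in-$(\theta^*,\beta)$ law of large numbers for the iid-across-$i$ averages $n^{-1}\sum_i R^{-1}\sum_r \partial_\theta^{\ell} m_i^r(\theta,\theta^*,\beta)|_{\theta=\theta^*}$, $\ell=1,2$. Domination is immediate from Assumption \ref{two}(c) together with the bounds on the derivatives of $w_i^r$ recorded after \eqref{new_U} (which follow from Assumption \ref{g}(b)): each $\ell$-th summand is bounded in norm by $\bar m_i$ times the corresponding derivative bound on $w_i^r$, a product that is integrable by Cauchy--Schwarz since both factors have finite second moments. Compactness of $\Theta\times\mathcal{B}$ is Assumption \ref{one}(d), the pointwise LLN at each $(\theta^*,\beta)$ follows from the unbiasedness computation and finite second moments, and the closed-form limits above are continuous in $(\theta^*,\beta)$.

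The main obstacle is that each summand is \emph{not} continuous in $\theta^*$ for a fixed draw $u_i^r$: as $\theta^*$ varies, $u_i^r$ crosses the boundaries $c_i^j(\theta^*)$, at which both $m(\alpha_j,\cdot)$ and the active index $j$ jump, so a continuous-function ULLN does not apply directly. The route I would take is to observe that for each fixed $(\theta^*,\beta)$ the summand is continuous at that point with probability one, since $P(u_i^r=c_i^j(\theta^*)\mid x_i)=0$ by continuity of the uniform law, and then invoke a uniform law of large numbers valid for possibly discontinuous criteria via a stochastic equicontinuity argument. The equicontinuity conditions are met because the discontinuities occur on a $\theta^*$-null set for each realisation, the number of jumps $J$ is finite with bounded sizes $\alpha_j$, the critical-point functions are $C^2$ and hence Lipschitz on the compact $\Theta$, and the envelope is square-integrable; this delivers uniform convergence in probability for both $\ell=1$ and $\ell=2$.
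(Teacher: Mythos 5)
Your proposal is correct and takes essentially the same route as the paper's own proof: the same reduction in which, on the event $u_i^r\in(c_i^j(\theta^*),c_i^{j+1}(\theta^*)]$, the simulated outcome is the constant $\alpha_j$ and all $\theta$-dependence sits in the twice continuously differentiable Jacobian weight; the same cancellation of the interval probability $c_i^{j+1}(\theta^*)-c_i^j(\theta^*)$ against the denominator of the weight's derivative for unbiasedness (matching the paper's Lemma A.1); and the same Cauchy--Schwarz domination via $\bar m_i$ and the derivative bounds on $w_i^r$. The only difference is cosmetic: at the uniform-convergence step the paper explicitly constructs Jennrich/Wald-type sup--inf brackets $\pi_{n,\delta}^{\pm}$ over small neighborhoods and sends $\delta\to 0$ by dominated convergence, whereas you invoke a generic almost-sure-continuity/stochastic-equicontinuity ULLN resting on exactly the same facts (null probability of $u_i^r$ hitting a critical point, finite $J$, smooth critical-point functions on compact $\Theta$, square-integrable envelope).
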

\vspace{0.5cm}

The above result demonstrates that simulated moments produced by this procedure have the derivatives with respect to $\theta$, evaluated at $\theta^*$,
which are consistent estimates of their limit counterparts. Therefore, this COV approach allows us to construct ``generalized'' LM-II and Wald-II criterion functions as
\begin{eqnarray*}
  Q^{\text{LM}}_{n}(\theta, \theta^{\ast})
  :=
  \big \|
  M_{n}(\theta, \theta^{\ast}, \hat{\beta})
  \big\|_{\Omega_{n}}^{2}
  \ \ \ \mathrm{and} \ \ \
  Q^{\text{W}}_{n}(\theta, \theta^{\ast})
  :=
  \big \|
  \bar{\beta}^{R}(\theta,\theta^{\ast}) - \hat{\beta}
  \big \|^{2}_{\Omega_{n}},
\end{eqnarray*}
where 
$\bar{\beta}^{R}(\theta,\theta^*)
:=
R^{-1}\sum_{r=1}^{R}\hat{\beta}^{r}(\theta,\theta^*)$
with 
the estimator 
$\hat{\beta}^{r}(\theta,\theta^*)$
satisfying
the following moment condition: 
$
  n^{-1}
  \sum_{i=1}^{n}m_{i}^r(\theta,\theta^*,\beta) = 0 
$
for each $r = 1 \dots, R$
and given $(\theta, \theta^{\ast}) \in \Theta^2$.  We can then define generalized II estimators based on this COV approach, and denoted by $\hat{\theta}^{\text{LM}}$ and $\hat{\theta}^{\text{W}}$, via the minimisation problem
\begin{flalign}
  \hat{\theta}^{{l}}
  =\arg\min_{\theta\in\Theta}Q^{l}_{n}(\theta,\theta),
  \hspace{0.5cm} \mathrm{for} \ l \in \{\text{LM, W}\}.\label{aii_1}
\end{flalign}
Hereafter, we refer to such II estimators as GII change of variable (GII-COV) estimators.

\vspace{0.05cm}
\paragraph{Example 1 (cont.).} 
To implement our GII-COV approach, recall that the standard simulated outcome was
$y_{it}^r(\theta)=\1[F_{\epsilon} (-x_{it}'\gamma - \rho v_{i,t-1}^{r} ) <u_{it}^{r}].$
Now, consider
the critical point functions:\footnote{Recall that, while the critical point functions can depend on the simulated data set $r$, we alleviate this dependence to simplify notations.}
\begin{flalign*}
c_{it}^{0}(\theta) = 0 , \ \  \
c_{it}^{1}(\theta)=F_{\epsilon}
(- x_{it}'\gamma-\rho v_{i,t-1}^{r} ) 
\ \ \ \mathrm{and} \ \ \
c_{it}^{2}(\theta) = 1. 
\end{flalign*}
Let $\theta^* = (\gamma^{\ast \prime}, \rho^{\ast})'$ be a point at which we wish to
evaluate the function $\theta \mapsto y_{it}^{r}(\theta)$. As in (\ref{new_U}), we construct $u_{it}^{r}(\theta, \theta^{\ast})$
as follows: 
\begin{eqnarray*}
  u_{it}^{r}(\theta, \theta^{\ast}) =
  \left \{
  \renewcommand{\arraystretch}{1.6}
  \begin{array}{ll}
    \dfrac{
    c^{1}_{it}(\theta)
    }{
    c^{1}_{it}(\theta^*)
    }
    u_{it}^{r}, 
    &
      \text{ if }u_{it}^{r}\le c^{1}_{it}(\theta^*) \\[0.3cm]
    c_{it}^{1}(\theta)
    +
    \dfrac{
    1-c_{it}^{1}(\theta)
    }{
    1-c_{it}^{1}(\theta^*)
    }
    \{ u_{it}^{r}-c^{1}_{it}(\theta^*) \}
    ,
    &\text{ if } c^{1}_{it}(\theta^*) < u_{it}^{r}.
  \end{array}
  \right .
\end{eqnarray*}
The corresponding Jacobian term $w_{it}^{r}(\theta, \theta^{\ast})$ depends on $\theta$ only though the critical point function $c_{it}^{1}(\theta)$
and is differentiable in $\theta$.
The new simulated outcomes are generated exactly as the original outcomes,
$y_{it}^{r}(\theta)=\1[c_{it}^{1}(\theta) < u_{it}^{r} ]$, except that the new outcomes are simulated by replacing $u_{it}^{r}$ with the uniforms $u_{it}^{r}(\theta,\theta^*)$.
That is,
$
y_{it}^{r}(\theta, \theta^{\ast})=
\1[c_{it}^{1}(\theta) < u_{it}^{r}(\theta, \theta^{\ast})  ]
$. 
Using  $y_{it}^{r}(\theta,\theta^*)$ and $w^{r}_{it}(\theta,\theta^*)$,
we can then construct the approximation to the moment function
\begin{eqnarray*}
  M_{n}(\theta, \theta^{\ast}, \beta)
  =
  \frac{1}{nR}
  \sum_{i=1}^{n}
  \sum_{r=1}^{R}
  \sum_{t=2}^{T}
  z_{it}
  \big (
  y_{it}^{r}(\theta, \theta^{\ast}) - z_{it}'\beta
  \big )
  w_{it}^{r}(\theta, \theta^{\ast}),
\end{eqnarray*}
and to the II binding function,
\begin{eqnarray*}
  \bar{\beta}^{R}(\theta,\theta^*)
  =
  \bigg(
  \sum_{i=1}^{n}\sum_{t=2}^{T}z_{it}z_{it}'
  \bigg)^{-1}
  \sum_{i=1}^{n}\sum_{t=2}^{T}z_{it}
  \frac{1}{R}
  \sum_{r=1}^{R}y_{it}^{r}(\theta,\theta^*)w^{r}_{it}(\theta,\theta^*).  
\end{eqnarray*}

By the definition of $u_{it}^{r}(\theta, \theta^{\ast})$ above, 
we have that
$c_{it}^{1}(\theta) < u_{it}^{r}(\theta, \theta^{\ast}) $
if and only inf 
$  c^{1}_{it}(\theta^*) < u_{it}^{r}$,
which yields an alternative representation:
$
  y_{it}^{r}(\theta, \theta^{\ast})
  = 
\1[c_{it}^{1}(\theta^{\ast}) < u_{it}^{r}].
$
Thus, 
both
$
M_{n}(\theta, \theta^{\ast}, \beta)
$
and 
$
\bar{\beta}^{R}(\theta,\theta^*)
$
are functions of $\theta$
only through the Jacobian term $w_{it}(\theta, \theta^*)$,
which is differentiable in $\theta$. Hence, both approximation functions are differentiable in $\theta$;
for example, 
\begin{eqnarray*}
  \partial_{\theta} M_{n}(\theta, \theta^{\ast}, \beta)
  =
  \frac{1}{nR}
  \sum_{i=1}^{n}
  \sum_{r=1}^{R}
  \sum_{t=2}^{T}
  z_{it}
  \big (
  \1[c_{it}^{1}(\theta^{\ast}) < u_{it}^{r}]
  - z_{it}'\beta
  \big )
  \partial_{\theta} w_{it}^{r}(\theta, \theta^{\ast}).
\end{eqnarray*}
Moreover, by Proposition \ref{mc1}, the approximation derivative 
$\partial_{\theta} M_{n}(\theta, \theta^{\ast}, \beta)$,
evaluated at $\theta = \theta^{\ast}$,
is an unbiased and uniformly consistent estimator for the population counterpart
$\partial_{\theta} M(\theta^{\ast}, \beta)$.
\hfill\(\Box\)

\hspace{0.5cm}

While the GII-COV estimators can be defined as in equation \eqref{aii_1}, it is nonetheless interesting to note that the criterion functions $Q_{n}^{\text{LM}}(\theta,\theta)$ and $Q_{n}^{\text{W}}(\theta,\theta)$ have very regular behaviour. In particular, from the results of Proposition \ref{mc1} we can deduce that these II criterion functions are consistent estimators of their corresponding limit counterparts. 
To obtain this consistency result, the following assumption is additionally employed.

\vspace{0.2cm}
\begin{assumption}\label{three}
  \normalfont Recall
  $M(\theta,\beta)= \E[m(y_{i}^{r}(\theta),z_{i},\beta)]$. 
  \begin{itemize}
  \item[(a)]  
    $M(\theta^0,\beta) =
    \E[m(y_{i},z_{i},\beta)]    
    $
    for all $\beta\in\mathcal{B}$.
  \item[(b)]
    The parameter
    $\beta^0$
    lies in the interior of $\mathcal{B}$
    and 
    is the unique solution to $M(\theta^{0}, \beta)=0$. 
  \item[(c)]
    $\Omega_{n}\pto \Omega$ for some positive-definite matrix $\Omega$. 
  \end{itemize}
\end{assumption}
\vspace{0.2cm}

Assumption \ref{three}(a) ensures that the structural model is correctly specified.
Assumption \ref{three}(b) requires that there exists an unique auxiliary parameter vector satisfying the population moment condition evaluated
at the true structural parameter $\theta^{0}$.
From Assumption \ref{three}(c),
the possibly random weighting matrix $\Omega_{n}$ converges to
a positive-definite matrix $\Omega$. 


\vspace{0.5cm}
\begin{theorem}
  \label{theorem:Q}
  Suppose that Assumptions \ref{one}-\ref{three} hold
  and let
  $\ell \in \{0,1, 2\}$.
  Then,
  \begin{description}
  \item[(a)]
    uniformly in $\theta^{\ast} \in \Theta$
    as $n\rightarrow\infty$, 
    \begin{eqnarray*}
      \partial_{\theta}^{\ell} Q_{n}^{\text{\em LM}}(\theta, \theta^{*})|_{\theta = \theta^{\ast}}
      \pto
      \partial_{\theta}^{\ell} Q^{\text{\em LM}}(\theta^{\ast});
    \end{eqnarray*}
  \item[(b)]
    additionally, if
    (i)
    the map $\beta \mapsto m(y,z,\beta)$ is twice continuously differentiable
    given any $(y,z) \in \mathcal{Y} \times \mathcal{Z}$,
    (ii)
    there exists a random variable $\nabla^{\xi} \bar{m}_{i}$ 
    so that 
    $\sup_{\beta \in \mathcal{B}}\|\partial_{\beta}^{\xi} m(y_{i}, x_{i}, \beta)\| \le
    \nabla^{\xi} \bar{m}_{i}$
    and 
    $\mathbb{E}| \nabla^{\xi} \bar{m}_{i} |^2 < \infty$ for $\xi = 1, 2$,
    (iii)
    $\mathbb{E}
    \big[ \partial_{\beta}m \big ( y_{i}^{r}(\theta), z_{i}, \beta \big)
    \big]
    $ is non-singular
    for any $(\theta, \beta) \in \Theta \times \mathcal{B}$,
    and
    (iv)
    $\log(R) / n \to 0$, then 
 \begin{eqnarray*}
 	\partial_{\theta}^{\ell} Q_{n}^{\text{\em W}}(\theta, \theta^{*})|_{\theta = \theta^{\ast}}
 	\pto
 	\partial_{\theta}^{\ell} Q^{\text{\em W}}(\theta^{\ast}),
 \end{eqnarray*}
 uniformly in $\theta^{\ast} \in \Theta$
 as $n\rightarrow\infty$. 
  \end{description}
\end{theorem}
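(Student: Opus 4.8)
The plan is to reduce both parts to Proposition \ref{mc1} by writing the required derivatives of the criterion functions through the chain and product rules and then verifying that each resulting factor converges uniformly in $\theta^{\ast}$. Before differentiating I would record two preliminaries. First, evaluating the change of variables at $\theta=\theta^{\ast}$ gives the identity map, $u_{i}^{r}(\theta^{\ast},\theta^{\ast})=u_{i}^{r}$ and $w_{i}^{r}(\theta^{\ast},\theta^{\ast})=1$, so $M_{n}(\theta^{\ast},\theta^{\ast},\beta)$ collapses to the standard simulated moment $\tfrac{1}{nR}\sum_{i}\sum_{r}m(y_{i}^{r}(\theta^{\ast}),z_{i},\beta)$; by Assumption \ref{two}(b)--(c) and the compactness in Assumption \ref{one}(d), a uniform law of large numbers (ULLN) yields $M_{n}(\theta^{\ast},\theta^{\ast},\beta)\pto M(\theta^{\ast},\beta)$ uniformly in $(\theta^{\ast},\beta)$. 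Second, the same ULLN applied to $\tfrac{1}{n}\sum_{i}m(y_{i},z_{i},\beta)$ together with Assumption \ref{three}(a)--(b) (unique zero at $\beta^{0}$) delivers consistency of the observed auxiliary estimator, $\hat{\beta}\pto\beta^{0}$.

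For part (a) I would differentiate $Q_{n}^{\text{LM}}(\theta,\theta^{\ast})=\|M_{n}(\theta,\theta^{\ast},\hat{\beta})\|_{\Omega_{n}}^{2}$. The case $\ell=0$ follows from the two preliminaries, continuity of $M(\theta^{\ast},\cdot)$ in $\beta$, Assumption \ref{three}(c), and the continuous mapping theorem. For $\ell=1$,
\[
  \partial_{\theta}Q_{n}^{\text{LM}}(\theta,\theta^{\ast})\big|_{\theta=\theta^{\ast}}
  =2\big[\partial_{\theta}M_{n}(\theta,\theta^{\ast},\hat{\beta})\big|_{\theta=\theta^{\ast}}\big]'\Omega_{n}\,M_{n}(\theta^{\ast},\theta^{\ast},\hat{\beta}),
\]
and for $\ell=2$ an extra product-rule term in $\partial_{\theta}^{2}M_{n}$ and $\partial_{\theta}M_{n}$ appears. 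Each factor converges uniformly: the $\theta$-derivatives of $M_{n}$ at $\theta=\theta^{\ast}$ converge to their population analogues by Proposition \ref{mc1}, which is uniform in $(\theta^{\ast},\beta)$, so replacing $\beta$ by $\hat{\beta}$ is handled by adding and subtracting $\partial_{\theta}^{\ell}M(\theta^{\ast},\beta^{0})$ and bounding the gap through the uniform-in-$\beta$ statement of Proposition \ref{mc1} plus continuity of $\partial_{\theta}^{\ell}M(\theta^{\ast},\cdot)$ and $\hat{\beta}\pto\beta^{0}$. Combining with $\Omega_{n}\pto\Omega$ and the assumed twice continuous differentiability of $Q^{\text{LM}}$ gives the stated limits uniformly in $\theta^{\ast}$.

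Part (b) is more delicate because $\bar{\beta}^{R}(\theta,\theta^{\ast})$ is defined only \emph{implicitly}. The plan is to apply the implicit function theorem to the $r$-th simulated moment equation $M_{n}^{r}(\theta,\theta^{\ast},\beta):=n^{-1}\sum_{i}m_{i}^{r}(\theta,\theta^{\ast},\beta)=0$. Conditions (i)--(ii) supply the smoothness and square-integrable envelopes needed for a ULLN to give $\partial_{\beta}^{\xi}M_{n}^{r}\pto\partial_{\beta}^{\xi}M$ for $\xi=1,2$, and condition (iii) makes $\partial_{\beta}M$ non-singular, so that with probability approaching one $\partial_{\beta}M_{n}^{r}$ is invertible. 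Differentiating the identity then yields $\partial_{\theta}\hat{\beta}^{r}(\theta,\theta^{\ast})=-[\partial_{\beta}M_{n}^{r}]^{-1}\partial_{\theta}M_{n}^{r}$, with a longer but analogous expression for the second derivative, whose ingredients converge by Proposition \ref{mc1} (for the $\theta$-derivatives) and the ULLN (for the $\beta$-derivatives) to the corresponding derivatives of the binding function $\beta(\cdot)$ obtained from the same implicit function theorem applied to $M(\theta,\beta(\theta))=0$. Averaging over $r$ and applying the chain and product rules to $Q_{n}^{\text{W}}(\theta,\theta^{\ast})=\|\bar{\beta}^{R}(\theta,\theta^{\ast})-\hat{\beta}\|_{\Omega_{n}}^{2}$ then reproduces the template of part (a), with $\bar{\beta}^{R}(\theta^{\ast},\theta^{\ast})-\hat{\beta}\pto\beta(\theta^{\ast})-\beta^{0}$ supplying the level factor.

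The main obstacle is making all of this uniform across the $R$ simulated data sets simultaneously. Since $\bar{\beta}^{R}$ averages $R$ implicitly-defined estimators, I must control $\max_{1\le r\le R}$ of the deviations $\|\partial_{\theta}^{\ell}\hat{\beta}^{r}(\theta^{\ast},\theta^{\ast})-\partial_{\theta}^{\ell}\beta(\theta^{\ast})\|$ and of the events on which $\partial_{\beta}M_{n}^{r}$ fails to be invertible, uniformly in $\theta^{\ast}$. This is exactly where condition (iv), $\log(R)/n\to0$, enters: combining a maximal inequality with an exponential tail bound for the centered averages, available from the bounded Jacobian $\bar{w}$ and the square-integrable envelopes of Assumptions \ref{two}--\ref{three} and condition (ii), shows that the maximum over the $R$ simulations of these $O_{p}(n^{-1/2})$ fluctuations is $o_{p}(1)$ precisely when $\log(R)/n\to0$. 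Reconciling this maximal control with the implicit-function linearization, and verifying that $\partial_{\beta}M_{n}^{r}$ is invertible simultaneously for all $r$ with probability approaching one, is the technically heaviest step; everything else follows the argument already laid out for part (a).
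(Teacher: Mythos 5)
Your architecture is the same as the paper's: part (a) is proved exactly as you describe (differentiate $Q_{n}^{\text{LM}}$, apply Proposition \ref{mc1} uniformly in $(\theta^{\ast},\beta)$, plug in $\hat{\beta}\pto\beta^{0}$ and $\Omega_{n}\pto\Omega$), and part (b) proceeds via implicit differentiation of $M_{n}^{r}(\theta,\theta^{\ast},\hat{\beta}^{r}(\theta,\theta^{\ast}))=0$, giving $\partial_{\theta}\hat{\beta}^{r}=-[\partial_{\beta}M_{n}^{r}]^{-1}\partial_{\theta}M_{n}^{r}$, with a maximal-inequality control over the $R$ simulated data sets. However, the tool you name for that maximal control would fail under the stated assumptions. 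You claim an ``exponential tail bound for the centered averages, available from the bounded Jacobian $\bar{w}$ and the square-integrable envelopes.'' Boundedness of $w_{i}^{r}$ does not bound $m(y_{i}^{r}(\theta,\theta^{\ast}),z_{i},\beta)$: Assumption \ref{two}(c) and condition (ii) only put the envelopes in $L^{2}$ (the moment function typically involves unbounded regressors), and Hoeffding- or Bernstein-type exponential tails are simply not available from second moments. The paper's Lemma A.4 obtains $\max_{1\le r\le R}|\bar{x}_{r}-\mathbb{E}[\bar{x}_{r}]|=o_{p}(1)$ under only $\mathbb{E}[\max_{r}|x_{ir}|^{2}]<\infty$ by Rademacher symmetrization: conditionally on the data the symmetrized averages are sub-Gaussian, the maximum over $R$ costs a $(\log R)^{1/2}$ factor (van der Vaart and Wellner, Lemmas 2.2.2 and 2.2.7), and unconditioning uses Jensen plus the $L^{2}$ envelope, yielding the $(\log R/n)^{1/2}$ rate --- which is exactly where condition (iv) enters. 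Your identification of \emph{where} $\log(R)/n\to 0$ matters is correct; your proposed mechanism for exploiting it is not.

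The second gap is the implicit-function step. The classical IFT you invoke is local: it produces a local implicit function through any given solution point, but it does not deliver a globally defined, \emph{unique}, twice continuously differentiable binding function $\beta(\theta)$ on all of $\Theta$, and global uniqueness is needed both to make sense of $Q^{\text{W}}$ and to identify the uniform-in-$\theta^{\ast}$ limits of $\hat{\beta}^{r}(\theta^{\ast},\theta^{\ast})$ and its derivatives (Assumption \ref{three}(b) gives uniqueness only at $\theta^{0}$). The paper closes this with a \emph{global} implicit function theorem (Lemmas A.2--A.3, following Idczak): compactness of $\mathcal{B}$ yields the Palais--Smale condition for $\beta\mapsto\|M(\theta,\beta)\|^{2}/2$, and the everywhere-nonsingularity of $\partial_{\beta}M$ in condition (iii) then gives, via a mountain-pass argument, a unique global solution with $\partial_{\theta}\beta(\theta)=-[\partial_{\beta}M(\theta,\beta(\theta))]^{-1}\partial_{\theta}M(\theta,\beta(\theta))$. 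Relatedly, the paper establishes the $\ell=0$ (level) uniform consistency of $\hat{\beta}^{r}(\theta^{\ast},\theta^{\ast})$ by an extremum-estimator comparison argument (Lemma A.5) rather than by inverting $\partial_{\beta}M_{n}^{r}$; your invertibility-with-probability-approaching-one plan works for the derivative steps $\ell=1,2$ once uniform (over $r$ and the parameters) convergence of $\partial_{\beta}M_{n}^{r}$ to its nonsingular limit is in hand, but it does not substitute for the level-consistency step that anchors the linearization. With these two repairs your proof would coincide with the paper's.
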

\vspace{0.5cm}

As a result of Theorem \ref{theorem:Q}, the differentiability of $Q_{n}^{\mathrm{LM}}(\theta,\theta^{\ast})$
and
$Q_{n}^{\mathrm{ W}}(\theta,\theta^{\ast})$,
with respect to $\theta$,
also allows us to define the GII-COV estimators
$\hat{\theta}^{\text{LM}}$ and $\hat{\theta}^{\text{W}}$ 
as the solution to the first-order conditions:
\begin{eqnarray*}
  \partial_{\theta} Q^{\text{LM}}_{n}\big (\theta, \hat{\theta}^{\text{LM}} \big)
  \big |_{\theta = \hat{\theta}^{\text{LM}}}
  = o_p(1)
  \hspace{1cm} \mathrm{and} \hspace{1cm} 
  \partial_{\theta}
  Q^{\text{W}}_{n} \big (\theta, \hat{\theta}^{\text{W}} \big)
  \big |_{\theta = \hat{\theta}^{\text{W}}}
  = o_p(1),
\end{eqnarray*}
subject to the satisfaction of the relevant second-order conditions.
From these characterisations, standard derivative-based optimisation routines can be
used to obtain $\hat{\theta}^{\text{LM}}$ and $\hat{\theta}^{\text{W}}$. 

\begin{remark}
  {\normalfont
    For the W-II estimator, Theorem \ref{theorem:Q} requires {several additional assumptions that are not required by the LM-II estimator. Since the auxiliary estimator is only defined as an implicit solution of the auxiliary moments, conditions (i)-(iii) in part \textbf{(b)} of the result, along with Assumptions 1-3, are needed to ensure that the auxiliary moment equations are regular enough to guarantee that: (i) the auxiliary parameter estimator exists; (ii) auxiliary parameter estimator is well-behaved. While not explicitly proven herein, we note that if the auxiliary estimator has a closed-form, these assumptions can be relaxed and the result of Theorem \ref{theorem:Q} \textbf{(b)} will follow under the same assumptions as part \textbf{(a)} of the result. Similarly, due to the implicit nature of the solution for the auxiliary estimator, an additional condition
    on the number of simulations $R$ relative to the sample size $n$ is required.}
    This rate requirement is extremely mild
    and 
    covers any feasible implementation of II estimation.
    For instance, it allows for a fixed number of simulations
    as well as the same order as the sample size.\footnote{{
      This condition is violated if researchers choose
      extremely large simulation sizes; for instance, 
      $R = e^n$.
      In this case, however, even for small sample sizes, say $n=100$,
      researchers would have to generate and store more than $e^{100}$
      simulated paths, each of length $n=100$.  }
    }   
    This condition
    together with 
    a type of maximal inequality over the simulated paths 
    ensures 
    the uniform convergence
    of 
    $\partial_\theta^\ell \hat{\beta}^r(\theta,\theta^{\ast})|_{\theta = \theta^{\ast}}$
    to 
    $\partial_\theta^\ell \beta(\theta^{\ast})$
    over $r \in \{1, \dots, R\}$
    in probability
    for every $\ell \in \{0, 1, 2\}$,
    as in Lemma \ref{lemma:beta} in the appendix,
    and
    thus leads to 
    the uniform convergence of
    $\partial_\theta^\ell Q_{n}^{\mathrm{W}}(\theta,\theta^{\ast})|_{\theta = \theta^{\ast}}$, $\ell\in\{0,1,2\}$.
    When $\log (R)/n \to c \in (0, \infty]$,
    the above result for the W-II estimator cannot be applied
    and 
    its asymptotic properties are beyond the scope of this paper,
    while 
    the result for the LM-II estimator is valid.
  }
\end{remark}

\begin{remark}
  {\normalfont It is important to realise that this COV must be carried out at any value of $\theta^{\ast}$ for which we wish to calculate
    $\partial_\theta Q_{n}^{\text{LM}}(\theta,\theta^{\ast})|_{\theta=\theta^*}$ or $\partial_\theta Q_{n}^{\text{W}}(\theta,\theta^{\ast})|_{\theta=\theta^*}$. However, the COV only impacts the calculation of the derivatives for the LM and Wald criterion function:
    $\partial_{\theta}{m}_{it}^{r}(\theta,\hat{\theta}^{\text{LM}},\beta)|_{\theta=\hat{\theta}^{\text{LM}}}$
    for the LM-II criterion, and   $\partial_{\theta}\bar{\beta}^{R}(\theta,\hat{\theta}^{\text{W}})|_{\theta=\hat{\theta}^{\text{W}}}$
    for the W-II criterion. Therefore, the only portion of this procedure that is ``approximate'' are the derivatives. In this way, this approach is similar to \cite{bruins2015generalized} in that our approach is a generalized II of a sort.  }
\end{remark}

\begin{remark}
	{\normalfont Computing derivatives using numerical finite-differencing has long been the standard approach in econometrics. However, it is well-known that such methods exhibit a trade-off between variance and bias that depends on the choice of the tuning parameter. In general, when central-differencing with common random numbers is employed, an optimal tuning parameter can be chosen as a function of the number of simulated random numbers, say $R$, which yields an optimal convergence rate of  $R^{-2/5}$ (\citealp{glynn1989optimization}). However, if the function is also continuous for almost all values of the random numbers, then the optimal tuning parameter is zero and the optimal convergence rate is $R^{-1/2}$.  }	 
\end{remark}	
  
\begin{remark}
  {\normalfont We note that there are alternative derivative estimation methods that could potentially alleviate the difficulties associated with non-smooth simulated criterion functions. {The most intuitive approach is via the so-called ``likelihood ratio method'' (\citealp{glynn1987likelilood}),\footnote{This notion of the likelihood ratio method is not related, in any sense, to the notion of the likelihood ratio approach to indirect inference, but is related to estimating derivatives of expectations at a fixed point in the parameter space.}} which, through a deliberate choice of the sample space, pushes the parameter of interest into the density function generating the simulated data.  The application of this method in the context of II is heavily dependent on the model and the ability to analytically evaluate this density as a function of the model parameter. Another possible approach is via  Malliavin integration-by-parts (\citealp{fournie1999applications}), which extends the calculus of variations from functions to stochastic processes.
    Also, the mollification method \citep{friedrichs1944identity}
    provides a general approach to approximate non-smooth functions through convolution,
    while the method generally cannot obtain unbiased derivative estimators.\footnote{
      The mollification is an approximation method for non-smooth functions through a convolution with some suitable differential functions, so-called mollifiers. The mollification can be considered as a local approach as in our approach, wheares it depends on smoothing
      parameters, which cause bias in its derivative estimator. We thank one of the anonymous referees for pointing out the relation between our approach and the mollification.}
    The exact implementation of these methods are unclear at this stage    
    for complicated models in econometrics, especially those with non-separable errors. While these alternative methods are potentially useful, a thorough exploration of these methods in this II context is beyond the scope of these paper. 
  }	
\end{remark}


\section{Asymptotic Properties}

This section presents the asymptotic properties of the GII-COV estimator.
We first show consistency of the estimator and
establish asymptotic normality. 
Subsequently, we explain how GII-COV aids the consistent estimation of
the asymptotic variance.

\vspace{0.3cm}
\begin{theorem}\label{cons}
  Suppose that Assumptions \ref{one}-\ref{three} hold.
  Additionally,
  for the Wald estimator,
  assume the condition stated 
  in 
  Theorem \ref{theorem:Q}(b).
  Then,
  as $n\rightarrow\infty$,
  \begin{eqnarray*}
    \hat{\theta}^{\text{\em LM}}_{n} \pto \theta^0    
    \ \ \ \mathrm{and} \ \ \  
    \hat{\theta}^{\text{\em W}}_{n}\pto \theta^0.    
  \end{eqnarray*} 
\end{theorem}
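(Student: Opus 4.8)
The plan is to recognize that, despite the two-argument structure of the generalized criterion, the GII-COV estimators are ordinary extremum estimators, and to verify the three classical ingredients for extremum-estimator consistency (e.g., Newey and McFadden, 1994): compactness of the parameter space, uniform convergence of the sample objective to a continuous limit, and a unique minimizer of that limit at $\theta^{0}$. The key preliminary observation is that along the diagonal $\theta^{\ast}=\theta$ the change of variables in \eqref{new_U} collapses to the identity map, so that $u_{i}^{r}(\theta,\theta)=u_{i}^{r}$, $w_{i}^{r}(\theta,\theta)=1$, and hence $Q_{n}^{l}(\theta,\theta)$ coincides with the \emph{original}, discontinuous, II criterion. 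Thus $\hat{\theta}^{l}=\arg\min_{\theta\in\Theta}Q_{n}^{l}(\theta,\theta)$ in fact minimizes a genuinely discontinuous objective, and the crucial methodological point is that the consistency theorem I intend to invoke does not require continuity of the sample objective, only its uniform convergence to a continuous population limit.

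First I would establish the uniform convergence. This is immediate from Theorem \ref{theorem:Q} with $\ell=0$, which yields $Q_{n}^{l}(\theta,\theta)\pto Q^{l}(\theta)$ uniformly in $\theta\in\Theta$ for $l\in\{\text{LM},\text{W}\}$; for the Wald case this additionally invokes conditions (i)--(iv) of Theorem \ref{theorem:Q}(b), which are maintained here. Continuity of the limit $Q^{l}$ is guaranteed by the maintained twice continuous differentiability of $Q^{l}$, and compactness of $\Theta$ is Assumption \ref{one}(d). It then remains to verify identification, namely that $\theta^{0}$ is the unique minimizer of $Q^{l}$. Since $Q^{l}(\theta)\ge 0$, it suffices to show $Q^{l}(\theta^{0})=0$ together with a global identification condition. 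For the LM criterion, Assumption \ref{three}(a) and the defining moment condition \eqref{aux} give $M(\theta^{0},\beta^{0})=\E[m(y_{i},z_{i},\beta^{0})]=0$, so that $Q^{\text{LM}}(\theta^{0})=\|M(\theta^{0},\beta^{0})\|_{\Omega}^{2}=0$; for the Wald criterion, Assumption \ref{three}(b) gives $\beta(\theta^{0})=\beta^{0}$, so $Q^{\text{W}}(\theta^{0})=\|\beta(\theta^{0})-\beta^{0}\|_{\Omega}^{2}=0$.

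With these ingredients, the conclusion follows from the standard argument: uniform convergence together with the well-separatedness of the unique minimizer of the continuous limit $Q^{l}$ over the compact set $\Theta$ forces $\hat{\theta}^{l}$ into any fixed neighborhood of $\theta^{0}$ with probability approaching one. I expect the main obstacle to be conceptual rather than computational: because the object actually minimized, $Q_{n}^{l}(\theta,\theta)$, is discontinuous in $\theta$, one must avoid any argument that relies on smoothness of the sample objective and instead lean entirely on the uniform law of large numbers delivered by Theorem \ref{theorem:Q}, whose proof in turn rests on the boundedness of the Jacobian weights $w_{i}^{r}(\theta,\theta^{\ast})$ over $\Theta^{2}$. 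A secondary point requiring care is identification itself: global injectivity of the binding function ($\beta(\theta)=\beta^{0}$ only at $\theta=\theta^{0}$, equivalently $M(\theta,\beta^{0})=0$ only at $\theta=\theta^{0}$) is needed to exclude spurious minimizers away from $\theta^{0}$ and must be maintained as part of the standard II identification framework, since it is not implied by Assumptions \ref{one}--\ref{three} alone. For the Wald estimator no additional work beyond Theorem \ref{theorem:Q} is needed, as the rate restriction $\log(R)/n\to 0$ and the auxiliary-moment conditions (i)--(iii) of Theorem \ref{theorem:Q}(b) are precisely what secure its $\ell=0$ uniform convergence.
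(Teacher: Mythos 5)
Your proposal is correct and takes essentially the same route as the paper's proof: both reduce consistency to the uniform convergence $Q_{n}^{l}(\theta,\theta)\pto Q^{l}(\theta)$ supplied by Theorem \ref{theorem:Q} with $\ell=0$, the fact that $Q^{l}(\theta^{0})=0$ with a well-separated minimum on the compact $\Theta$, and the standard argmin inequality, with neither argument requiring continuity of the (genuinely discontinuous) sample objective. Your caveat that global identification in $\theta$ is not literally implied by Assumptions \ref{one}--\ref{three} is apt, since the paper's proof simply asserts well-separation of $\theta^{0}$ from Assumption \ref{three}(b) and Lemma \ref{lemma:exist-beta-der}(b), which control uniqueness in $\beta$ rather than injectivity of $\theta\mapsto\beta(\theta)$; your proposal makes explicit a condition the paper leaves implicit.
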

\vspace{0.3cm}

For asymptotic normality of the GII-COV estimator,
we require an additional regularity condition.

\begin{assumption} \
  \label{four}
  \normalfont
  \begin{itemize}
  \item[(a)]
    For some $\delta > 0$
    and 
    for any $(y, z) \in \mathcal{Y} \times \mathcal{Z}$,
    the function $m(y,z,\beta)$ is continuously differentiable in
    $\beta \in \mathcal{B}$
    with $\|\beta - \beta^{0} \| \le \delta$, and
    $\E\big [
    \sup_{\beta \in \mathcal{N}_\delta(\beta^{0})}
    \|
    \partial_{\beta} m(y_{i},z_{i},\beta)\|
    \big] < \infty$.
  \item[(b)]
    $\partial_{\beta} 
    M(\theta^{0}, \beta^{0})
    $ is non-singular.
  \item[(c)]
    Define 
    $\xi_{i}^r
    :=
    m(y_{i}^r(\theta^0),z_{i},\beta^0)
    -
    m(y_{i},z_{i},\beta^0)
    $
    and
    $\Xi:=
    \E [
    \xi_{i}^r
   \xi_{i}^{r'}
    ]
    $.
    \begin{eqnarray*}
      \frac{1}{\sqrt{n} R}
      \sum_{r=1}^{R}\sum_{i=1}^{n}
      \xi^r_{i}
      \dto
      N
      \big (0, (1+R^{-1}) \Xi \big).     
    \end{eqnarray*}
  \item[(d)] $\partial_{\theta}M(\theta,\beta^0)$ is continuous in $\theta$ and, for some $\delta>0$ and $\theta\in\mathcal{N}_{\delta}(\theta^0)$, has rank $d_{\theta}$.
  \end{itemize}
\end{assumption}

The regularity conditions in Assumption \ref{four} are fairly weak and, in conjunction with Assumptions \ref{one}-\ref{three}, ensure enough regularity on the auxiliary moments to ensure GII-COV estimators are asymptoticly normal. Moreover, unlike the high-level stochastic equicontinuity conditions required to demonstrate asymptotic normality of II estimators based on non-smooth criterion functions, see, e.g., \cite{pakes1989simulation}, \cite{chen2003estimation} and \cite{CFR2017}, the GII-COV approach yields asymptoticly normal estimators under much more primitive conditions. In particular, under Assumptions \ref{one}-\ref{four} we prove in the appendix that the uniform laws of large numbers and stochastic equicontinuity conditions that are usually required to prove asymptotic normality are satisfied. Moreover, we note here that these results are a direct consequence of the regularity that results from simulated moments and binding functions calculated using our GII-COV approach.

\vspace{0.3cm}
\begin{theorem}\label{norm}
  Let $R = \floor{cn^{\delta}}$ for some constant $c>0$ and $\delta \in [0, \infty)$.
  Suppose that Assumptions \ref{one}-\ref{four} are satisfied.
  Also,
  for the Wald estimator,
  assume the condition stated 
  in 
  Theorem \ref{theorem:Q}(b).
  \begin{itemize} 
  \item[(i)] If
    $\delta=0$ or $R=\floor{c}\geq1$, then,  as $n\rightarrow\infty$,
    \begin{eqnarray*}
      \sqrt{n}(\hat{\theta}^{\text{\em LM}}-\theta^0)
      \dto
      N 
      \Big (0,
      (1+\floor{c}^{-1})
      \Sigma^{\text{\em LM}}
      \Big )
      \ \ \mathrm{and} \ \
      \sqrt{n}(\hat{\theta}^{\text{\em W}}-\theta^0)
      \dto
      N 
      \Big (0,
      (1+\floor{c}^{-1})
      \Sigma^{\text{\em W}}
      \Big ),      
    \end{eqnarray*}
    where
    $\Sigma^{\text{\em LM}}:=
    (\Delta'\Omega \Delta)^{-1}
    (\Delta'\Omega \Xi \Omega \Delta)
    (\Delta'\Omega \Delta)^{-1}
    $
    and 
    $\Sigma^{\text{\em W}}:=
    (\Gamma'\Omega \Gamma)^{-1}
    (\Gamma'\Omega \Lambda^{-1'} \Xi \Lambda^{-1} \Omega \Gamma)
     (\Gamma'\Omega \Gamma)^{-1}
    $
    with
    $\Delta:= \partial_{\theta}M(\theta^{0},\beta^0)$,
    $\Lambda:= \partial_{\beta}M(\theta^{0},\beta^0)$
    and
    $\Gamma:= \partial_{\theta}\beta(\theta^{0})$,
    and
  \item[(ii)]
    if $\delta \in (0, \infty)$, then, as $n\rightarrow\infty$,
    \begin{eqnarray*}
      \sqrt{n}(\hat{\theta}^{\text{\em LM}}-\theta^0)
      \dto
      N
      \big (0,
      \Sigma^{\text{\em LM}}
      \big )
      \ \ \ \mathrm{and} \ \ \ 
      \sqrt{n}(\hat{\theta}^{\text{\em W}}-\theta^0)
      \dto
      N
      \big (0,
      \Sigma^{\text{\em W}}
      \big ).      
    \end{eqnarray*}
  \end{itemize}
  For the LM estimator, if the weighting matrix $\Omega_{n}$
  converges to $\Xi^{-1}$ in probability,
  then $\hat{\theta}^{\text{\em{LM}}}$ has the ``optimal'' asymptotic variance,
  which is proportional to 
  $(\Delta' \Xi^{-1} \Delta)^{-1}$. For the Wald estimator, if the weighting matrix $\Omega_{n}$
  converges to $(\Lambda'\Xi^{-1}\Lambda)$ in probability,
  then $\hat{\theta}^{\text{\em{W}}}$ has the ``optimal'' asymptotic variance,
  which is proportional to 
  $(\Delta' \Xi^{-1} \Delta)^{-1}$
\end{theorem}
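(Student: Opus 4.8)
The plan is to treat $\hat\theta^{\text{LM}}$ and $\hat\theta^{\text{W}}$ as standard $Z$-estimators defined through their first-order conditions, linearise those conditions about $\theta^0$, and combine a central limit theorem for the ``score'' with a uniform law of large numbers for the ``Hessian''. The one non-standard feature is that, although Proposition \ref{mc1} and Theorem \ref{theorem:Q} make the \emph{gradient} of the criterion smooth and uniformly consistent, the value of the moment vector on the diagonal is untouched by the change of variables: since $u_i^r(\theta^*,\theta^*)=u_i^r$ and $w_i^r(\theta^*,\theta^*)=1$, we have $M_n(\theta^*,\theta^*,\hat\beta)=M_n(\theta^*,\hat\beta)$, the ordinary (discontinuous) simulated moment. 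Writing $\hat G_n(\theta^*):=\partial_\theta M_n(\theta,\theta^*,\hat\beta)|_{\theta=\theta^*}$, the LM first-order condition reads $\hat G_n(\hat\theta^{\text{LM}})'\Omega_n M_n(\hat\theta^{\text{LM}},\hat\beta)=o_p(1)$, and by Proposition \ref{mc1}, consistency (Theorem \ref{cons}), and continuity we have $\hat G_n(\hat\theta^{\text{LM}})\pto\Delta$.

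First I would establish the score expansion at $\theta^0$. Because $M(\theta^0,\beta^0)=0$ under Assumption \ref{three}, a Taylor expansion of the observed-data moment condition defining $\hat\beta$ gives $\sqrt n(\hat\beta-\beta^0)=-\Lambda^{-1}n^{-1/2}\sum_i m(y_i,z_i,\beta^0)+o_p(1)$, using Assumptions \ref{four}(a)--(b). Expanding $\sqrt n\,M_n(\theta^0,\hat\beta)$ in $\beta$ about $\beta^0$, the derivative term contributes $\Lambda\sqrt n(\hat\beta-\beta^0)$, so the two copies of $n^{-1/2}\sum_i m(y_i,z_i,\beta^0)$ cancel and
\[
\sqrt n\,M_n(\theta^0,\hat\beta)=\frac{1}{\sqrt n R}\sum_{r=1}^R\sum_{i=1}^n\xi_i^r+o_p(1),
\]
with $\xi_i^r$ exactly as in Assumption \ref{four}(c), which then delivers $\sqrt n\,M_n(\theta^0,\hat\beta)\dto N\big(0,(1+R^{-1})\Xi\big)$.

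The crux is to replace $M_n(\hat\theta^{\text{LM}},\hat\beta)$ by an expansion about $\theta^0$ despite the discontinuity in $\theta$. I would split $\sqrt n\,M_n(\hat\theta^{\text{LM}},\hat\beta)$ into (I) the centred empirical-process increment $\sqrt n\{[M_n(\hat\theta,\hat\beta)-M(\hat\theta,\beta^0)]-[M_n(\theta^0,\hat\beta)-M(\theta^0,\beta^0)]\}$, (II) $\sqrt n\,M_n(\theta^0,\hat\beta)$, and (III) $\sqrt n\,M(\hat\theta,\beta^0)$. Term (II) is the score handled above; term (III) is smooth and Taylor-expands to $\Delta\sqrt n(\hat\theta-\theta^0)+o_p(\sqrt n\|\hat\theta-\theta^0\|)$ since $M(\theta^0,\beta^0)=0$; term (I) is $o_p(1)$ by stochastic equicontinuity at $\theta^0$, using $\hat\theta\pto\theta^0$ from Theorem \ref{cons}. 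Substituting into the first-order condition, using $\hat G_n\pto\Delta$, $\Omega_n\pto\Omega$, and the full rank of $\Delta$ (Assumption \ref{four}(d)) to invert $\Delta'\Omega\Delta$, yields first $\sqrt n(\hat\theta^{\text{LM}}-\theta^0)=O_p(1)$ and then $\sqrt n(\hat\theta^{\text{LM}}-\theta^0)=-(\Delta'\Omega\Delta)^{-1}\Delta'\Omega\,\sqrt n\,M_n(\theta^0,\hat\beta)+o_p(1)$, giving $N\big(0,(1+R^{-1})\Sigma^{\text{LM}}\big)$. In case (i), $\delta=0$ with $R=\floor c$ fixed gives the factor $1+\floor c^{-1}$; in case (ii), $\delta\in(0,\infty)$ forces $R\to\infty$ and kills the $R^{-1}$ term, leaving $\Sigma^{\text{LM}}$. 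The optimality claim follows by setting $\Omega=\Xi^{-1}$ in the sandwich.

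The Wald case is parallel, working with $\bar\beta^R(\theta,\theta^*)$ in place of $M_n$. Each $\hat\beta^r(\theta^0,\theta^0)$ solves the $r$-th simulated moment under the identity COV, so the same cancellation gives $\sqrt n(\bar\beta^R(\theta^0)-\hat\beta)=-\Lambda^{-1}(\sqrt n R)^{-1}\sum_r\sum_i\xi_i^r+o_p(1)\dto N\big(0,(1+R^{-1})\Lambda^{-1}\Xi\Lambda^{-1\prime}\big)$; the implicit function theorem gives the binding-function derivative $\Gamma=\partial_\theta\beta(\theta^0)=-\Lambda^{-1}\Delta$, and Lemma \ref{lemma:beta} (which requires the conditions of Theorem \ref{theorem:Q}(b), in particular $\log(R)/n\to0$) supplies uniform convergence of $\partial_\theta\bar\beta^R$ to $\Gamma$ over the $R$ simulated paths. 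Linearising the Wald first-order condition then produces $N\big(0,(1+R^{-1})\Sigma^{\text{W}}\big)$, with the optimal form under $\Omega=\Lambda'\Xi^{-1}\Lambda$. The main obstacle is term (I): establishing stochastic equicontinuity of the discontinuous simulated empirical process directly from the primitive Assumptions \ref{one}--\ref{four}, rather than imposing it at a high level. This is where the model structure is essential—Assumption \ref{g} writes $y_i^r(\theta)$ as a finite sum of indicators $\1[c_i^j(\theta)<u_i^r\le c_i^{j+1}(\theta)]$ with twice-differentiable critical point functions whose derivatives are bounded by integrable envelopes, so the class of moment functions indexed by $\theta$ has controlled bracketing entropy; combined with the $L^2$ envelopes in Assumptions \ref{two}(c) and \ref{four}, a maximal inequality delivers the required equicontinuity.
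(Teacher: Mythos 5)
Your proposal is correct in substance, but it takes a genuinely different route from the paper's own proof. Both arguments share the $Z$-estimator skeleton --- first-order condition, consistency from Theorem \ref{cons}, the CLT in Assumption \ref{four}(c) applied to the score $\sqrt{n}\,\bar{\xi}_{n,R}$ after the $\Lambda$-cancellation between $\sqrt{n}M_{n}(\theta^{0},\beta^{0})$ and $\Lambda\sqrt{n}(\hat{\beta}-\beta^{0})$, and the uniform convergence of the COV derivatives (Proposition \ref{mc1}, Lemma \ref{lemma:beta}) --- but they differ in where the discontinuity is absorbed. The paper Taylor-expands the \emph{smooth} map $\theta\mapsto Q_{n}(\theta,\hat{\theta})$ exactly to second order (precisely what the COV licenses, via Theorem \ref{theorem:Q}), so the discontinuous simulated process is never expanded in the evaluation point; the price is that the score lands at $M_{n}(\theta^{0},\hat{\theta}^{\text{LM}},\hat{\beta})$ with the \emph{estimated COV point} in the second slot, and the paper's Lemma \ref{lemma:don} (bracketing for classes of the form $\mu_{\beta}(z)\1[u\le\phi_{\theta}(z)]$) supplies the stochastic equicontinuity needed to replace $(\hat{\theta},\hat{\beta})$ by $(\theta^{0},\beta^{0})$ there. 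You instead observe that on the diagonal the COV is the identity, collapse the first-order condition to $\hat{G}_{n}(\hat{\theta})'\Omega_{n}M_{n}(\hat{\theta},\hat{\beta})$ involving the ordinary discontinuous moment, and run a classical Pakes--Pollard decomposition --- centred empirical-process increment, smooth population drift $M(\hat{\theta},\beta^{0})=\Delta(\hat{\theta}-\theta^{0})+o_p(\|\hat{\theta}-\theta^{0}\|)$, and score at $\theta^{0}$ --- with equicontinuity now required in the \emph{evaluation} point. The two equicontinuity statements concern different processes, but both classes have the weight-times-indicator structure of Assumption \ref{g} and are controlled by the same bracketing argument, so nothing is lost; indeed your route makes transparent that the COV enters the asymptotics only through the consistent gradient factors $\hat{G}_{n}\pto\Delta$ and $\partial_{\theta}\bar{\beta}^{R}\pto\Gamma$, which is the content of Remark 2, while the paper's route showcases the selling point of GII-COV, namely that no non-smooth expansion in the optimized direction is ever performed.

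Two small repairs are needed. First, you state the first-order condition as $o_p(1)$, but your $\sqrt{n}$-linearisation requires it at rate $o_p(n^{-1/2})$; the paper imposes exactly this ($\partial_{\theta}Q_{n}(\theta,\hat{\theta})|_{\theta=\hat{\theta}}=o_p(n^{-1/2})$), and since the diagonal criterion $\theta\mapsto Q_{n}(\theta,\theta)$ is itself non-smooth, this rate must be assumed (or verified) rather than deduced from exact minimisation. Second, your term (I) is not exactly the centred increment: $\mathbb{E}[M_{n}(\theta,\beta)]=M(\theta,\beta)$, not $M(\theta,\beta^{0})$, so (I) equals the genuine centred empirical-process increment plus the cross term $\sqrt{n}\{[M(\hat{\theta},\hat{\beta})-M(\hat{\theta},\beta^{0})]-[M(\theta^{0},\hat{\beta})-M(\theta^{0},\beta^{0})]\}$, which is $o_p(1)$ by continuity of $\partial_{\beta}M$ (Lemma \ref{lemma:exist-beta-der}(a)), consistency of $\hat{\theta}$, and $\sqrt{n}$-consistency of $\hat{\beta}$ --- this deserves an explicit line, but it does not threaten the argument.
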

\vspace{0.5cm}

\begin{remark}
{\normalfont	The above results demonstrate that the GII-COV approach yields estimators that have the same asymptotic properties, at first-order, as a standard II estimator. That is, the COV does not have an asymptotic impact on the resulting parameter estimates of $\theta^0$. {In addition, we note that if the original criterion is differentiable in $\theta$, this COV approach will lead to estimators that are numerically equivalent to standard II estimators: this follows by noting that if the function $y_i^r(\theta)$ is differentiable at $\theta^*$, with derivative $\partial_\theta y_i^r(\theta^*)$, our approach ensures that $\partial_\theta y_i^r(\theta,\theta^*)|_{\theta=\theta^*}=\partial_\theta y_i^r(\theta^*)$. {Hence, in cases where the original simulated outcomes are continuously differentiable, there is no benefit from using this GII-COV estimation strategy and the two approaches would be asymptotically equivalent not only at first-order but also at higher-order, assuming the required regularity conditions were satisfied.}}}

\end{remark}

\begin{remark}{\normalfont
The GII-K approach of \citet{bruins2015generalized} and our GII-COV approach are first-order asymptotically equivalent, so long as  the bandwidth parameter used in GII-K approach converges to zero faster than $1/\sqrt{n}$, so as to not impart asymptotic bias on the resulting parameter estimates. In contrast, our approach does not require any tuning parameters. This difference between GII-K and GII-COV could have ramifications for higher-order properties of the two estimators. While a higher-order analysis of the two estimators is beyond the scope of this paper, we conjecture that the GII-COV approach is likely to lead to estimators with smaller high-order bias, and thus smaller finite-sample bias, than those obtained from GII-K, {owing to the fact that no kernel smoothing (and no tuning parameter) is needed for GII-COV to construct derivatives of the criterion function in a high-order expansion.} This conjecture is further substantiated by Monte Carlo evidence given in the following section, which demonstrates that GII-COV generally leads to estimators with smaller finite-sample bias than GII-K.}	
\end{remark}

\begin{remark}
	{\normalfont Our GII-COV approach allows for simple consistent estimators of the asymptotic variances in Theorem \ref{norm} using sample analogs. This can be accomplished {using several different numerical methods to estimate the derivatives of the II criterion function. Firstly, since the sample moment function $m_{i}^{r}(\theta,\theta^*,\beta)$ is differentiable in $\theta$, a common approach would be to use standard numerical differentiation, which, for example, could be used to obtain some $\hat{\Delta}$ that estimates the Jacobian of the moments $\Delta$:} for $e_{j}$ denoting a $d_\theta\times1$ vector with $1$ in the $j$-th component, and zero else,  $\Delta_{_{j}}=\partial_{\theta_{j}}\mathbb{E}[m_{i}(\theta,\beta^0)]|_{\theta=\theta^0}$ can be estimated using, for $\delta_{n}>0$ and small, $$\hat{\Delta}_{{j}}(\delta_n)=\frac{1}{2\delta_n}\left[\frac{1}{nR}\sum_{i=1}^{n}\sum_{r=1}^{R}m_{i}^{r}(\hat{\theta}^{\text{LM}}+e_j\delta_n,\hat{\theta}^{\text{LM}},\hat{\beta})-\frac{1}{nR}\sum_{i=1}^{n}\sum_{r=1}^{R}m_{i}^{r}(\hat{\theta}^{\text{LM}}-e_j\delta_n,\hat{\theta}^{\text{LM}},\hat{\beta})\right]. $$ The results of Proposition \ref{mc1} imply that so long as $\hat{\theta}^{\text{LM}}\pto \theta^0$, $\hat{\beta}\pto \beta^0$, and $\delta_n\rightarrow0$, $\hat{\Delta}_{{j}}(\delta_n)\pto\Delta_j$ (see, e.g., \citealp{Hong2015}). Secondly, the derivative $\Delta_{_{j}}$ could also be consistently estimated {using the numerical technique of automatic differentiation, see, e.g., \cite{glasserman2003monte}. Since automatic differentiation is not commonly applied in econometrics, we further discuss this numerical technique in the supplemental appendix. }}
\end{remark}


\section{Illustrative Example: Discrete Choice Models}\label{prob_sec}

To further illustrate the GII-COV approach, we now consider the application of II to simulated data generated from various dynamic discrete choice models.
Our choice of the auxiliary model is the linear probability model, which has similarities with the auxiliary models in \cite{Li2010} and \cite{bruins2015generalized}, and allows for simple closed-form moments that we can use to estimate the underlying structural parameters. 
\subsection{Indirect Inference in Discrete Choice Models}

Individual $i=1,...,n$ chooses from among $(J+1)$-alternatives at each time $t=1,...,T$ by maximizing over the utilities associated with each of the alternatives. We assume the utilities follow the standard additive random utility framework. In particular, the $i$-th individual at time $t$ chooses the $j$-th alternative if 
\begin{equation}
\label{mnp}
\begin{aligned}
  y_{j,t} &= \1\left[y^{*}_{j,t} > \max\{0, y^{*}_{k,t}: k =1,\ldots, J \text{  and  } k \ne j\}\right],\text{ for }j=0,1,...,J,\\
  y^{*}_{j,t} & =\rho_{y} y_{j,t-1}+   w'_{j,t}\alpha + x_{it}'\gamma_j + v_{j,t},\\
  v_{j,t}&=\rho_{e}v_{j,t-1}+\epsilon_{j,t},
\end{aligned}
\end{equation}
where 
$(\epsilon_{1,t},\ldots, \epsilon_{J,t})'
= \Psi^{1/2} (\eta_{1,t},\ldots,\eta_{j,t})'$
with $\Psi^{1/2}$
lower triangular such that $\Psi^{1/2}\Psi^{1/2'} = \Psi$
and
with
$(\eta_{1,t},\ldots,\eta_{J,t})' \sim  N(0, I_J)$ for all $t=1,...,T$,
and 
$(\eta_{1,t},\ldots,\eta_{J,t})'$
is independent of
the observables 
$(w'_{1,t},\ldots, w'_{J,t})'$, i.e., say the alternative-dependent variables, and $x_{it}$, i.e., say the purely individual specific regressors. The structural parameters are $\theta = \left(\alpha', \gamma'_1,\ldots, \gamma'_J, \rho_{y}, \rho_{e},\omega'\right)'$, where $\omega$ are the unique unrestricted elements of $\Psi$.

We specialize the above model to consider three distinct dynamic discrete choice models. For each of the three models individual $i = 1,\ldots,n$ chooses one of two available alternatives ($J=1$ for simplicity) by maximizing the standard additive random utility. In particular, for $i = 1,\ldots, n$ and $t = 1,\ldots,T$, the individuals choice follows $y_{it} = \1[y^{*}_{it} \ge 0]$, where $y^{*}_{it}$ is the net utility from the choice of alternative $j=1$ (over alternative $j=0$). 
The particulars of each model are as follows:

\medskip

\noindent\textbf{Model 1:} $T = 5$ and $y^{*}_{it} =  x_{it}'\gamma + v_{it}$, where $v_{it} = \rho v_{i,t-1} + \epsilon_{it}$ with $v_{i,0}=0$ and $\epsilon_{it}\sim N(0,1)$. The structural parameters are $\theta = (\gamma, \rho)'$.\medskip

\noindent\textbf{Model 2:} A slight modification of Model 1 that includes a lagged dependent variable in the unobserved utility function as follows. $T = 5$ and $y^{*}_{it} = \alpha y_{i,t-1} + x_{it}'\gamma + v_{it} $, where $v_{it} = \rho v_{i,t-1} + \epsilon_{it}$, with $y_{i,0} = 0$, $v_{i,0}=0$ and $\epsilon_{it}\sim N(0,1)$. The structural parameters are $\theta = (\alpha, \gamma, \rho)'$.\medskip

\noindent\textbf{Model 3:} Similar to Model 2 but incorporates the well-known ``initial-conditions'' problem as follows. Model 2 holds with $T = 5$ but for $i=1,\ldots,n$, the econometrician observes the choices $y_{it}$ only for $t = 3,4,5$. The structural parameters remain $\theta = (\alpha, \gamma, \rho)'$.\medskip

For each model we consider application of the GII-COV approach using the LM criterion. For the choice of auxiliary moment vector $m(\cdot)$ in each model, we follow \cite{CFR2017} and take $m(\cdot)$ as follows:
\begin{itemize}
\item For Models 1 and 2 and $i=1,\ldots,n$
\begin{equation*}
  m
  \big (
  y_{i1}^{r}(\theta),..., y_{i5}^{r}(\theta), z_{i1},..., z_{i5};\beta
  \big )
  =
  \left[
    \begin{array}{c}
      z_{i1}
      \big (y_{i1}^{r}(\theta) - z_{i1}' \beta_{1} \big)\\
      \vdots \\
      z_{i5}
      \big (y_{i5}^{r}(\theta) - z_{i5}' \beta_{5} \big)
    \end{array}
  \right],   
\end{equation*}
where
$\beta = (\beta_{1}, \dots, \beta_{5})'$,
$z_{i1}=(1,x_{i1}')'$,
and $z_{it}=(1, x_{it}', x_{i,t-1}', y_{i,t-1})'$
for $t=2,...,5$.
\item For Model 3, we consider exactly the same $m(.)$ function as defined above but only for $t=3,4,5$.

\end{itemize}

The above choice for $m(\cdot)$ leads to equation-by-equation ordinary least squares computations in a seemingly unrelated regression (SUR) model, where there are $J$ response variables ($\ell=\1[y_{it}^{r}(\theta) = j]$ for $j = 1,\ldots,J$) and with the \textit{same set} of regressors $z$ used for all regressions. In particular, $m(\cdot)$ represents {the vector function specifying the} first-order conditions for the SUR model regression coefficients. 

It is clear that for each of the above auxiliary estimating equations, $m(\cdot)$ is not differentiable in $y_{it}^{r}(\theta)$. However, one can easily apply the derivative-based methods proposed herein to obtain quick and simple estimators. 

For any given time point $t$, Model 1 has a single discontinuity since $y_{it}=\1\left[y^{*}_{it}>0\right]$. Therefore, given our choice of moment function $m(\cdot)$, this example can be treated using the exact same approach outlined in Example 1. 
 Models 2 and 3 are similar to Model 1, but for any time point $t>1$, the function $m(\cdot)$ can suffer two different types of discontinuities. The first discontinuity results from $y_{it}=\1 [y^{*}_{it}>0]$, while the second discontinuity is due to the autoregressive nature of $y^{*}_{it}$: for $t>1$
 \begin{flalign*}
 y^{*}_{it} &=\alpha \1 [y^{*}_{i,t-1}>0 ]+x_{it}'\gamma +v_{it}.
 \end{flalign*}
This additional discontinuity means we require two measures changes to institute our method. However, both measure changes have the same form and so can easily be updated. The exact change or variables used in each of the above models is the same as in Example 1. Therefore, the corresponding Jacobian term is also the same. 

Let $\theta^*$ be a point we wish to evaluate the simulated outcomes. Then, we can use the same simulated uniforms $u_{it}^{r}(\theta,\theta^*)$ as in Example 1 within each of the above models to simulate the error terms $\epsilon_{it}$ according to
$
\Phi^{-1}\big (u_{it}^{r}(\theta,\theta^*) \big ).
$
For Model 1, the critical point functions are given by 
 \[
 c_{it}^{1}(\theta)=\Phi (-x_{it}'\gamma -\rho v_{i, t-1}),
 \]
while for Model 2 and 3, the critical point functions are given by 
\[
c_{it}^{1}(\theta)=\Phi\big(-\alpha \1\left[c_{i,t-1}^{1}(\theta^*)<u_{i,t-1}\right]-x_{it}'\gamma -\rho v_{i, t-1}\big),
\]
with $c_{it}^{0}(\theta) = 0 $ and $c_{it}^{2}(\theta) = 1$ in Model 1-3.
The following subsection considers a series of simulation examples within each of the three models considered above.  

\subsection{Simulation Results}
For each of the above models we generate 1,000 replications of simulated data according to the following set of true parameter values and sample size combinations: under each simulation, $x_i$ is generated iid as $\mathcal{N}(1,2)$ and we consider the following values for the model parameters and sample sizes
\begin{itemize}
\item For Model 1: $\theta^0=(\gamma^0,\rho^0)'=(1,.4)'$, $n\in\{200,1000\}$ and $T=5$;
\item For Model 2: $\theta^0=(\gamma^0,\alpha^0,\rho^0)'=(1,.2,.4)'$, $n\in\{200,1000\}$ and $T=5$;
\item For Model 3: $\theta^0=(\gamma^0,\alpha^0,\rho^0)'=(1,.2,.4)'$, $n\in\{200,1000\}$ and $T=5$ with $s=3$ unobserved periods;
\end{itemize} 
Across the different Monte Carlo designs, we consider four estimation procedures:
the GII-COV procedure, the GII-K estimation procedure of \cite{bruins2015generalized}, the  Nelder-Mead simplex-based search algorithm and the ``patternsearch'' genetic algorithm. All procedures are implemented in Matlab. The Nelder-Mead and genetic algorithms are both implemented using the Matlab default settings.\footnote{The precise implementation details for these methods can be found in their corresponding Matlab help files.}   

We consider two separate implementations of GII-K. In the first GII-K implementation, dubbed GII-1, we use a normal kernel to smooth the outcomes and consider a sample size dependent bandwidth $\lambda_n$: for $n=200$ we employ a bandwidth of $\lambda_{n}=.08$ and for $n=1000$ we consider $\lambda_{n}=.04$; these values correspond to a choice of $\lambda_{n}$ satisfying $\sqrt{n}\lambda_{n}=o(1)$, which is required for the GII-K approach to deliver consistent and asymptotically normal estimators.

The second GII-K implementation, dubbed GII-2, again uses the normal kernel to smooth the outcomes, but follows the two-step approach outlined in \cite{bruins2015generalized}. This two-step approach first implements GII-K with a large value of $\lambda_n$ and a small number of simulated outcomes, $R$, to obtain a preliminary estimator of $\theta^0$; in a second step, the GII-K procedure is re-run with a small value of $\lambda_n$ and a large value of $R$, and using the first-step estimator of $\theta^0$ as starting values for the second-step. We follow \cite{bruins2015generalized} and consider $(\lambda_n,R)=(.03,10)$ in the first-stage, and we employ $(\lambda_n,R)=(.003,300)$ in the second-stage. 

Across the 1,000 replications we report the mean bias (MBIAS), mean absolute bias (AB), standard deviation (STD) and the Monte Carlo coverage of a 95\% Wald-confidence interval (CV95) for all estimators. Except for the GII-2 approach, all other estimators use $R = 10$ simulated draws across all simulation designs. 

The GII-COV estimators are computed using the Newton-Raphson algorithm with both the Hessian and the gradient {estimated numerically using automatic differentiation techniques.\footnote{In the supplemental appendix, we give further compare between two versions of GII-COV: one version that uses automatic differentiation to numerically calculate the gradient and Hessian, and a separate version that uses central finite-differencing to estimate these derivatives. The results suggest that using automatic differentiation to numerically estimate the derivatives yields an finite-sample improvement, at least in terms of bias, over using standard finite-differencing derivative estimators.}} {For all GII-COV procedures we take $R=10$ across all simulation designs.} The smoothness of the GII-K criterion function also allows for derivative-based optimisation procedures. Following \cite{bruins2015generalized}, we implement both of the GII-K approaches using a quasi-Newton algorithm that calculates the Hessian and gradient of the smoothed criterion function using finite-differencing methods. As in \cite{bruins2015generalized}, the initial value across each of the simulation runs is set at the true parameter value for all estimation procedures and across each of the simulation design. For each estimator,  the efficient weighting matrix is used.

The results across the different simulation designs and estimation methods are collected in Tables \ref{tab1}-\ref{tab5}. In terms of bias (MBIAS) and standard deviation (STD), and across each Monte Carlo design, the GII-COV estimator gives superior performance relative to GII-K and the derivative-free estimators. The additional bias and variance inherent in GII-K, over and above that observed in GII-COV, reflects the procedures use of smoothed outcomes, which require the choice of kernel and an associated bandwidth parameter. However, we do note that at larger samples sizes, the GII-2 approach gives results that are closer to GII-COV.

It is also important to note that the differences between GII-COV and GII-K do not abate as $n$ increases. As can be seen from Table \ref{tab7}, relative to GII-COV, the biases and standard errors of GII-1 increase as $n$ increases across nearly all Monte Carlo designs. A similar pattern of results is also observed for GII-2 in Table \ref{tab8}. However, at larger sample sizes, GII-2 gives estimates with smaller bias and standard deviation than those obtained using the GII-1 approach.

The additional optimisation step required of GII-2 will lead to a much slower algorithm than GII-COV, GII-1 and the derivative-free methods. Table \ref{tab6} contains the average execution times of the various algorithms across the Monte Carlo designs. In all cases, GII-1 gives the fastest execution times, and is followed closely by GII-COV. In general, GII-COV is faster to implement than either of the derivative-free approaches. The execution time of GII-2 is always much larger than all other methods used in this simulation study, which is a consequence of the fact that GII-2 must generate, and manipulate, $R$=300 additional simulated data sets in the second stage of estimation. Therefore, while GII-2 can give better estimates than GII-1, and thus yield results that are closer to GII-COV in terms of MBIAS and STD, it does so at much greater computational cost. 

From these Monte Carlo results, we can conclude that GII-COV performs well relative to these competitors in terms of precision and computational properties.

\section{Discussion}
There are many interesting examples in II where the resulting criterion function is discontinuous in the parameters of interest. While global kernel smoothing approaches, such as the generalized indirect inference approach of \cite{bruins2015generalized}, have been proposed to alleviate this discontinuity in the criterion function, such methods require a user dependent bandwidth parameter that can negatively impact the resulting parameter estimates.

In this article, we have proposed a novel approach to II that can alleviate discontinuities within simulated criterion functions without the need of smoothing approaches. Applying this simulation approach within II results in criterion functions that yield uniformly consistent derivative estimates of their corresponding limiting counterparts, and  allows II estimators to be calculated using standard derivative-based optimisation routines. Furthermore, the resulting II estimators have standard asymptotic properties and perform well in finite-sample simulation experiments.

For simplicity, in this paper we have focused on the case where the simulated data is generated from independent uniforms, which consequently allows us to apply a conditionally independent change of variables to remove the discontinuities in the criterion function. In cases where dependent uniforms are required to generate simulated outcomes, such as, e.g., uniforms underlying certain multivariate random variables with a copula structure, a change of variables in each of the uniforms generating the endogenous variables will have flow-on effects to other dimensions. In these cases a similar approach to that considered here can be used. However, this extension requires additional technical details and explanation, and is therefore left for further research.


\newpage
\setstretch{0.15}            
\setlength{\bibsep}{6pt}
\bibliographystyle{ecta} 
\bibliography{Chainging_II}

\newpage
\appendix
\section*{Appendix A. Proofs of Main Results}
\setstretch{1.0}            

\setcounter{section}{0} 
\setcounter{equation}{0} 
\setcounter{lemma}{0}\setcounter{page}{1}\setcounter{proposition}{0} %
\renewcommand{\thepage}{A-\arabic{page}} \renewcommand{\theequation}{A.%
\arabic{equation}}\renewcommand{\thelemma}{A.\arabic{lemma}} 
\renewcommand{\theproposition}{A.\arabic{proposition}} 
  
In this appendix,
we prove the asymptotic properties of our proposed estimator
and some related technical results, 
applying similar arguments used in \cite{pakes1989simulation}
\citep[see also][]{chen2003estimation}.
We write 
$a \lesssim b$
if $a$ is smaller than or equal to $b$ up to a universal positive constant.

\vspace{0.2cm}
\begin{lemma}
  \label{lemma:exist}
  Suppose that 
  Assumptions \ref{one}-\ref{g} hold.
  Then,
  \begin{enumerate}

  \item [(a)] 
    the function 
    $M(\theta, \beta)$ is twice differentiable
    with respect to $\theta$ for any $\beta \in \mathcal{B}$ 
    and its derivative is given by,
    for $\ell \in \{1,2\}$,
    \begin{eqnarray}
      \label{eq:moment-dev}
      \partial_{\theta}^{\ell} M(\theta,\beta) 
      =
      \E
      \bigg [
      \sum_{j=0}^{J}
      m
      (
      \alpha_{j}, z_{i}, \beta
      )
      \partial_{\theta}^{\ell} 
      \{
      c^{j+1}_{i}(\theta)-c^{j}_{i}(\theta)
      \}
      \bigg ]; 
    \end{eqnarray}
  \item [(b)] 
    the moment function 
    ${m}_{i}^{r}(\theta,\theta^*,\beta)$
    is twice-continuously differentiable with respect to $\theta \in \Theta$,
    for any $(\theta^*, \beta) \in \Theta \times \mathcal{B}$.
  \end{enumerate}
\end{lemma}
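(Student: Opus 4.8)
The plan is to reduce both statements to the piecewise-constant representation of $g(\cdot)$ in Assumption \ref{g}(a), which isolates all $\theta$-dependence into the critical point functions $\{c_i^j(\theta)\}$. For part (a), I would first note that on the event $\{c_i^j(\theta) < u_i^r \le c_i^{j+1}(\theta)\}$ the simulated outcome equals $\alpha_j$, so that
\[
m\big(y_i^r(\theta), z_i, \beta\big) = \sum_{j=0}^J m(\alpha_j, z_i, \beta)\,\1[c_i^j(\theta) < u_i^r \le c_i^{j+1}(\theta)].
\]
The decisive step is to take expectations. Since $u_i^r = F_\epsilon(\epsilon_i^r)$ is standard uniform and, by Assumption \ref{one}, independent of the observables, while each $c_i^j(\theta)$ is measurable with respect to $x_i$, conditioning on $(x_i, z_i)$ gives $\E[\1[c_i^j(\theta) < u_i^r \le c_i^{j+1}(\theta)] \mid x_i, z_i] = c_i^{j+1}(\theta) - c_i^j(\theta)$. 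The law of iterated expectations then yields
\[
M(\theta, \beta) = \E\Big[\sum_{j=0}^J m(\alpha_j, z_i, \beta)\,\{c_i^{j+1}(\theta) - c_i^j(\theta)\}\Big].
\]
This is the crux: integrating the indicator against the uniform law removes the pathwise discontinuity of $\theta \mapsto y_i^r(\theta)$, leaving an integrand that inherits the twice-continuous differentiability of the critical point functions, since the $\alpha_j$ are constants.

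It then remains to interchange differentiation and expectation. I would supply the standard dominated-differentiation argument: for $\ell \in \{1,2\}$ the $\theta$-derivative of the integrand is bounded in norm by $2\,\bar{m}_{i}\,\nabla^{\ell}\bar{c}_{i}$, using the envelope $\|m(\alpha_j, z_i, \beta)\| \le \bar{m}_{i}$ from Assumption \ref{two}(c) together with $\sup_{\theta}\|\partial_\theta^\ell c_i^j(\theta)\| \le \nabla^{\ell}\bar{c}_{i}$ from Assumption \ref{g}(b). By Cauchy--Schwarz and the square-integrability of $\bar{m}_{i}$ and $\nabla^{\ell}\bar{c}_{i}$, this dominating bound is integrable and free of $\theta$, so the Leibniz rule applies and delivers the stated expression \eqref{eq:moment-dev}.

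For part (b), the key observation is that the change of variables \eqref{new_U} maps the interval $(c_i^j(\theta^*), c_i^{j+1}(\theta^*)]$ affinely and bijectively onto $(c_i^j(\theta), c_i^{j+1}(\theta)]$, carrying the endpoints $c_i^j(\theta^*), c_i^{j+1}(\theta^*)$ to $c_i^j(\theta), c_i^{j+1}(\theta)$. Hence $c_i^j(\theta) < u_i^r(\theta, \theta^*) \le c_i^{j+1}(\theta)$ if and only if $c_i^j(\theta^*) < u_i^r \le c_i^{j+1}(\theta^*)$, so that $y_i^r(\theta, \theta^*) = \sum_{j} \alpha_j \1[c_i^j(\theta^*) < u_i^r \le c_i^{j+1}(\theta^*)]$ does not depend on $\theta$. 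Consequently $m(y_i^r(\theta, \theta^*), z_i, \beta)$ is constant in $\theta$, and all $\theta$-dependence of $m_i^r(\theta, \theta^*, \beta)$ resides in the weight $w_i^r(\theta, \theta^*) = \{c_i^{j+1}(\theta) - c_i^j(\theta)\}/\{c_i^{j+1}(\theta^*) - c_i^j(\theta^*)\}$. This weight is a ratio of a twice continuously differentiable numerator (Assumption \ref{g}(a)) to a denominator that is constant in $\theta$ and strictly positive almost surely, hence is twice continuously differentiable in $\theta$; the product $m_i^r(\theta, \theta^*, \beta)$ inherits this property.

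I expect the main obstacle to lie in the smoothing step of part (a): one must argue carefully that, although $\theta \mapsto y_i^r(\theta)$ is discontinuous for almost every realisation of $u_i^r$, taking the expectation produces a twice-differentiable function, and then verify the integrability conditions permitting differentiation under the integral. Part (b) is, by comparison, essentially algebraic once the event-equivalence induced by the affine change of variables is recognised.
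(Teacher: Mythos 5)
Your proposal is correct and takes essentially the same approach as the paper's proof: in part (a) you integrate the indicator against the uniform law to get $M(\theta,\beta)=\E\big[\sum_{j=0}^{J} m(\alpha_j,z_i,\beta)\{c_i^{j+1}(\theta)-c_i^{j}(\theta)\}\big]$ and then differentiate under the expectation using the envelope $\bar{m}_i\,\nabla^{\ell}\bar{c}_i$ with Cauchy--Schwarz and dominated convergence, and in part (b) you use exactly the paper's event equivalence, $c_i^{j}(\theta^*)<u_i^r\le c_i^{j+1}(\theta^*)$ if and only if $c_i^{j}(\theta)<u_i^r(\theta,\theta^*)\le c_i^{j+1}(\theta)$, to isolate all $\theta$-dependence in the smooth Jacobian weight. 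The only cosmetic difference is that you spell out the conditioning step via iterated expectations where the paper simply invokes uniformity of $u_i^r$ and its independence from the rest of the model.
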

\begin{proof}

  (a)
  Under Assumption \ref{g}(a), we can write
  $y_{i}^{r}(\theta) = \sum_{j=0}^{J}\alpha_{j}
  \1[c_{i}^{j}(\theta) < u_{i}^{r} \le c_{i}^{j+1}(\theta) ]
  $.
  We can show that
  $
  M(\theta,\beta)
  =
  \E
  [
    \sum_{j=0}^{J}
    m
    (
    \alpha_{j}, z_{i}, \beta
    )
    \{
    c^{j+1}_{i}(\theta)-c^{j}_{i}(\theta)
    \}
  ] 
  $,
  since $u_{i}^{r}$ is uniformly distributed over $(0,1)$.
  From Assumptions 2(c) and \ref{g}(b),  the derivative of  the map 
  $\theta \mapsto 
  m
  (
  \alpha_{j}, z_{i}, \beta
  )
  \{
  c^{j+1}_{i}(\theta)-c^{j}_{i}(\theta)
  \}
  $
  satisfies 
  \begin{eqnarray*}
    \|
    m
    (
    \alpha_{j}, z_{i}, \beta
    )
    \partial_{\theta}
    \{
    c^{j+1}_{i}(\theta)-c^{j}_{i}(\theta)
    \}
    \|
    \lesssim
    \bar{m}_{i}\cdot
    \nabla^{1} \bar{c}_{i}. 
  \end{eqnarray*}
  Under Assumptions 2(c) and 3(b),
  an application of Cauchy-Schwarz inequality yields 
  $ 
  E[\bar{m}_{i}\cdot  \nabla^{1} \bar{c}_{i}  ] < \infty
  $.  
  Thus, the dominated convergence theorem implies the desired conclusion
  \citep[see Theorem 2.27 of][for example]{folland2013real}.
  A similar argument shows the second derivative part 
  and thus we omit the details.

  (b)
  Let $(\theta^*,\beta) \in \Theta \times \mathcal{B}$ be fixed.
  It follows from Assumption 3(a) and the definition of the Jacobian $w_{i}^{r}(\cdot)$
  that 
  if 
  $c_{i}^{j}(\theta^*)<u^r_{i}\leq c_{i}^{j+1}(\theta^*)$
  for $j = 0, \dots, J$, then
  \begin{eqnarray*}
    y_{i}^{r}(\theta,\theta^*)
    =
    \alpha_{j}
    \hspace{0.7cm}
    \mathrm{and}
    \hspace{0.7cm}
    w_{i}^{r}(\theta,\theta^*)
    =
    \frac{
    c^{j+1}_{i}(\theta)-c^{j}_{i}(\theta)
    }{
    c^{j+1}_{i}(\theta^*)-c^{j}_{i}(\theta^*)
    },
  \end{eqnarray*}
  for all $i=1, \dots, n$.
  Also,
  it can be easily verified that
  $
  c_{i}^{j}(\theta^*)<u_{i}^{r} \leq c_{i}^{j+1}(\theta^*) 
  $
  if and only if  
  $
  c_{i}^{j}(\theta)<u_{i}^{r}(\theta,\theta^*)\leq c_{i}^{j+1}(\theta)
  $.
  Thus,
  the approximate moment function is written as 
  \begin{eqnarray}
    \label{eq:moment1}
    m^{r}_{i}(\theta,\theta^*,\beta)
    =
    \sum_{j=0}^{J}
    m
    (
    \alpha_{j}, z_{i}, \beta
    )
    \frac{
    c^{j+1}_{i}(\theta)-c^{j}_{i}(\theta)
    }{
    c^{j+1}_{i}(\theta^*)-c^{j}_{i}(\theta^*)  
    }
    \1
    \big [c_{i}^{j}(\theta^*)<u^{r}_{i}\leq c_{i}^{j+1}(\theta^*)  \big].
  \end{eqnarray}
  In the above moment function,  
  the parameter $\theta$ now
  appears as arguments of differentiable functions
  and no longer determines the value of discontinuous functions.
  More precisely,
  the right-hand side term of the equation above consists of
  twice-continuously differentiable functions with respect to $\theta$
  and thus
  the desired conclusion follows.
\end{proof}
\vspace{0.2cm}

To prove Proposition \ref{mc1}, 
we apply the idea used by \cite{jennrich1969}
to prove 
the uniform law of large numbers.

\vspace{0.2cm}
\begin{proof}[\textbf{Proof of Proposition \ref{mc1}}]
  First, we consider the unbiasedness of the first derivative of the moment function 
  $M_{n}(\theta, \theta^{\ast}, \beta)$.
  Let $\theta^{\ast} \in \Theta$ be fixed. 
  It follows from (\ref{eq:moment1}) that
  \begin{eqnarray*}
    \partial_{\theta}
    m^r_{i}(\theta,\theta^*,\beta)
    =
    \sum_{j=0}^{J}
    m
    (
    \alpha_{j}, z_{i}, \beta
    )
    \frac{
    \partial_{\theta}
    \{
    c^{j+1}_{i}(\theta)
    -
    c^{j}_{i}(\theta)
    \}
    }{
    c^{j+1}_{i}(\theta^*)-c^{j}_{i}(\theta^*)
    }
    \1
    \big [c_{i}^{j}(\theta^*)<u^r_{i}\leq c_{i}^{j+1}(\theta^*)  \big].
  \end{eqnarray*}
  Also, we have, for every $j=0, \dots, J$,
  \begin{eqnarray}
    \label{eq:ee1}
    \frac{
    1
    }{
    c_{i}^{j+1}(\theta^*)
    -
    c_{i}^{j}(\theta^*)
    }
    \int_{0}^{1}
    \1
    \big [c_{i}^{j}(\theta^*)<u \leq c_{i}^{j+1}(\theta^*)  \big]
    du 
    = 1.
  \end{eqnarray}
  Because uniform random variables ${u_{i}^{r}}$
  are independent from the rest of the model, and under the iid data assumptions in Assumptions \ref{one}(a) and \ref{two}(a), 
  we have 
  \begin{eqnarray}
    \label{eq:AA}
    \E
    \big [
    \partial_{\theta}
    M_{n}(\theta,\theta^*,\beta)
    |_{\theta = \theta^{\ast}}
    \big ]
    =
    \E
    \bigg [
    \sum_{j=0}^{J}
    m
    (
    \alpha_{j}, z_{i}, \beta
    )
    \partial_{\theta}
    \{
    c^{j+1}_{i}(\theta^{\ast})-c^{j}_{i}(\theta^{\ast})
    \}
    \bigg ].
  \end{eqnarray}
  It follows from Lemma \ref{lemma:exist}(a) that 
  $
  \partial_{\theta}
  M_{n}(\theta,\theta^*,\beta)
  |_{\theta = \theta^{\ast}}
  $
  is an unbiased estimator of
  $\partial_{\theta}
  M(\theta^*,\beta)
  $.

  We now consider the uniform consistency of
  the first derivative.
  Let $\delta>0$ an arbitrarily small scalar.
  Given the compact parameter spaces $\Theta$ and $\mathcal{B}$,
  it suffices to show that
  the desired conclusion holds
  for every fixed neighborhood 
  $\mathcal{N}_{\delta}:=\mathcal{N}_{1,\delta} {\times} \mathcal{N}_{1,\delta}{\times} \mathcal{N}_{2,\delta}$,
  where 
  $\mathcal{N}_{1, \delta}\subset \Theta$
  and 
  $
  \mathcal{N}_{2, \delta}  
  \subset \mathcal{B}$
  satisfy
  $
  \sup_{\theta_{1}, \theta_{2} \in \mathcal{N}_{1, \delta}}
  \|\theta_{2} - \theta_{1}\| \le \delta$,
  and 
  $
  \sup_{\beta_{1}, \beta_{2}\in \mathcal{N}_{2,\delta}}
  \|\beta_{2} - \beta_{1}\| \le \delta$.
  Define 
  \begin{eqnarray*}
    \pi_{n,\delta}^{-}
    :=
    \frac{1}{nR}
    \sum_{i=1}^{n}
    \sum_{r=1}^{R}
    \inf_{(\theta, \theta^{\ast}, \beta) \in \mathcal{N}_{\delta}}
    \partial_{\theta}
    m_{i}^{r}(\theta, \theta^{\ast}, \beta)
    \ \ \mathrm{and}  \ \ 
    \pi_{n,\delta}^{+}
    :=
    \frac{1}{nR}
    \sum_{i=1}^{n}
    \sum_{r=1}^{R}
    \sup_{(\theta, \theta^{\ast}, \beta) \in \mathcal{N}_{\delta}}
    \partial_{\theta}
    m_{i}^{r}(\theta, \theta^{\ast}, \beta).
  \end{eqnarray*}
  It can be easily shown that,
  from the triangle inequality,
  \begin{eqnarray*}
    \sup_{(\theta, \theta^{\ast}, \beta) \in \mathcal{N}_{\delta}}    
    \big \|
    \partial_{\theta} 
    M_{n}(\theta, \theta^{\ast}, \beta)
    -
    \partial_{\theta} 
    M(\theta, \theta^{\ast}, \beta)
    \big \|
    &\le&
          \big \|
          \pi_{n,\delta}^{+} - \E[    \pi_{n,\delta}^{+}]
          \big \|
          +
          \big \|
          \pi_{n,\delta}^{-} - \E[    \pi_{n,\delta}^{-}]
          \big \|\\
    &&  
       +
       \big \|
       \E[\pi_{n,\delta}^{+}] - \E[    \pi_{n,\delta}^{-}]
       \big \|,
  \end{eqnarray*}
  where
  $
  M(\theta, \theta^{\ast}, \beta)
  :=
  \E[
  m_{i}^r(\theta, \theta^{\ast},\beta)
  ]
  $.
  It follows from the weak law of large numbers that
  $  \big \|
  \pi_{n,\delta}^{\pm} - \E[    \pi_{n,\delta}^{\pm}]
  \big \|
  = o_p(1)
  $
  as $n \to \infty$.
  Thus, the remaining task is to show that
  $
  \big \|
  \E[\pi_{n,\delta}^{+}] - \E[    \pi_{n,\delta}^{-}]
  \big \| \to 0
  $
  as $\delta \to 0$.
  To this end, 
  using
  (\ref{eq:moment1})
  together with 
  the fact that
  $u_{i}^{r}$ is uniformly distributed
  and independent from the original data,  
  we have
  that
  $
  \big \|
  \E[\pi_{n, \delta}^{+}]
  -
  \E[\pi_{n, \delta}^{-}]
  \big \|
  \le
  \big \|
  \E
  \big [    
  \sum_{j=0}^{J}
  \Delta_{t, \delta}^{j}
  \big ]
  \big \|,
  $
  where 
  \begin{eqnarray*}
    \Delta_{t, \delta}^{j}
    &:=&
         \sup_{(\theta, \theta^{\ast}, \beta) \in \mathcal{N}_{\delta}}    
         m
         (
         \alpha_{j}, z_{i}, \beta
         )
         \partial_{\theta}
         w_{i}^{r}(\theta, \theta^{\ast})
         \Big (
         \sup_{\theta^{\ast}\in \mathcal{N}_{1, \delta}}    
         c_{i}^{j+1}(\theta^*)  
         -
         \inf_{\theta^{\ast}\in \mathcal{N}_{1, \delta}}    
         c_{i}^{j}(\theta^*)
         \Big ) \\
    &&-
       \inf_{(\theta, \theta^{\ast}, \beta) \in \mathcal{N}_{\delta}}    
       m
       (
       \alpha_{j}, z_{i}, \beta
       )
       \partial_{\theta}
       w_{i}^{r}(\theta, \theta^{\ast})
       \Big (
       \inf_{\theta^{\ast}\in \mathcal{N}_{1, \delta}}    
       c_{i}^{j+1}(\theta^*)  
       -
       \sup_{\theta^{\ast}\in \mathcal{N}_{1, \delta}}    
       c_{i}^{j}(\theta^*)
       \Big ) .
  \end{eqnarray*}
  The above equation consists of functions that are continuous in parameters
  and all functions are evaluated over $\mathcal{N}_{\delta}$.
  Thus,
  $
  \big  \|
  \sum_{j=0}^{J}
  \Delta_{t,\delta}^{j}
  \big \| \to 0$ as $\delta \to 0$.
  Also,
  we have 
  \begin{eqnarray*}
    \| \Delta_{t, \delta}^{j}\|
    \le
    2
    \bar{m}_{}(\alpha_{j}, z_{i})\nabla \bar{w}_{i}
    \cdot
    \Big (
    \sup_{\theta^{\ast}\in \mathcal{N}_{1, \delta}}    
    c_{i}^{j+1}(\theta^*)  
    -
    \inf_{\theta^{\ast}\in \mathcal{N}_{1, \delta}}    
    c_{i}^{j}(\theta^*)
    \Big ).
  \end{eqnarray*}
  The critical point functions take values in the unit interval
  and
  also
  by
  the Cauchy-Schwarz inequality,
  $
  \E[
  \bar{m}_{i}\nabla \bar{w}_{i}
  ]
  \le
  (
  \E |\bar{m}_{i}|^{2}
  )^{1/2}
  (
  \E|\nabla \bar{w}_{i}|^2
  )^{1/2}
  $,
  which is finite 
  by Assumption 2(c) and \ref{g}(b).
  Therefore, by the dominated convergence theorem,
  $\| \E[\Delta_{ir}^{(1)}(\delta)] \| \to 0$
  as $\delta \to 0$.
  Hence, the estimator for the first derivative
  is consistent uniformly in $(\theta^{\ast}, \beta) \in \Theta \times \mathcal{B}$.

  A similar argument shows the
  uniform consistency of the second derivative
  and thus we omit the details.
\end{proof}
\vspace{0.2cm}

Before proving asymptotic properties of the estimators $\hat{\theta}$ and $\hat{\beta}(\theta)$,
we present the existence of a global implicit solution $\beta(\theta)$
to $M(\theta, \beta) = 0$ and its properties.
We follow the approach in \cite{Idczak2016ANS}, who uses
the mountain path theorem \citep{ambrosetti1973JFA}
to establish a global implicit function theorem. 
The following theorem is a finite dimensional counterpart of the key result
(Theorem 3.3 and Corollary 3.3)
in \cite{Idczak2016ANS}.

\vspace{0.2cm}
\begin{lemma} 
  \label{lemma:MPT} 
  Let
  $\mathcal{V} \subset \R^{d_{v}}$
  and
  $\mathcal{W} \subset \R^{d_{w}}$
  with finite $d_{v}, d_{w} \in \mathbb{N}$
  and
  let
  $F:\mathcal{V} {\times} \mathcal{W} \to \R^{d_{w}}$ be a continuously differentiable function. 
  For all $v \in \mathcal{V}$,
  assume that
  the map $w \mapsto F(v, w)$ 
  satisfies
  \begin{itemize}
  \item [(a)]
    the Palais-Smale condition:
    any sequence $\{w_{k}\}_{k=1}^{\infty} {\subset} \mathcal{W}$
    with $ \{\|F(v, w_{k}) \|^{2}/2\}_{k=1}^{\infty}$ being bounded
    and $\partial_{v}F(v, w_{k}) \to 0$
    as $k \to \infty$
    admits a convergence subsequence,
  \item [(b)]
    a square matrix 
    $\partial_{w}F(v, w)$ is
    non-singular for any $w \in \mathcal{W}$.
  \end{itemize}
  Then,
  for any $v \in \mathcal{V}$,
  there exists a unique $w_{v} \in \mathcal{W}$
  such that $F(v, w_{v}) = 0$. 
\end{lemma}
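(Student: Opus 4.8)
The plan is to fix $v \in \mathcal{V}$ and reduce the problem to a variational one for the nonnegative $C^1$ functional $\phi_v(w) := \tfrac{1}{2}\|F(v,w)\|^2$ on $\mathcal{W}$, following the strategy of \cite{Idczak2016ANS}. The first observation is that, since $\partial_w \phi_v(w) = \partial_w F(v,w)' F(v,w)$ and $\partial_w F(v,w)$ is nonsingular by hypothesis (b), a point $w$ is a critical point of $\phi_v$ if and only if $F(v,w)=0$. Hence both existence and uniqueness of a zero of $F(v,\cdot)$ reduce to statements about the critical set of $\phi_v$: I must show $\phi_v$ has exactly one critical point, and that it sits at level zero.

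For existence, I would set $c := \inf_{w\in\mathcal{W}}\phi_v(w)\ge 0$ and take a minimizing sequence. Applying Ekeland's variational principle produces a minimizing sequence $\{w_k\}$ along which $\phi_v(w_k)\to c$ and $\partial_w\phi_v(w_k)\to 0$. Since $\{\phi_v(w_k)\}$ is bounded and the gradient vanishes in the limit, the Palais--Smale condition (a) delivers a subsequence converging to some $w_v\in\mathcal{W}$; by continuity $w_v$ is a critical point with $\phi_v(w_v)=c$. The characterisation above then forces $F(v,w_v)=0$ and $c=0$, establishing existence.

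For uniqueness I would argue by contradiction using the mountain pass theorem of \citep{ambrosetti1973JFA}. Suppose $w_1\neq w_2$ both solve $F(v,\cdot)=0$; each is then a global minimiser of $\phi_v$ at level zero, and nonsingularity of $\partial_w F$ makes each zero isolated and a strict local minimiser (near a zero, $\phi_v$ behaves like a nondegenerate positive quadratic). The mountain pass geometry holds because any continuous path in $\mathcal{W}$ joining $w_1$ to $w_2$ must cross a small sphere about $w_1$ on which $\phi_v$ is bounded below by a strictly positive constant, so the min-max level $c^* := \inf_{\gamma}\max_t \phi_v(\gamma(t))$ over such paths is strictly positive. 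Invoking the Palais--Smale condition once more, the theorem yields a critical point at level $c^*>0$. But every critical point of $\phi_v$ is a zero of $F(v,\cdot)$ and hence lies at level zero, a contradiction. Therefore the solution is unique.

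The main obstacle I anticipate is the uniqueness step: one must verify carefully that the two zeros are genuinely strict (isolated) local minima and that the separating min-max level is strictly positive, while ensuring the competing paths remain inside $\mathcal{W}$ so that the two-minima form of the mountain pass theorem applies. The existence half, by contrast, is a routine Ekeland-plus-Palais--Smale argument. Since the statement only asserts pointwise existence and uniqueness of $w_v$ for each fixed $v$ (not smoothness in $v$), no implicit-function/continuity argument in $v$ is required, and the result is exactly the finite-dimensional specialisation of Theorem 3.3 and Corollary 3.3 of \cite{Idczak2016ANS}.
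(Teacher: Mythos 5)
Your proposal is correct and takes essentially the same route as the paper: the paper's own proof of this lemma is nothing more than a citation to Theorem 3.3 of \cite{Idczak2016ANS}, whose argument is exactly what you reconstruct --- critical points of $\phi_v(w)=\tfrac{1}{2}\|F(v,w)\|^2$ coincide with zeros of $F(v,\cdot)$ by nonsingularity of $\partial_w F(v,w)$, existence follows from an Ekeland-plus-Palais--Smale minimization, and uniqueness follows by contradiction from the mountain pass theorem of \cite{ambrosetti1973JFA} applied to two putative strict local minima at level zero. The obstacles you flag (isolatedness of zeros, strict positivity of the min-max level, paths staying in $\mathcal{W}$) are handled, or glossed over, in the cited source exactly as you anticipate, so no further argument is required here.
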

\begin{proof}
  See the proof of Theorem 3.3 in \cite{Idczak2016ANS}.  
\end{proof}
\vspace{0.2cm}

The following lemma uses Lemma \ref{lemma:MPT} to present a global implicit function theorem,
which shows that the implicit function exists uniquely and is globally twice continuously differentiable.

\vspace{0.2cm}
\begin{lemma}
  \label{lemma:exist-beta-der}
  Suppose 
  that
  Assumptions 1-4 
  and 
  the conditions
  assumed in Theorem 1(b) hold.
  \begin{itemize}
  \item [(a)] 
    $M(\theta, \beta)$ is twice continuously differentiable with respect to
    $(\theta, \beta) \in \Theta \times \mathcal{B}$.
  \item [(b)]
    For any $\theta \in \Theta$,
    there exists a unique $\beta(\theta) \in \mathcal{B}$
    such that 
    $M\big(\theta, \beta(\theta) \big) =0$
    and 
    $\beta(\theta)$ is a twice continuously differentiable function with its first derivative, given by 
    \begin{eqnarray}
      \label{eq:beta-dev}
      \partial_{\theta}
      \beta(\theta)
      =
      - 
      \big [
      \partial_{\beta}M \big (\theta, \beta(\theta) \big)
      \big ]^{-1}
      \partial_{\theta} M \big (\theta, \beta (\theta) \big).
  \end{eqnarray}
  \end{itemize}
\end{lemma}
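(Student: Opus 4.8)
The plan is to prove part \textbf{(a)} by differentiating under the expectation, and then to obtain part \textbf{(b)} by a direct application of the global implicit function theorem in Lemma \ref{lemma:MPT}.

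For part \textbf{(a)}, I would start from the closed-form representation of $M(\theta,\beta)$ established in Lemma \ref{lemma:exist}(a), namely $M(\theta,\beta)=\E[\sum_{j=0}^{J}m(\alpha_j,z_i,\beta)\{c_i^{j+1}(\theta)-c_i^j(\theta)\}]$. Differentiability in $\theta$ to second order is already contained in Lemma \ref{lemma:exist}(a); it remains to handle the $\beta$-derivatives and the mixed derivatives, and to verify joint continuity. Under condition (i) of Theorem \ref{theorem:Q}(b), the map $\beta\mapsto m(\alpha_j,z_i,\beta)$ is twice continuously differentiable, and condition (ii) supplies square-integrable envelopes $\nabla^{\xi}\bar m_i$, $\xi=1,2$, for its first two $\beta$-derivatives. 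Combining these envelopes with the envelopes $\nabla^{\ell}\bar c_i$ for the critical-point functions from Assumption \ref{g}(b), and using that $|c_i^{j+1}(\theta)-c_i^j(\theta)|\le 1$, a Cauchy--Schwarz argument bounds every first- and second-order partial of the integrand by an integrable random variable uniformly over $\Theta\times\mathcal{B}$. The dominated convergence theorem then justifies differentiating under the expectation in every coordinate of $(\theta,\beta)$, and continuity of the resulting derivatives follows from continuity of the integrands together with the same domination.

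For part \textbf{(b)}, I would invoke Lemma \ref{lemma:MPT} with $\mathcal{V}=\Theta$, $\mathcal{W}=\mathcal{B}$ and $F=M$. Part \textbf{(a)} supplies the required continuous differentiability of $F$. The non-singularity hypothesis (b) of Lemma \ref{lemma:MPT} is exactly condition (iii) of Theorem \ref{theorem:Q}(b), since $\partial_\beta M(\theta,\beta)=\E[\partial_\beta m(y_i^r(\theta),z_i,\beta)]$. The key simplification is that the Palais--Smale hypothesis (a) is automatic here: because $\mathcal{B}$ is compact by Assumption \ref{one}(d), every sequence $\{\beta_k\}\subset\mathcal{B}$ already admits a convergent subsequence, so the condition holds vacuously. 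Lemma \ref{lemma:MPT} therefore delivers, for each $\theta\in\Theta$, a unique $\beta(\theta)\in\mathcal{B}$ solving $M(\theta,\beta(\theta))=0$.

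It then remains to establish the smoothness of $\theta\mapsto\beta(\theta)$ and the derivative formula \eqref{eq:beta-dev}. Here I would apply the classical local implicit function theorem at an arbitrary $\theta_0$: since $M$ is twice continuously differentiable and $\partial_\beta M(\theta_0,\beta(\theta_0))$ is non-singular, there is a twice continuously differentiable local solution, which by the global uniqueness just established must coincide with $\beta(\cdot)$ on a neighborhood of $\theta_0$. As $\theta_0$ is arbitrary, $\beta(\cdot)$ is twice continuously differentiable on all of $\Theta$. Differentiating the identity $M(\theta,\beta(\theta))=0$ in $\theta$ and solving the resulting linear system gives \eqref{eq:beta-dev}. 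The main obstacle is the analytic bookkeeping in part \textbf{(a)}---verifying that the $\beta$- and mixed derivatives of the integrand admit integrable envelopes so that differentiation under the expectation is valid; once that is in place, part \textbf{(b)} is essentially a matter of correctly matching the hypotheses of Lemma \ref{lemma:MPT}, with the otherwise delicate Palais--Smale condition rendered trivial by compactness of $\mathcal{B}$.
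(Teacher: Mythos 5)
Your proposal is correct and follows essentially the same route as the paper's proof: part \textbf{(a)} via the closed-form representation from Lemma \ref{lemma:exist}(a), envelope bounds from condition (ii) of Theorem \ref{theorem:Q}(b) and Assumption \ref{g}(b), and dominated convergence; part \textbf{(b)} via the global implicit function theorem of Lemma \ref{lemma:MPT} (with non-singularity from condition (iii)) combined with the classical local implicit function theorem to obtain twice continuous differentiability and the formula \eqref{eq:beta-dev}. The only differences are cosmetic—you apply Lemma \ref{lemma:MPT} before the local IFT patching argument rather than after, and you make explicit the observation (implicit in the paper) that compactness of $\mathcal{B}$ renders the Palais--Smale condition vacuous.
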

\begin{proof}
  (a) We can write 
  $
  M(\theta,\beta)
  =
  \E
  [
  \sum_{j=0}^{J}
  m
  (
  \alpha_{j}, z_{i}, \beta
  )
  \{
  c^{j+1}_{i}(\theta)-c^{j}_{i}(\theta)
  \}
  ] 
  $
  under Assumption 3(a). 
  In Lemma \ref{lemma:exist}, 
  we have shown that
  $M(\theta,\beta)$ is twice continuously differentiable with respect to $\theta \in \Theta$.
  Using similar arguments in Lemma \ref{lemma:exist}, we can prove that
  $M(\theta,\beta)$ is twice continuously differentiable with respect to $\beta \in \mathcal{B}$
  and
  $\partial_{\theta}M(\theta,\beta)$ is
  continuously differentiable with respect to $\beta \in \mathcal{B}$,   
  under the condition (ii) of Theorem 1(b) and Assumption \ref{g}(b).
  Thus, the desired conclusion follows.
   
  (b)
  First, the classical implicit function theorem implies that the desired result
  holds locally.  
  That is, in the neighborhood of $(\bar{\theta}, \bar{\beta}) \in \Theta \times \mathcal{B}$
  with $M(\bar{\theta}, \bar{\beta)} =0$,
  there is a continuously differentiable function $\theta \mapsto \beta(\theta)$ 
  such that 
  $M\big(\theta, \beta(\theta) \big) =0$ and its derivative takes the form 
  in (\ref{eq:beta-dev}).
  The result in (a) of this lemma implies that
  the right-hand side of (\ref{eq:beta-dev}) consists of differentiable functions
  with respect to $(\theta, \beta)$ and thus $\beta(\theta)$
  is twice continuously differentiable locally
  \citep[see Theorem 1B.1 and 1B.5 of ][for instance]{DR2009book}.
     
  Next, the parameter space $\mathcal{B}$ is assumed to be compact
  and $\| M(\theta, \beta)\|^2/2$ satisfies the Palais-Smale condition
  for each $\theta \in \Theta$.
  Also, $\partial_{\beta} M(\theta, \beta)$ is shown to be non-singular
  under the condition (iii) of Theorem 1(b).
  It follows from Lemma \ref{lemma:MPT} that
  $\beta(\theta)$ uniquely solves $M(\theta, \beta) =0$
  for all $\theta$
  and $\beta(\theta)$ satisfies the desired properties globally. 
\end{proof}
\vspace{0.2cm}

In the below lemma, we present a technical result, which
is useful when an estimator depends on observed and simulated data.
More precisely, this technical lemma establishes
the uniform law of large numbers over simulations
as long as
the simulation size $R$ is not large to be compared with
the sample size $n$, or $\log(R) = o(n)$.
We prove this lemma by using the symmetrization argument.
A similar approach can be found in Lemma 8 of \cite{chernozhukov2015comparison}
in a different context.

\vspace{0.3cm}
\begin{lemma} 
  \label{lemma:FN}
  Given positive integers $n$ and $R$,
  let
  $\bm{z}_{n}:=\{z_{i}\}_{i=1}^{n}$
  be
  a sequence of independent random
  variables in $\R$
  and 
  $\bm{u}_{n,R}:=\{(u_{i1}, \dots, u_{iR})\}_{i=1}^{n}$
  be
  a collection of vectors whose
  elements are 
  independent and identically distributed
  in $\R$.
  Assume that 
  $\bm{z}_{n}$ is independent of
  $\bm{u}_{n,R}$.
  For
  $i=1, \dots, n$
  and 
  $r = 1, \dots, R$,
  define
  $x_{ir}:=g(z_{i},u_{ir})$
  for some measurable function $g: \R^{2} \to \R$
  and 
  $\bar{x}_{r}:= n^{-1} \sum_{i=1}^{n}x_{ir}$.
  Assume that
  $\E[\max_{1 \le r \le R}|x_{ir}|^2]<\infty$
  for every $i=1, \dots, n$.
  If 
  $n\to\infty$
  with 
  $
    \log(R) /
    n
    \to 0 
  $, then
  \begin{eqnarray*}
    \max_{1 \le r \le R}
    \big |
    \bar{x}_{r}
    -
    \E [ \bar{x}_{r}]
    \big |
    =
    o_p(1).
  \end{eqnarray*}
\end{lemma}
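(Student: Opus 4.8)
The plan is to prove the stronger statement $\E\big[\max_{1\le r\le R}|\bar{x}_r-\E[\bar{x}_r]|\big]\to 0$, from which the conclusion follows by Markov's inequality. The naive route---bounding the maximum by a union bound over $r$ and controlling each $|\bar{x}_r-\E[\bar{x}_r]|$ by Chebyshev---only yields a bound of order $R/n$, which would require $R=o(n)$ and is far too crude for the stated rate $\log(R)=o(n)$. The key is therefore to obtain an exponential-type control of the maximum, which I would achieve by a symmetrization argument followed by a sub-Gaussian maximal inequality applied conditionally on the data, exactly as in Lemma 8 of \cite{chernozhukov2015comparison}. Write $\mu_i:=\E[x_{ir}]$, which does not depend on $r$ because the $u_{ir}$ are identically distributed across $r$, so that $\bar{x}_r-\E[\bar{x}_r]=n^{-1}\sum_{i=1}^n(x_{ir}-\mu_i)$.

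First I would symmetrize. Treating the index $r\in\{1,\dots,R\}$ as a finite function class evaluated on the units $(z_i,u_{i1},\dots,u_{iR})$, which are independent across $i$, the standard symmetrization inequality for empirical processes gives, for independent Rademacher signs $\{\zeta_i\}_{i=1}^n$ drawn independently of the data,
\begin{eqnarray*}
\E\Big[\max_{1\le r\le R}\Big|\frac1n\sum_{i=1}^n(x_{ir}-\mu_i)\Big|\Big]
\le
2\,\E\Big[\max_{1\le r\le R}\Big|\frac1n\sum_{i=1}^n \zeta_i x_{ir}\Big|\Big].
\end{eqnarray*}
Next, condition on the entire data $(\bm{z}_{n},\bm{u}_{n,R})$. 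For each fixed $r$ the $x_{ir}$ are then constants, so $n^{-1}\sum_i\zeta_i x_{ir}$ is a weighted Rademacher sum; by Hoeffding's lemma it is sub-Gaussian in $\zeta$ with variance proxy $\tau_r^2:=n^{-2}\sum_{i=1}^n x_{ir}^2$. The maximal inequality for a finite collection of sub-Gaussian variables (which does not require independence across $r$) then yields
\begin{eqnarray*}
\E_{\zeta}\Big[\max_{1\le r\le R}\Big|\frac1n\sum_{i=1}^n\zeta_i x_{ir}\Big|\Big]
\le
\sqrt{2\log(2R)}\;\max_{1\le r\le R}\tau_r.
\end{eqnarray*}

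Finally I would bound the envelope. Using $\max_r\tau_r^2\le n^{-2}\sum_{i=1}^n\max_r x_{ir}^2$, Jensen's inequality, and the moment hypothesis $C:=\sup_i\E[\max_{1\le r\le R}|x_{ir}|^2]<\infty$, taking total expectations gives
\begin{eqnarray*}
\E\Big[\max_{1\le r\le R}|\bar{x}_r-\E[\bar{x}_r]|\Big]
\le
2\sqrt{2\log(2R)}\,\Big(\frac1n\cdot\frac1n\sum_{i=1}^n\E[\max_r x_{ir}^2]\Big)^{1/2}
\le
2\sqrt{2C}\,\sqrt{\frac{\log(2R)}{n}},
\end{eqnarray*}
which tends to zero precisely under $\log(R)/n\to0$. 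The main obstacle---and the crux of the lemma---is securing the $\sqrt{\log(R)/n}$ rate in place of the $R/n$ rate of a crude union bound; this is exactly what the symmetrization-plus-conditional-sub-Gaussianity argument delivers, and it is what renders the very weak requirement $\log(R)=o(n)$ sufficient. A secondary technical point to watch is that the second-moment bound $C$ must be uniform in both $i$ and $R$; this is automatic in the iid application of the lemma, where the envelope $\bar{m}_{i}\bar{w}$ dominating each $x_{ir}$ is free of $r$ and has finite second moment.
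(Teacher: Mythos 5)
Your proof is correct and follows essentially the same route as the paper's: symmetrization with Rademacher signs followed by the sub-Gaussian maximal inequality over the $R$ draws conditional on the data, with the second-moment envelope $\E[\max_{1\le r\le R}|x_{ir}|^2]$ delivering the $\sqrt{\log(R)/n}$ rate. The only (cosmetic) difference is that the paper first conditions on $\bm{z}_{n}$ and separately disposes of the $O(n^{-1/2})$ term $\E[\bar{x}_r\mid \bm{z}_n]-\E[\bar{x}_r]$ before symmetrizing, whereas you symmetrize unconditionally in one step, which is legitimate since the vectors $(z_i,u_{i1},\dots,u_{iR})$ are independent across $i$; your closing remark about uniformity of the second-moment bound in $i$ and $R$ is also a point the paper itself glosses over.
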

\begin{proof}
  By Markov's inequality, 
  $
    \Pr
    (
    \max_{1 \le r \le R}
    |
    \bar{x}_{r} - \E^{\ast}[\bar{x}_{r}]
    |
    \ge
    \eta
    )
    \le
    \eta^{-1}
    \E
    ( 
    \max_{1 \le r \le R}
    |
    \bar{x}_{r} - \E[\bar{x}_{r}]
    |
    )  
  $ 
  for any $\eta>0$.
  Thus, it suffices to show that
  $
    \E
    ( 
    \max_{1 \le r \le R}
    |
    \bar{x}_{r} - \E[\bar{x}_{r}]
    |
    )
    \to 0
  $
  as $n \to \infty$.
  We denote by the conditional expectation $\E^{\ast}[\cdot]:= \E[\cdot|\bm{z}_{n}]$.
  By the triangle inequality, we have 
  \begin{eqnarray*}
    \big |
    \bar{x}_{r} - \E[\bar{x}_{r}]
    \big |
    \le 
    \big |
    \bar{x}_{r} - \E^{\ast}[\bar{x}_{r}]
    \big |
    +
    \big |
    \E^{\ast}[\bar{x}_{r}]    
    -
    \E[\bar{x}_{r}]
    \big |.
  \end{eqnarray*}
  Because
  $\bar{x}_{r}$ is identically distributed
  conditional on $\bm{z}_{n}$,
  an application of Jensen's inequality yields
  $
    \E
    ( 
    \max_{1 \le r \le R}
    |
    \E^{\ast}[\bar{x}_{r}]    
    -
    \E[\bar{x}_{r}]
    |
    )
    \le
    (
    \E
    |
    \E^{\ast}[\bar{x}_{r}]    
    -
    \E[\bar{x}_{r}]
    |^{2}
    )^{1/2}
    \le
    (n^{-1} \E[x_{ir}^2])^{1/2}
    \lesssim n^{-1/2}
  $.
  Thus,
  it suffices
  to show that
  $
    \E
    \big [ 
    \max_{1 \le r \le R}
    |
    \bar{x}_{r} - \E^{\ast}[\bar{x}_{r}]
    |
    \big ] 
    =
    o(1)
  $.
  To use the symmetrization technique,
  let $\varepsilon_{1}, \dots, \varepsilon_{n}$
  be independent Rademacher random variables
  that are independent of
  $\bm{z}_{n}$
  and 
  $\bm{u}_{n,R}$.
  Given an independent
  sequence 
  $\{x_{ir}\}$
  conditional on $\bm{z}_{n}$, 
  Lemma 2.3.1 in \cite{van1996weak} implies that
  \begin{eqnarray*}
    \E^{\ast}
    \Big [
    \max_{1 \le r \le R}
    \big |
    \bar{x}_{r} - \E^{\ast}[\bar{x}_{r}]
    \big |
    \Big ]
    \le 
    2
    n^{-1}
    \E^{\ast}
    \Big [　
    \max_{1 \le r \le R}
    \Big |
    \sum_{i=1}^{n}
    \varepsilon_{i}
    x_{ir}
    \Big |
    \Big ].
  \end{eqnarray*}
  By Lemma 2.2.2 and 2.2.7 in \cite{van1996weak}, we have 
  \begin{eqnarray*}
    \E
    \bigg [　
    \max_{1 \le r \le R}
    \Big |
    \sum_{i=1}^{n}
    \varepsilon_{i}
    x_{ir}
    \Big |
    \
    \Big |
    \bm{z}_{n},
    \bm{u}_{n,R}
    \bigg ]　
    \lesssim
    \max_{1 \le r \le R}
    \bigg (
    \sum_{i=1}^{n}
    x_{ir}^{2}
    \bigg )^{1/2}
    (\log R)^{1/2}. 
  \end{eqnarray*}
  By Fubini's theorem and Jensen's inequality, we have 
  \begin{eqnarray*}
    \E
    \Big [
    \max_{1 \le r \le R}
    \big |
    \bar{x}_{r} - \E^{\ast}[\bar{x}_{r}]
    \big |
    \Big ]
    \lesssim
    n^{-1}
    \Big (
    \E
    \Big [
    \max_{1 \le r \le R}
    \sum_{i=1}^{n}
    x_{ir}^{2}
    \Big]
    \Big )^{1/2}
    (\log R)^{1/2}
    \lesssim
    \bigg (
    \frac{\log R}{n}
    \bigg )^{1/2},
  \end{eqnarray*}
  where the last inequality holds because 
  $
  n^{-1}
    \E
    [
    \max_{1 \le r \le R}
    \sum_{i=1}^{n}
    x_{ir}^{2}
    ]
    \le
    \E[
    \max_{1 \le r \le R}
    |x_{ir}|^{2}
    ]
    < \infty
  $.
  Given that 
  $\log (R)/n \to 0$, the desired conclusion follows.
\end{proof}
\vspace{0.2cm}

For each $r=1, \dots, R$,
define a function 
$
M_{n}^{r}:
\Theta^2{\times}\mathcal{B} \to \mathbb{R}^{d_\beta}
$, given by 
\begin{eqnarray*}
  M_{n}^{r}(\theta, \theta^{\ast}, \beta)
  :=
  n^{-1}
  \sum_{i=1}^{n}
  \sum_{t=1}^{T}  
  m_{it}^{r}
  (\theta, \theta^{\ast}, \beta).  
\end{eqnarray*}
The estimator
$\hat{\beta}^{r}(\theta, \theta^{\ast})$
is the solution to the equation 
$M_{n}^{r}(\theta, \theta^{\ast}, \beta) = 0$,
given 
$(\theta, \theta^{\ast}) \in \Theta^2$.
The following lemma establishes the properties of
$\hat{\beta}^{r}(\theta, \theta^{\ast})$.

\vspace{0.2cm}
\begin{lemma}
  \label{lemma:beta}
  Suppose that Assumptions 1-4 hold. 
  If $m(y,z,\beta)$ is twice continuously differentiable in $\beta$
  for any $(y,z) \in \mathcal{Y} \times \mathcal{Z}$  
  and
  $(\log R)/ n \to 0$.
  Then,
  for $\ell \in \{0, 1, 2\}$,
  as $n \to \infty$, 
  \begin{eqnarray*}
    \partial_{\theta}^{\ell} 
    \hat{\beta}^{r}(\theta, \theta^{\ast})|_{\theta = \theta^{\ast}}
    \pto
    \partial_{\theta}^{\ell}  \beta(\theta^{\ast}),
  \end{eqnarray*}
  uniformly in 
  $(\theta, r) \in \Theta \times \{1, \dots, R\}$.
\end{lemma}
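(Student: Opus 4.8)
The plan is to treat the three cases $\ell\in\{0,1,2\}$ in sequence, in each case exploiting the characterization of $\hat{\beta}^{r}(\theta,\theta^{\ast})$ and $\beta(\theta^{\ast})$ as solutions of $M_{n}^{r}(\theta,\theta^{\ast},\beta)=0$ and $M(\theta^{\ast},\beta)=0$, respectively. Two ingredients are already in place: Lemma \ref{lemma:FN}, which supplies a law of large numbers uniform over the simulation index $r\in\{1,\dots,R\}$ whenever $\log(R)/n\to 0$; and the bracketing (Jennrich-type) argument from the proof of Proposition \ref{mc1}, which supplies uniformity over the parameter space. The preliminary step is to combine these into the statement that $M_{n}^{r}$, together with all of its first- and second-order partial derivatives in $\theta$ (evaluated at $\theta=\theta^{\ast}$) and in $\beta$, converge in probability to their population counterparts uniformly over $(\theta^{\ast},\beta)\in\Theta\times\mathcal{B}$ and over $r\in\{1,\dots,R\}$; the representative case is
\[
\max_{1\le r\le R}\;
\sup_{(\theta^{\ast},\beta)\in\Theta\times\mathcal{B}}
\big\|
\partial_{\theta}^{\ell}M_{n}^{r}(\theta,\theta^{\ast},\beta)\big|_{\theta=\theta^{\ast}}
-
\partial_{\theta}^{\ell}M(\theta^{\ast},\beta)
\big\|
\pto 0 ,
\qquad \ell\in\{0,1,2\}.
\]
To prove this I would cover $\Theta\times\mathcal{B}$ by finitely many $\delta$-neighborhoods, bound each integrand above and below by its supremum and infimum over the neighborhood (fixed measurable functions of $(z_{i},u_{i}^{r})$), and apply Lemma \ref{lemma:FN} to each such bracketing function. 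The moment condition $\E[\max_{r}|x_{ir}|^{2}]<\infty$ needed by Lemma \ref{lemma:FN} follows from the $r$-free envelopes $\bar{m}_{i}$, $\nabla\bar{w}_{i}$, $\nabla^{\ell}\bar{c}_{i}$ (and $\nabla^{\xi}\bar{m}_{i}$ for the $\beta$-derivatives) of Assumptions \ref{two}(c), \ref{g}(b) and Theorem \ref{theorem:Q}(b)(ii), via the Cauchy--Schwarz inequality, and the bracketing gap vanishes as $\delta\to 0$ by the dominated-convergence argument already carried out for Proposition \ref{mc1}.

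For $\ell=0$ the result follows by the standard M-estimation route. Lemma \ref{lemma:exist-beta-der} guarantees that $\beta(\theta^{\ast})$ is the unique zero of $M(\theta^{\ast},\cdot)$, is globally twice continuously differentiable, and that $\partial_{\beta}M(\theta^{\ast},\beta(\theta^{\ast}))$ is non-singular; combining identification with the uniform convergence of $M_{n}^{r}$ to $M$ established above yields $\max_{r}\sup_{\theta^{\ast}}\|\hat{\beta}^{r}(\theta^{\ast},\theta^{\ast})-\beta(\theta^{\ast})\|\pto 0$.

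For $\ell\in\{1,2\}$ I would differentiate the identity $M_{n}^{r}(\theta,\theta^{\ast},\hat{\beta}^{r}(\theta,\theta^{\ast}))\equiv 0$ with respect to $\theta$. At $\theta=\theta^{\ast}$ this gives
\[
\partial_{\theta}\hat{\beta}^{r}(\theta,\theta^{\ast})\big|_{\theta=\theta^{\ast}}
=
-\big[\partial_{\beta}M_{n}^{r}\big]^{-1}
\partial_{\theta}M_{n}^{r}\big|_{\theta=\theta^{\ast}},
\]
whose population analogue is exactly \eqref{eq:beta-dev}. Convergence then follows by substituting the $\ell=0$ conclusion, the uniform convergence of $\partial_{\beta}M_{n}^{r}$ and of $\partial_{\theta}M_{n}^{r}|_{\theta=\theta^{\ast}}$ from the preliminary step, and the continuity of matrix inversion at the non-singular limit $\partial_{\beta}M(\theta^{\ast},\beta(\theta^{\ast}))$. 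The second derivative is obtained by differentiating the implicit relation once more and re-using the $\ell=0$ and $\ell=1$ conclusions together with the uniform convergence of the second derivatives of $M_{n}^{r}$.

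The main obstacle, and the only genuine departure from a textbook M-estimation argument, is securing uniformity over $r$ simultaneously with uniformity over the parameters. This is exactly why Lemma \ref{lemma:FN} must be applied not to the moment functions themselves but to their bracketing sup/inf envelopes over parameter neighborhoods: the fact that the envelopes $\bar{m}_{i}$, $\nabla\bar{w}_{i}$ and $\nabla^{\ell}\bar{c}_{i}$ are free of $r$ (being bounds taken over all $\theta$) is what makes $\E[\max_{r}|x_{ir}|^{2}]<\infty$ hold, and hence what connects the rate condition $\log(R)/n\to 0$ to the stated conclusion.
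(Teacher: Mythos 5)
Your proposal is correct and follows essentially the same route as the paper's own proof: both combine Lemma \ref{lemma:FN} (uniformity over $r$, powered by $\log(R)/n \to 0$) with the Jennrich-type bracketing envelopes from Proposition \ref{mc1} (uniformity over $(\theta^{\ast},\beta)$), then obtain $\ell=0$ by the standard M-estimation/identification argument and $\ell\in\{1,2\}$ by implicitly differentiating $M_{n}^{r}\big(\theta,\theta^{\ast},\hat{\beta}^{r}(\theta,\theta^{\ast})\big)=0$ and passing to the non-singular limit $\partial_{\beta}M\big(\theta^{\ast},\beta(\theta^{\ast})\big)$ via Lemma \ref{lemma:exist-beta-der}. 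Your explicit remark that the $r$-free envelopes $\bar{m}_{i}$, $\nabla\bar{w}_{i}$, $\nabla^{\ell}\bar{c}_{i}$ are what deliver the moment condition $\mathbb{E}[\max_{1\le r\le R}|x_{ir}|^{2}]<\infty$ required by Lemma \ref{lemma:FN} simply makes explicit what the paper uses implicitly.
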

\begin{proof}

  Consider
  the uniform consistency of $\hat{\beta}^{r}(\theta^{\ast}, \theta^{\ast})$
  to
  $\beta(\theta^{\ast})$.
  For any $\theta^{\ast} \in \Theta$
  and 
  $r = 1, \dots, R$,
  it follows from the definition of 
  $\hat{\beta}^{r}(\theta^{\ast}, \theta^{\ast})$
  that 
  \begin{eqnarray}
    \label{eq:min-1}
    \big \|
    M_{n}^{r}\big (\theta^{\ast},\theta^{\ast}, \beta(\theta, \theta^{\ast}) \big) 
    \big \|
    - 
    \big \|
    M_{n}^{r}\big (\theta^{\ast},\theta^{\ast}, \hat{\beta}^{r}(\theta, \theta^{\ast})
    \big )
    \big \|
    \ge 0.
  \end{eqnarray}
  Let $\delta >0$ be an arbitrary constant. 
  The continuity of
  $M(\theta^{\ast}, \beta)$
  in $\beta \in \mathcal{B}$
  implies that, 
  for any $r \in \{1, \dots, R\}$
  with 
  $
  \| \hat{\beta}^{r}(\theta^{\ast}, \theta^{\ast})
  -
  \beta(\theta^{\ast}) \| > \delta$,
  there exists an $\epsilon >0$ such that 
  $
    \big \|
    M \big (\theta^{\ast}, \hat{\beta}^{r}(\theta^{\ast}, \theta^{\ast}) \big ) 
    \big \|
    - 
    \big \|
    M \big (\theta^{\ast},\beta(\theta^{\ast}) \big) 
    \big \|
    > 2 \epsilon,
  $
  which together with 
  (\ref{eq:min-1})
  and the triangle inequality 
  implies that 
  \begin{eqnarray*}
    \Pr 
    \bigg (
    \sup_{\theta^{\ast} \in \Theta}
    \max_{1 \le r \le R}
    \| \hat{\beta}^{r}(\theta^{\ast}, \theta^{\ast}) - \beta(\theta^{\ast}) \| > \delta
    \bigg )
    \le  
    \Pr 
    \bigg (
    \max_{1 \le r \le R}
    \sup_{(\theta^{\ast}, \beta) \in \Theta \times \mathcal{B}}
    \| 
    M_{n}^{r}(\theta^{\ast}, \theta^{\ast}, \beta) - M(\theta^{\ast},\beta)
    \| > \epsilon
    \bigg ).
  \end{eqnarray*}
  Given the compact parameter spaces,
  it suffices to show that,
  for an arbitrarily small $\delta>0$,
  \begin{eqnarray}
    \label{eq:m-conv}
    \max_{1 \le r \le R}
    \sup_{(\theta^{\ast},\beta) \in \mathcal{N}_{\delta}}
    \| 
    M_{n}^{r}(\theta^{\ast}, \theta^{\ast}, \beta) - M(\theta^{\ast},\beta)
    \|
    =
    o_p(1),
  \end{eqnarray}
  as $n \to \infty$,
  where
  $\mathcal{N}_{\delta}:=\mathcal{N}_{1,\delta}\times \mathcal{N}_{2,\delta}
  \subset \Theta \times \mathcal{B}
  $
  with
  $\sup_{\theta_{1}, \theta_{2} \in \mathcal{N}_{1, \delta}}
  \|\theta_{2} - \theta_{1}\| \le \delta$
  and 
  $
  \sup_{\beta_{1}, \beta_{2}\in \mathcal{N}_{2,\delta}}
  \|\beta_{2} - \beta_{1}\| \le \delta$.
  Let $\delta>0$ be fixed and define 
  \begin{eqnarray*}
    \mu_{n, \delta}^{r-}
    :=
    n^{-1}
    \sum_{i=1}^{n}
    \inf_{(\theta^{\ast}, \beta) \in \mathcal{N}_{\delta}}
    m_{it}^{r}(\theta^{\ast}, \theta^{\ast}, \beta)
    \ \ \ \mathrm{and}  \ \ \
    \mu_{n, \delta}^{r+}
    :=
    n^{-1}
    \sum_{i=1}^{n}
    \sup_{(\theta^{\ast}, \beta) \in \mathcal{N}_{\delta}}
    m_{it}^{r}(\theta^{\ast}, \theta^{\ast}, \beta).
  \end{eqnarray*}
  An application of the triangle inequality yields that
  \begin{eqnarray*}
    \max_{1 \le r \le R}
    \sup_{(\theta, \theta^{\ast}, \beta) \in \mathcal{N}_{\delta}}    
    \| 
    M_{n}^{r}(\theta^{\ast}, \theta^{\ast}, \beta)
    -
    M(\theta^{\ast}, \beta)
    \| 
    &\le&
    \max_{1 \le r \le R}
    \big \|
    \mu_{n,\delta}^{r-}
    -
    \E[\mu_{n,\delta}^{r-}]
    \big \|
    +
    \max_{1 \le r \le R}
    \big \|
    \mu_{n,\delta}^{r+}
    -
    \E[\mu_{n,\delta}^{r+}]
    \big \| \\
    && +
    2   
    \max_{1 \le r \le R}
    \big \|
    \E[\mu_{n,\delta}^{r+}]
    -
    \E[\mu_{n,\delta}^{r-}]
    \big \|.
  \end{eqnarray*}
  Lemma \ref{lemma:FN} implies that 
  $
    \max_{1 \le r \le R}
    \|
    \mu_{n,\delta}^{r\pm}
    -
    \E[\mu_{n,\delta}^{r\pm}]
    \|
    = o_p(1)
  $
  as $n \to \infty$.
  Also,
  a similar argument
  in Proposition \ref{mc1}
  shows that 
  $  \max_{1 \le r \le R}
     \|
     \E[\mu_{n,\delta}^{r+}]
     -
     \E[\mu_{n,\delta}^{r-}]
    \|
    \to 0
  $ as $\delta \to 0$.
  Hence,
  (\ref{eq:m-conv}) follows and the result follows for the case of $\ell=0$.

  Next, we consider the first derivative.
  The estimator $\hat{\beta}^{r}(\theta, \theta^{\ast})$
  satisfies that
  $
  M_{n}^{r}\big (\theta,\theta^*, \hat{\beta}^{r}(\theta, \theta^{\ast}) \big)
  = 0
  $
  for each $r=1, \dots, R$.
  Taking the first derivative of the implicit function with respect to $\theta$,
  we obtain 
  \begin{eqnarray}
    \label{eq:exp-500}
    \partial_{\theta}
    M_{n}^{r}\big (\theta,\theta^*, \hat{\beta}^{r}(\theta, \theta^{\ast}) \big)
    +
    \partial_{\beta}
    M_{n}^{r}\big (\theta,\theta^*, \hat{\beta}^{r}(\theta,\theta^{\ast}) \big )
    \partial_{\theta}
    \hat{\beta}^{r}(\theta, \theta^{\ast})
    =
    0.
  \end{eqnarray}
  Using the same argument to show (\ref{eq:m-conv}),
  we can apply
  the argument for the uniform law of large numbers
  in \cite{jennrich1969}
  with 
  the result in
  Lemma \ref{lemma:FN} 
  to obtain 
  \begin{eqnarray}
    \label{eq:ex-503}
    \partial_{\beta}
    M_{n}^{r} (\theta,\theta^*, \beta )|_{\theta = \theta^{\ast}}
    \pto 
    \partial_{\beta} M (\theta^{\ast}, \beta),
  \end{eqnarray}
  uniformly in 
  both 
  $(\theta^*, \beta) \in \Theta \times \mathcal{B}$
  and 
  $r = 1, \dots, R$. 
  Since 
  $
  \hat{\beta}^{r}(\theta^{\ast}, \theta^{\ast})
  \pto 
  \beta(\theta^{\ast})
  $
  uniformly in 
  $(\theta^{\ast}, r) \in \Theta \times \{1, \dots, R\}$,
  we have that 
  $
    \partial_{\beta}
    M_{n}^{r}\big (\theta^*,\theta^*, \hat{\beta}^{r}(\theta^*, \theta^{\ast}) \big)
    $
  converges in probability to   
  a non-singular matrix 
  $
    \partial_{\beta} M \big (\theta^{\ast}, \beta(\theta^{\ast}) \big)
  $
  uniformly in
  $(\theta^{\ast}, r) \in \Theta \times \{1, \dots, R\}$.
  Similarly,
  it can be shown that 
  $
  \partial_{\theta} 
  M_{n}^{r}\big (\theta,\theta^*, \hat{\beta}^{r}(\theta^*, \theta^{\ast}) \big)
  |_{\theta = \theta^{\ast}}
  \pto 
  \partial_{\theta}
  M\big (\theta^*, \beta(\theta^*) \big)
  $
  uniformly in
  $(\theta^{\ast}, r) \in \Theta \times \{1, \dots, R\}$.
  This together with
  (\ref{eq:exp-500})-(\ref{eq:ex-503})
  and 
  Lemma \ref{lemma:exist-beta-der}(b)
  implies the desired result. 

  A similar argument can show the second derivative part and thus we omit the details. 
\end{proof}
\vspace{0.3cm}

\begin{proof}[\textbf{Proof of Theorem \ref{theorem:Q}}]

We first show that the result holds for $\ell=0$. First, we consider $Q_n^{\text{LM}}(\theta,\theta)$. Using a similar argument to Lemma \ref{lemma:beta}, we can show that $\hat{\beta}\pto\beta^0$. Also, from Proposition \ref{mc1}, we have that $M_n(\theta,\theta,\beta)\pto M(\theta,\beta)$ uniformly over $(\theta, \beta) \in \Theta {\times} \mathcal{B}$. Because $\Omega_n\pto\Omega$ under Assumption \ref{three}(c), the uniform convergence of $Q_n^{\text{LM}}(\theta,\theta)$ to $Q^{\text{LM}}(\theta)$ follows. Now, consider $Q_n^{\text{W}}(\theta,\theta)$. By Lemma \ref{lemma:beta}, $\bar{\beta}^{R}(\theta,\theta)\pto\beta(\theta)$, uniformly over $\theta \in\Theta$. Therefore, by Assumption \ref{three}(c),  we can conclude that $Q_n^{\text{W}}(\theta,\theta)$ converges to $Q^{\text{W}}(\theta)$ uniformly.

  We next show that the desired conclusion holds for
  the first derivative of two criterion functions.
  A similar argument can be used to prove
  the result for the second derivative and we omit the details for brevity. 

  First, we consider the LM criterion function.
  The first derivative of $Q_{n}^{\text{LM}}(\theta, \theta^{*})$
  is 
  given by 
  \begin{flalign*}
    \partial_{\theta} Q_{n}^{\text{LM}}(\theta, \theta^{*})
    |_{\theta = \theta^{\ast}}
    & 
    =
    2
    \big [
    \partial_{\theta} M_{n}(\theta,\theta^{\ast},\hat{\beta})
    |_{\theta = \theta^{\ast}}
    \big ]'
    \Omega_{n}
    M_{n}(\theta^{\ast},\theta^{\ast},\hat{\beta}),
  \end{flalign*}
  and
  its population counterpart is
  \begin{eqnarray*}
    \partial_{\theta} Q^{\text{LM}}(\theta^{\ast})
    =
    2 [\partial_{\theta}M(\theta^{\ast}, \beta^{0}) ]'
    \Omega
    M(\theta^{\ast}, \beta^{0}).
  \end{eqnarray*}
  Using a similar argument in Lemma \ref{lemma:beta},
  we can show that $\hat{\beta} \pto \beta^{0}$
  as $n\to\infty$.
  Thus,
  Proposition \ref{mc1} implies that
  $
  \partial_{\theta} M_{n}(\theta,\theta^{\ast},\hat{\beta})
  |_{\theta = \theta^{\ast}}
  \pto
  \partial_{\theta} M(\theta^{\ast}, \beta^{0})
  $
  uniformly in $\theta^{\ast} \in \Theta$.
  Also,
  a similar argument used in 
  Proposition \ref{mc1}
  shows that 
  $M_{n}(\theta^{\ast},\hat{\beta}) \pto M(\theta^{\ast}, \beta^{0})$
  uniformly in $\theta^{\ast} \in \Theta$.
  These results with the condition 
  that 
  $\Omega_{n} \pto \Omega$
  under Assumption \ref{three}(c)
  lead to the desired conclusion.

  Next, we consider the Wald approach.
  The first derivative of $Q_{n}^{\text{W}}(\theta, \theta^{*})$
  is given by 
  \begin{flalign*} 
    \partial_{\theta} Q_{n}^{\text{W}}(\theta, \theta^{*})|_{\theta=\theta^*}
    &=
    2 
    \big [
    \partial_{\theta}
    \bar{\beta}^{R}(\theta, \theta^{*})|_{\theta=\theta^*}
    \big ]'
    \Omega_{n}
    \big [
    \bar{\beta}^{R}(\theta^{\ast}, \theta^{\ast}) - \hat{\beta}
    \big ],
  \end{flalign*}
  and
  Lemma \ref{lemma:exist-beta-der}(b) implies that 
  its population counterpart is 
  \begin{eqnarray*}
    \partial_{\theta} Q^{\text{W}}(\theta^*)
    =
    2
    \big [
    \partial_{\theta}\beta(\theta^*)
    \big ]'
    \Omega
    \big [
    \beta(\theta^*) - \beta^{0}  
    \big ],
  \end{eqnarray*}
  where
  $ \partial_{\theta}\beta(\theta^*)
  =
  -
  \big[
  \partial_{\beta}M \big (\theta^*,\beta(\theta^*) \big)
  \big ]^{-1} 
  \partial_{\theta}M \big (\theta, \beta(\theta^*) \big)|_{\theta=\theta^*}
  $.
  By definition, we have 
  \begin{eqnarray*}
    \bar{\beta}^{R}(\theta^{\ast}, \theta^{\ast})
    =
    R^{-1}\sum_{r=1}^{R} \hat{\beta}^{r}(\theta^{\ast}, \theta^{\ast})
    \ \ \ \mathrm{and} \ \ \
    \partial_{\theta}\bar{\beta}^{R}(\theta, \theta^{\ast})|_{\theta = \theta^{\ast}}
    =
    R^{-1}\sum_{r=1}^{R}
    \partial_{\theta}
    \hat{\beta}^{r}(\theta, \theta^{\ast})
    |_{\theta = \theta^{\ast}}.
  \end{eqnarray*}
  By Lemma \ref{lemma:beta},
    $
  \bar{\beta}^{R}(\theta^{\ast}, \theta^{\ast})  
  \pto 
  \beta(\theta^{\ast})
  $
  and, by repeating the arguments in Lemma \ref{lemma:beta} for $\partial_\theta \bar{\beta}^{R}(\theta,\theta^*)|_{\theta=\theta^*}$, we can obtain a similar result, namely 
  $
    \partial_{\theta}
    \bar{\beta}^{R}(\theta, \theta^{\ast})
    |_{\theta = \theta^{\ast}}
    \pto
    \partial_{\theta}
    \beta(\theta^{\ast})
  $
  uniformly in $\theta^{\ast} \in \Theta$.
  Given $\hat{\beta}\pto \beta^0$ and
  $\Omega_{n} \pto \Omega$ under Assumption \ref{three}(c),
  we can conclude that
  $
  \partial_{\theta} Q_{n}^{\text{W}}(\theta, \theta^{\ast})
  |_{\theta = \theta^{\ast}}  \pto
  \partial_{\theta} Q^{\text{W}}(\theta^{\ast})
  $
  uniformly in $\theta^{\ast} \in \Theta$.
\end{proof}

\vspace{0.3cm}
\begin{proof}[\textbf{Proof of Theorem \ref{cons}}]
	Let
	$(Q, Q_{n}, \hat{\theta})$
	denote either
	$(Q^{\text{W}}, Q_{n}^{\text{W}}, \hat{\theta}^{\text{W}})$
	or 
	$(Q^{\text{LM}},Q_{n}^{\text{LM}}, \hat{\theta}^{\text{LM}})$.
	By Assumption 4(b) or the results of Lemma
	\ref{lemma:exist-beta-der}(b),
	for every $\delta>0$
	with   
	$\|\theta-\theta^0\|\geq \delta$,
	there exists some $\epsilon>0$ such that
	$Q(\theta) - Q(\theta^0) \ge \epsilon$.
	This implies that,
	for any $\delta>0$, there exists an $\epsilon>0$ such that 
	\begin{eqnarray*}
		\Pr
		\big (
		\|\hat{\theta}-\theta^0\|\geq \delta
		\big )
		\leq
		\Pr
		\big (
		Q(\hat{\theta}) - Q(\theta^{0}) \geq \epsilon
		\big ).  
	\end{eqnarray*}
	Thus, it suffices to show that
	$Q(\hat{\theta}) - Q(\theta^{0})=o_{p}(1)$.
	As the minimizer of $Q_{n}(\theta, \theta)$
	the estimator $\hat{\theta}$ satisfies that
	$Q_{n}(\theta^{0}, \theta^{0}) - Q_{n}(\hat{\theta}, \hat{\theta}) \ge 0$.
	Thus, an application of the triangle inequality yields 
	\begin{eqnarray*}
		|Q(\hat{\theta}) - Q(\theta^{0})|
		\le
		|
		Q_{n}(\hat{\theta}, \hat{\theta})
		-
		Q(\hat{\theta})
		|
		+
		|
		Q_{n}(\theta^{0}, \theta^{0})
		-
		Q(\theta^0)
		|.
	\end{eqnarray*}
	The right-hand side of the above equation
	is bounded from above by
	$2
	\sup_{\theta \in \Theta}
	|
	Q_{n}(\theta, \theta)
	-
	Q(\theta)
	|
	$,
	which
	converges in probability to 0 as $n\to \infty$
	by Theorem \ref{theorem:Q} (with $\ell =0$).
\end{proof}

\vspace{0.3cm}
\begin{lemma}
  \label{lemma:BL}
  Suppose that Assumptions 1-5 hold
  and $\log (R)/n \to \infty$ as $n \to \infty$.
  Also, assume that 
  $m(y,z,\beta)$ is continuously differentiable in $\beta$
  for any $(y,z) \in \mathcal{Y} \times \mathcal{Z}$
  and
  $\partial_{\beta} M\big (\theta, \beta(\theta) \big )$
  is non-singular
  in the neighborhood
  $\mathcal{N}_{\delta}(\theta^{0})$
  for some constant $\delta>0$.
  Then, 
  \begin{eqnarray*}
    \sqrt{n}
    \big (
    \hat{\beta}^{r}(\theta, \theta)
    -
    \beta(\theta)
    \big )
    =
    -
    \big [
    \partial_{\beta}
    M
    \big (
    \theta,\beta(\theta)
    \big )
    \big ]^{-1}
    \sqrt{n}
    M_{n}^{r}
    \big (\theta,\theta, \beta(\theta)\big)
    +o_{P}(1),    
  \end{eqnarray*}
  uniformly in $(\theta,r) \in \mathcal{N}_{\delta}(\theta^{0})  \times \{1, \dots, R\}$.
\end{lemma}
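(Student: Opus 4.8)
The plan is to derive the representation from a mean-value expansion of the estimating equation that defines $\hat{\beta}^{r}(\theta,\theta)$, and then to replace the random Jacobian in that expansion by its non-singular population limit, uniformly over both $\theta\in\mathcal{N}_{\delta}(\theta^{0})$ and $r\in\{1,\dots,R\}$. Since $m(y,z,\beta)$ is continuously differentiable in $\beta$, the map $\beta\mapsto M_{n}^{r}(\theta,\theta,\beta)$ is continuously differentiable, and because $\hat{\beta}^{r}(\theta,\theta)$ solves $M_{n}^{r}(\theta,\theta,\beta)=0$, I would expand coordinate-by-coordinate about $\beta(\theta)$ to obtain
\begin{eqnarray*}
  0 = M_{n}^{r}\big(\theta,\theta,\beta(\theta)\big) + \partial_{\beta}M_{n}^{r}\big(\theta,\theta,\bar{\beta}^{r}\big)\big(\hat{\beta}^{r}(\theta,\theta)-\beta(\theta)\big),
\end{eqnarray*}
where $\bar{\beta}^{r}$ lies on the segment joining $\hat{\beta}^{r}(\theta,\theta)$ and $\beta(\theta)$ (with a possibly distinct intermediate point for each row, which is immaterial to the argument). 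Rearranging and scaling by $\sqrt{n}$ yields $\sqrt{n}(\hat{\beta}^{r}(\theta,\theta)-\beta(\theta))=-[\partial_{\beta}M_{n}^{r}(\theta,\theta,\bar{\beta}^{r})]^{-1}\sqrt{n}M_{n}^{r}(\theta,\theta,\beta(\theta))$, so the task reduces to controlling the Jacobian and its inverse.

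Next I would show that $\partial_{\beta}M_{n}^{r}(\theta,\theta,\bar{\beta}^{r})\pto\partial_{\beta}M(\theta,\beta(\theta))$ uniformly in $(\theta,r)$. Lemma \ref{lemma:beta} (the case $\ell=0$) gives $\hat{\beta}^{r}(\theta,\theta)\pto\beta(\theta)$ uniformly in $(\theta,r)$, so the intermediate points $\bar{\beta}^{r}$ inherit the same uniform consistency. A uniform law of large numbers for the Jacobian over $(\theta,\beta,r)\in\mathcal{N}_{\delta}(\theta^{0})\times\mathcal{B}\times\{1,\dots,R\}$ — obtained exactly as in the derivation of \eqref{eq:ex-503}, by combining the bracketing argument of \cite{jennrich1969} with the maximal inequality of Lemma \ref{lemma:FN} — then delivers $\sup_{\theta,r}\|\partial_{\beta}M_{n}^{r}(\theta,\theta,\bar{\beta}^{r})-\partial_{\beta}M(\theta,\beta(\theta))\|=o_{p}(1)$, using the uniform continuity of $\partial_{\beta}M$ on compacta to absorb the gap between $\bar{\beta}^{r}$ and $\beta(\theta)$. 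Since $\partial_{\beta}M(\theta,\beta(\theta))$ is assumed non-singular and continuous on $\mathcal{N}_{\delta}(\theta^{0})$, its inverse is uniformly bounded, and this convergence transfers to the inverses.

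Finally I would bound the remainder. Writing it in the form
\begin{eqnarray*}
&& \sqrt{n}\big(\hat{\beta}^{r}(\theta,\theta)-\beta(\theta)\big) + \big[\partial_{\beta}M(\theta,\beta(\theta))\big]^{-1}\sqrt{n}M_{n}^{r}(\theta,\theta,\beta(\theta)) \\
&& \qquad = -\big[\partial_{\beta}M(\theta,\beta(\theta))\big]^{-1}\big(\partial_{\beta}M_{n}^{r}(\theta,\theta,\bar{\beta}^{r})-\partial_{\beta}M(\theta,\beta(\theta))\big)\sqrt{n}\big(\hat{\beta}^{r}(\theta,\theta)-\beta(\theta)\big),
\end{eqnarray*}
I see that the remainder is the product of the uniform $o_{p}(1)$ Jacobian gap from the previous step with $\sqrt{n}(\hat{\beta}^{r}(\theta,\theta)-\beta(\theta))$. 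The main obstacle is therefore to control $\sqrt{n}(\hat{\beta}^{r}(\theta,\theta)-\beta(\theta))$, equivalently $\sqrt{n}M_{n}^{r}(\theta,\theta,\beta(\theta))$, \emph{uniformly over the growing simulation index} $\{1,\dots,R\}$; because $M(\theta,\beta(\theta))=0$, this score is a centered, normalised sum, and a maximal inequality over the $R$ simulations (via the symmetrization argument underlying Lemma \ref{lemma:FN}) bounds $\max_{r}\|\sqrt{n}M_{n}^{r}(\theta,\theta,\beta(\theta))\|$, with the rate condition relating $R$ and $n$ — the very condition that permits the application of Lemma \ref{lemma:FN} — guaranteeing that the resulting maximal factor is dominated by the $o_{p}(1)$ Jacobian gap. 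Combining these bounds yields the stated uniform $o_{p}(1)$ representation.
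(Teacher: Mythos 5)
Your proposal is correct and follows essentially the same route as the paper's own proof: a mean-value expansion of $M_{n}^{r}\big(\theta,\theta,\hat{\beta}^{r}(\theta,\theta)\big)=0$ about $\beta(\theta)$, followed by uniform (in $(\theta,\beta,r)$) convergence of the Jacobian $\partial_{\beta}M_{n}^{r}$ to the non-singular $\partial_{\beta}M\big(\theta,\beta(\theta)\big)$ via the Jennrich-style bracketing argument combined with the maximal inequality of Lemma \ref{lemma:FN}, together with the uniform consistency of the intermediate points from Lemma \ref{lemma:beta}. Your final step — explicitly bounding the remainder as the product of the uniform $o_{p}(1)$ Jacobian gap and $\max_{r}\|\sqrt{n}M_{n}^{r}(\theta,\theta,\beta(\theta))\|$ controlled by the symmetrization machinery under the rate condition $\log(R)/n\to 0$ (note the lemma's stated ``$\log(R)/n\to\infty$'' is evidently a typo for this) — merely makes explicit what the paper leaves implicit after inverting the limiting Jacobian.
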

\begin{proof}
  For each $r= 1, \dots, R$,
  the estimator
  $\hat{\beta}^{r}(\theta, \theta)$
  is the solution to
  the equation 
  $M_{n}^{r}
  (
  \theta, \theta, \beta  
  ) = 0$,
  given $\theta \in \Theta$.
  For any $\theta \in \Theta$ and $r \in \{1, \dots, R\}$,
  Taylor's expansion yields
  \begin{eqnarray*}
    0
    =
    \sqrt{n}
    M_{n}^{r}
    \big (
    \theta, \theta,\beta(\theta)
    \big )
    +
    \partial_{\beta}
    M_{n}^{r}
    \big (
    \theta, \theta,\tilde{\beta}^{r}(\theta)
    \big )
    \sqrt{n}
    \big (
    \hat{\beta}^{r}(\theta, \theta)
    -
    \beta(\theta)
    \big ),
  \end{eqnarray*}
  where
  $\tilde{\beta}^{r}(\theta)$ is 
  between
  $\hat{\beta}^{r}(\theta, \theta)$
  and 
  $\beta(\theta)$.
  As explained in the proof of Lemma \ref{lemma:beta},
  we can show that
  $\partial_{\beta}
  M_{n}^{r}(\theta, \theta,\beta^{r})
  \pto
  \partial_{\beta} M(\theta,\beta^{r})
  $
  uniformly in $(\theta, \beta^{r}) \in \Theta {\times} \mathcal{B}$
  and $r \in \{1, \dots, R\}$.
  Lemma \ref{lemma:beta} implies that
  $\tilde{\beta}^{r}(\theta) \pto
  \beta(\theta)
  $
  uniformly in $(\theta,r) \in \Theta \times  \{1, \dots, R\}$.
  Thus, 
  $
  \partial_{\beta}
  M_{n}^{r}
  \big (
  \theta, \theta,\tilde{\beta}^{r}(\theta)
  \big )
  $
  converges to a non-singular matrix
  $\partial_{\beta}
  M
  \big (
  \theta, \beta(\theta)
  \big )
  $. 
  Therefore,
  the desired conclusion holds.
\end{proof} 

\vspace{0.3cm}
\begin{proof}[\textbf{Proof of Theorem \ref{norm}}]

  Let
  $(\hat{\theta}, Q_{n}, Q)$ be either
  $(\hat{\theta}^{\text{LM}}, Q_{n}^{\text{LM}}, Q^{\text{LM}})$
  or
  $(\hat{\theta}^{\text{W}}, Q_{n}^{\text{W}}, Q^{\text{W}})$.
  Suppose that 
  the estimator $\hat{\theta}$ satisfies that 
  $\partial_{\theta} Q_{n}(\theta, \hat{\theta})
   |_{\theta = \hat{\theta}}
   = o_p(n^{-1/2}) 
  $. 
 Taylor's expansion yields that 
  \[
  o_p(1)
  =
  \sqrt{n}
  \partial_{\theta} Q_{n}(\theta, \hat{\theta})|_{\theta = \theta_0}
  +
  \partial_{\theta}^{2} Q_{n}(\theta, \hat{\theta})|_{\theta = \tilde{\theta}}
  \sqrt{n}
  (\hat{\theta}- \theta^{0}),
  \] 
  where $\tilde{\theta}$ is between $\hat{\theta}$ and $\theta^{0}$.
  Theorem \ref{theorem:Q}
  with the consistency of $\hat{\theta}$
  in Theorem \ref{cons}
  implies that
  $
  \partial_{\theta}^{2} Q_{n}(\theta, \hat{\theta})|_{\theta = \tilde{\theta}}
  \pto
  \partial_{\theta}^{2} Q(\theta^{0})
  $.
  We have 
  \begin{eqnarray*}
    \sqrt{n}
    (\hat{\theta} - \theta_{0})
    = 
    -
    \big \{
    \partial_{\theta}^{2} Q(\theta^{0})
    \big \}^{-1}
    \sqrt{n}
    \partial_{\theta} Q_{n}(\theta, \hat{\theta})|_{\theta = \theta_0}
    +
    o_p(1).
  \end{eqnarray*}
  For the Hessian $\partial_{\theta}^{2} Q(\theta^{0})$, we can show that 
  \begin{eqnarray}
    \label{eq:H-Q} 
    \partial_{\theta}^{2} Q^{\text{LM}}(\theta^{0})
    = 2\Delta' \Omega \Delta
    \ \ \ \ \ \ \mathrm{and} \ \ \ \ \ \
    \partial_{\theta}^{2} Q^{\text{W}}(\theta^{0})
    =
    2\Gamma' \Omega \Gamma.
  \end{eqnarray}
  It remains to consider
  the convergence of 
  $
    \sqrt{n}
    \partial_{\theta} Q_{n}(\theta, \hat{\theta})|_{\theta = \theta_0}
  $
  in distribution.

  First, we consider the LM estimator.
  We can show that
  \begin{eqnarray*}
   \sqrt{n}
    \partial_{\theta} Q_{n}^{\text{LM}}(\theta, \hat{\theta}^{\text{LM}})
    |_{\theta = \theta^{0}}
    &=&
    2
    \big [
    \partial_{\theta} M_{n}
    \big (\theta, \hat{\theta}^{\text{LM}},\hat{\beta} \big)
    |_{\theta = \theta^{0}}
    \big ]'
    \Omega_{n}
   \sqrt{n}
    M_{n}
        \big (\theta^{0}, \hat{\theta}^{\text{LM}},\hat{\beta} \big)
    \\
    &=&
    2
    \big [
    \partial_{\theta} M
    \big (\theta^{0}, \beta^{0} \big)
    \big ]'
    \Omega
   \sqrt{n}
    M_{n}
    \big (\theta^{0}, \hat{\theta}^{\text{LM}},\hat{\beta} \big)
    + o_p(1),
  \end{eqnarray*}
  where the second equality holds due to Proposition 1
  and Assumption \ref{three}(c).
  Lemma \ref{lemma:don} implies that
  the map  
  $(\theta, \beta) \to M_{n}(\theta^{0}, \theta, \beta)$
  is stochastic equicontinuous:
  there exists a $\delta>0$ such that
   \begin{eqnarray*}
    \sup_{(\theta, \beta) \in \mathcal{N}_{\delta}}
    \big \|
    \{M_{n}(\theta^{0}, \theta, \beta)
    -
    \mathbb{E}[M_n(\theta^{0}, \theta, \beta)]
    \}
    - 
    \{M_{n}(\theta^{0}, \theta^{0}, \beta^{0})
    -
    M(\theta^{0}, \theta^{0}, \beta^{0})
    \} 
    \big \|
    = o_p(n^{-1/2}),
  \end{eqnarray*}
  for
  every neighborhood 
  $\mathcal{N}_{\delta} \subset \Theta \times \mathcal{B}$
  satisfying
  that 
  $
  \|\theta_{2} - \theta_{1}\| \le \delta
  $
  and 
  $
  \|\beta_{2} - \beta_{1}\| \le \delta
  $
  for any
  $(\theta_{1}, \beta_{1}),
  (\theta_{2}, \beta_{2})  
  \in
  \mathcal{N}_{\delta}$.
  Because 
  $ \mathbb{E}[M_n(\theta^{0}, \theta, \beta)] = M(\theta^{0}, \beta)$,
  $M_{n}(\theta^{0}, \theta^{0}, \beta^{0}) = M_{n}(\theta^{0}, \beta^{0})$
  and 
  $
   \mathbb{E}[M_n(\theta^{0}, \theta^{0}, \beta)] =0
  $,
  we have 
  \[
   \sqrt{n}
    M_{n}
    \big (\theta^{0}, \hat{\theta}^{\text{LM}},\hat{\beta} \big)
    = 
   \sqrt{n}
    M(\theta^{0}, \hat{\beta})
    +
   \sqrt{n}
    M_{n}
    \big (\theta^{0},  \beta^{0}  \big)
    + o_p(1).
  \]  
  By Taylor's expansion,
  $   \sqrt{n}
    M(\theta^{0}, \hat{\beta})
    =
    \partial_{\beta} M(\theta^{0}, \beta^{0})'
     \sqrt{n} (\hat{\beta} - \beta^{0}) +o_p(1)
  $,
  Applying a similar argument used in Lemma \ref{lemma:BL},
  we can show that
  $
  \sqrt{n}
  M(\theta^{0}, \hat{\beta})
  =
  -
  n^{-1/2}
  \sum_{i=1}^{n}
  m(y_{i}, z_{i}, \beta^{0})
  +o_p(1)
  $.  
  This together with (\ref{eq:H-Q})
  yields 
  \[                              
   \sqrt{n}
   (\hat{\theta}^{\text{LM}} - \theta^{0})
   =
   (\Delta' \Omega \Delta)^{-1}
   \Delta' \Omega
   \sqrt{n}
   \bar{\xi}_{n, R}
    + o_p(1),
  \]
  where
  $
  \bar{\xi}_{n, R}
  :=
  (nR)^{-1}
  \sum_{r=1}^{R}
  \sum_{i=1}^{n}
  \sum_{t=1}^{T}     
  \xi^r_{it}
  $.
  The desired result follows from Assumption \ref{four}(c).

  Next, we consider the case of the Wald estimator. We have  
  \begin{eqnarray*}
    \partial_{\theta} Q_{n}^{\text{W}}(\theta, \hat{\theta}^{\text{W}})
    |_{\theta = \hat{\theta}^{\text{W}}}
    =
    2 
    \big [
    \partial_{\theta}
    \bar{\beta}^{R}(\theta, \hat{\theta}^{\text{W}})
    |_{\theta = \hat{\theta}^{\text{W}}}
    \big ]'
    \Omega_{n}
    \sqrt{n}    
    \big [
    \bar{\beta}^{R}(\hat{\theta}^{\text{W}}, \hat{\theta}^{\text{W}})- \hat{\beta}
    \big ].
  \end{eqnarray*}
  As shown in the proof of Theorem \ref{theorem:Q},
  $
    \partial_{\theta}
    \bar{\beta}^{R}(\theta, \theta^*)|_{\theta=\theta^*}
    \pto
    \partial_{\theta}
    \beta(\theta^*)
  $
  uniformly in $\theta^* \in \Theta$.
  Also,
  $\hat{\theta}^{\text{W}} \pto \theta^{0}$
  from Theorem \ref{cons}
  and
  $\Omega_{n} \pto \Omega$
  under Assumption \ref{three}(c). It follows that 
  \begin{eqnarray*}
    \partial_{\theta} Q_{n}^{\text{W}}(\theta, \hat{\theta}^{\text{W}})
    |_{\theta = \hat{\theta}^{\text{W}}}
    =
    2 
    \big [
    \partial_{\theta}
    \beta(\theta^{0})    
    \big ]'
    \Omega
    \sqrt{n}    
    \big [
    \bar{\beta}^{R}(\hat{\theta}^{\text{W}}, \hat{\theta}^{\text{W}}) -\hat{\beta}
    \big ] + o_p(1).
  \end{eqnarray*}
  Given the consistency of $\hat{\theta}^{\text{W}}$,
  Lemma \ref{lemma:BL} implies that 
  \begin{eqnarray*}
    \sqrt{n}    
    [\bar{\beta}^{R}(\hat{\theta}^{\text{W}}, \hat{\theta}^{\text{W}}) - \hat{\beta}]
    =
    -
    [\partial_{\beta}M(\theta^{0}, \beta^{0})]^{-1}
    \sqrt{n} \bar{\xi}_{n,R}
    +o_p(1).
  \end{eqnarray*}
  Collecting the results so far, we obtain
  \begin{eqnarray*}
    \sqrt{n}
    (\hat{\theta}^{\text{W}} - \theta^{0})
    =
    -
    (\Gamma' \Omega \Gamma )^{-1} \Gamma'\Omega
    [\partial_{\beta}M(\theta^{0}, \beta^{0})]^{-1}
    \sqrt{n} \bar{\xi}_{n,R}
    + o_p(1).
  \end{eqnarray*}
  The desired result follows from Assumption \ref{four}(c). 
\end{proof}
\vspace{0.3cm}

We provide a technical lemma to obtain the asymptotic distribution of the proposed estimator. 
To this end, we will use some notations and results from the literature on empirical process only in the lemma below.    
Let $(z, u)$ be random variables with a probability distribution $\mathbb{P}$,
where $z$ is a random vector with support $\mathcal{Z}$
and 
$u$ is a standard uniform random variable. 
We assume that $z$ and $u$ are statistically independent.
Let $\mathcal{B}$ and $\Theta$ are compact parameter space with finite dimensions $d_{\beta}$ and $d_{\theta}$, respectively.
Define measurable functions
$\mu_{\beta}:\mathcal{Z}  \to \mathbb{R}$
and  
$\phi_{\theta}: \mathcal{Z} \to \mathbb{R}$
for $\beta \in \mathcal{B}$ and $\theta \in \Theta$.
Let $\mathcal{F}$ be a collection of measurable functions $f_{\theta, \beta}:\mathcal{Z}\times[0,1] \to \mathbb{R}$
indexed by prameters $(\theta, \beta) \in \Theta \times \mathcal{B}$, given by 
$f_{\theta, \beta}(z, u):= \mu_{\beta}(z)\1[u \le \phi_{\theta}(z)]$
for $(z, u) \in \mathcal{Z} \times [0,1]$.
For some $\epsilon > 0$, let 
$N_{[ \ ]}(\epsilon, \mathcal{F}, L_{2}(\mathbb{P}))$ be the braketing number 
and 
the bracketing integral is given by 
$J_{[ \ ]}(\delta, \mathcal{F}, L_{2}(\mathbb{P})):= \int_{0}^{\delta} \sqrt{ \log N_{[ \ ]}(\epsilon, \mathcal{F}, L_{2}(\mathbb{P}))} d \epsilon$
for some $\delta>0$. 
The lemma below will prove that the collection $\mathcal{F}$ is Donsker by showing that the bracketing integral is finite.
A similar result can be found in Lemma 1 of \cite{SantAnnaSong2019}.
We use $\mathrm{diam}(A)$ to denote the diameter of a set $A$.
  
\vspace{0.3cm}
\begin{lemma}
	\label{lemma:don}
	Assume that
	(i) the functions $\mu_{\beta}$ and
	$\phi_{\theta}$
	are twice continuously differentiable
	for $\beta \in \mathcal{B}$ and $\theta \in \Theta$ respectively,
	(ii) 
	$\sup_{\beta \in \mathcal{B}}\|\partial_{\beta} \mu_{\beta}(z) \| \le \nabla \bar{\mu}(z)$
	and 
	$\sup_{\theta \in \Theta}\|\partial_{\theta} \phi_{\theta}(z) \| \le
	\nabla \bar{\phi}(z)$
	with 
	$\E[ \nabla \bar{\mu}(z)^2] < \infty$
	and 
	$\E[ \nabla \bar{\phi}(z)^2] < \infty$,
	(iii) the parameter spaces $\Theta$ and $\mathcal{B}$ are comapct. 
	Then, 
	the collection of functions 
	$\mathcal{F}$  is $\mathbb{P}$-Donsker. 
	
\end{lemma}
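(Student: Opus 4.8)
The plan is to verify the standard sufficient condition for the Donsker property, namely finiteness of the bracketing integral $J_{[\ ]}(\delta,\mathcal{F},L_{2}(\mathbb{P}))$, by bounding the bracketing numbers of $\mathcal{F}$. The key structural feature I would exploit is that each member factors as $f_{\theta,\beta}(z,u)=\mu_{\beta}(z)\cdot\1[u\le\phi_{\theta}(z)]$, so I would control the smooth factor $\{\mu_\beta\}$ and the indicator factor $\{\1[u\le\phi_\theta]\}$ separately and then take products of brackets.

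First, for $\mathcal{G}_{1}:=\{\mu_{\beta}:\beta\in\mathcal{B}\}$, the mean value theorem together with condition (ii) gives $|\mu_{\beta_1}(z)-\mu_{\beta_2}(z)|\le\nabla\bar\mu(z)\,\|\beta_1-\beta_2\|$, so $\mathcal{G}_1$ is Lipschitz in its parameter with an $L_2(\mathbb{P})$ envelope for the Lipschitz constant. By the standard bracketing bound for parametric Lipschitz classes over a compact index set (Theorem 2.7.11 of \cite{van1996weak}), this yields $\log N_{[\ ]}(\epsilon,\mathcal{G}_1,L_2(\mathbb{P}))\lesssim d_{\beta}\log(1/\epsilon)$.

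The crux is the indicator class $\mathcal{G}_{2}:=\{\1[u\le\phi_{\theta}(\cdot)]:\theta\in\Theta\}$, which I would bracket directly. Using condition (ii), any $\theta$ in a ball of radius $\eta$ about a center $\theta_k$ satisfies the pointwise bound $|\phi_\theta(z)-\phi_{\theta_k}(z)|\le\eta\,\nabla\bar\phi(z)$, hence
\begin{equation*}
\1[u\le\phi_{\theta_k}(z)-\eta\nabla\bar\phi(z)]\le\1[u\le\phi_\theta(z)]\le\1[u\le\phi_{\theta_k}(z)+\eta\nabla\bar\phi(z)].
\end{equation*}
The two outer indicators form a bracket whose squared $L_2(\mathbb{P})$ size is, integrating out the uniform $u$ first (using independence of $z$ and $u$), at most $\E[2\eta\nabla\bar\phi(z)]\le2\eta(\E[\nabla\bar\phi(z)^2])^{1/2}$. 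Choosing $\eta\asymp\epsilon^{2}$ produces $\epsilon$-brackets, and since $\Theta$ is compact of dimension $d_\theta$ the number of balls needed is $\lesssim\epsilon^{-2d_\theta}$, so $\log N_{[\ ]}(\epsilon,\mathcal{G}_2,L_2(\mathbb{P}))\lesssim d_\theta\log(1/\epsilon)$. The main obstacle is precisely that the indicator factor is discontinuous, so its $L_2$ modulus of continuity in $\theta$ is only Hölder of order $1/2$; this costs a factor of two in the exponent of $1/\epsilon$, but crucially the bracketing number remains of polynomial order.

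Finally, I would combine the pieces. Since the indicator factor is uniformly bounded by $1$ and $\mathcal{G}_1$ admits the square-integrable envelope $\bar\mu(z):=\sup_{\beta}|\mu_\beta(z)|$ (which lies in $L_2(\mathbb{P})$ because $|\mu_\beta(z)|\le|\mu_{\beta_0}(z)|+\mathrm{diam}(\mathcal{B})\nabla\bar\mu(z)$ for a fixed $\beta_0$, matching the application where $m$ has the envelope $\bar m_i$ of Assumption \ref{two}(c)), products of brackets for $\mathcal{G}_1$ and $\mathcal{G}_2$ yield brackets for $\mathcal{F}$ with $\log N_{[\ ]}(C\epsilon,\mathcal{F},L_2(\mathbb{P}))\lesssim\log N_{[\ ]}(\epsilon,\mathcal{G}_1,L_2(\mathbb{P}))+\log N_{[\ ]}(\epsilon,\mathcal{G}_2,L_2(\mathbb{P}))\lesssim(d_\beta+d_\theta)\log(1/\epsilon)$. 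Consequently $J_{[\ ]}(\delta,\mathcal{F},L_2(\mathbb{P}))=\int_{0}^{\delta}\sqrt{\log N_{[\ ]}(\epsilon,\mathcal{F},L_2(\mathbb{P}))}\,d\epsilon\lesssim\int_{0}^{\delta}\sqrt{\log(1/\epsilon)}\,d\epsilon<\infty$, and the bracketing central limit theorem (e.g.\ \cite{van1996weak}) delivers that $\mathcal{F}$ is $\mathbb{P}$-Donsker, paralleling Lemma 1 of \cite{SantAnnaSong2019}.
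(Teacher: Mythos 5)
Your proposal is correct in substance and rests on the same key mechanism as the paper's proof: the mean-value bounds $|\phi_\theta(z)-\phi_{\theta_k}(z)|\le \eta\,\nabla\bar\phi(z)$ and $|\mu_\beta(z)-\mu_{\beta_l}(z)|\le\epsilon\,\nabla\bar\mu(z)$, combined with the independence of the uniform $u$ from $z$, so that the conditional length of the indicator bracket is at most $2\eta\nabla\bar\phi(z)$; this yields $L_2(\mathbb{P})$-brackets of size of order $\eta^{1/2}$, hence polynomial bracketing numbers, entropy $\lesssim\log(1/\epsilon)$, a finite bracketing integral, and the Donsker conclusion. Where you genuinely differ is the packaging: the paper partitions $\Theta\times\mathcal{B}$ jointly and constructs brackets $f_{k,l}^{\pm}$ for the product class in a single step, handling the sign of $\mu_{\beta_l}(z)$ explicitly (the upper threshold $\phi_k^{+}$ is used on $\{\mu_{\beta_l}\ge 0\}$ and the lower one $\phi_k^{-}$ on $\{\mu_{\beta_l}<0\}$), whereas you bracket the smooth class $\mathcal{G}_1$ and the indicator class $\mathcal{G}_2$ separately and then invoke a preservation-of-brackets argument for products. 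Your modular route makes the H\"older-$1/2$ loss for the indicator class (the choice $\eta\asymp\epsilon^2$ and the $\epsilon^{-2d_\theta}$ ball count, both correct) completely transparent; the paper's direct construction avoids having to justify the product step at all.

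That product step is the one place your argument, as stated, is looser than you suggest: ``indicator bounded by $1$ plus square-integrable envelope for $\mathcal{G}_1$'' does not by itself give additivity of bracketing entropies. The difference of the product brackets is bounded by $(\mu^{+}-\mu^{-})+\bar\mu\,(g^{+}-g^{-})$, and since $(g^{+}-g^{-})^2=g^{+}-g^{-}$ for nested indicators, the cross term requires controlling $\E\big[\bar\mu(z)^2\,\1[\phi_k^{-}(z)<u\le\phi_k^{+}(z)]\big]\le 2\eta\,\E[\bar\mu(z)^2\nabla\bar\phi(z)]$ --- a joint moment that hypotheses (i)--(iii) do not deliver; you need, e.g., $\E[\bar\mu(z)^2\nabla\bar\phi(z)]<\infty$ (implied via Cauchy--Schwarz by $\E[\bar\mu(z)^4]<\infty$ together with $\E[\nabla\bar\phi(z)^2]<\infty$), or boundedness of $\bar\mu$. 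In fairness, the paper's own proof elides the very same point: its displayed bound factorizes $\E[\mu_{\beta_l}(z)^2\,\1[\phi_k^{-}(z)<u\le\phi_k^{+}(z)]]$ as $\E[\mu_{\beta_l}(z)^2]\cdot\E[\1[\,\cdot\,]]$, which presumes an independence between $\mu_{\beta_l}(z)$ and the bracketing event that does not hold, as both depend on $z$. So both arguments implicitly rely on the same mild extra integrability, which is harmless in the intended application (where the role of $\bar\mu$ is played by the envelope $\bar m_i$ of Assumption \ref{two}(c)); aside from this shared elision, your proof is sound and reaches the paper's conclusion by an equivalent route.
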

\begin{proof}
	Let $\epsilon>0$ be an arbitrary small constant.
	Consider partitions 
	$\{\Theta_{k}\}_{k=1}^{K}$ of $\Theta$
	and $\{\mathcal{B}_{l}\}_{l=1}^{L}$ of $\mathcal{B}$.
	Under the condition (iii),
	there exists finite constants 
	$K \le \big ( \mathrm{diam}(\Theta) /\epsilon \big )^{d_{\theta}}$
	and 
	$L \le \big ( \mathrm{diam}(\mathcal{B}) /\epsilon \big )^{d_{\beta}}$
	such that 
	$\mathrm{diam}(\Theta_{k}) \le \epsilon$  
	and  
	$\mathrm{diam}(\mathcal{B}_{l}) \le \epsilon$ 
	for every $k=1, \dots, K$ and  $l = 1, \dots, L$.
	Fix
	$(k, l) \in \{1, \dots, K\} \times \{1, \dots, L\}$
	and 
	pick up
	some elements 
	$(\theta_{k}, \beta_{l}) \in \Theta_{k} \times \mathcal{B}_{l}$.
	Then, we can show that,
	for any $(\theta,\beta) \in \Theta_{k} \times \mathcal{B}_{l}$
	and  
	for any $z \in \mathcal{Z}$,
	\[
	\phi_{k}^{-}(z) \le \phi_{\theta}(z) \le \phi_{k}^{+}(z)
	\hspace{1cm} \mathrm{and} \hspace{1cm} 
	\mu_{l}^{-}(z) \le \mu_{\beta}(z) \le \mu_{l}^{+}(z) ,
	\]
	where 
	$
	\phi_{k}^{\pm}(z):=\phi_{\theta_{k}}(z) \pm \epsilon \nabla \bar{\phi}(z) 
	$ and $
	\mu_{l}^{\pm}(z):=\mu_{\beta_{l}}(z) \pm \epsilon  \nabla \bar{\mu}(z) .
	$
	We can show that
	\begin{eqnarray*}
		\mu_{\beta_{l}}(z)
		\1[ u \le \phi_{\theta}(z)]
		-
		\epsilon  \nabla \bar{\mu}(z) 
		\le 
		f_{\theta, \beta}(z)
		\le 
		\mu_{\beta_{l}}(z)
		\1[ u \le \phi_{\theta}(z)]
		+
		\epsilon  \nabla \bar{\mu}(z) ,
	\end{eqnarray*}
	for 
	$(\theta, \beta) \in \Theta_{k}\times \mathcal{B}_{l}$.
	It follows that 
	$
	f_{k,l}^{-}
	\le 
	f_{\theta, \beta} 
	\le 
	f_{k,l}^{+} 
	$
	for 
	$(\theta, \beta) \in \Theta_{k}\times \mathcal{B}_{l}$,
	where 
	\begin{eqnarray*}
		f_{k,l}^{+}(z, u)
		&:=&
		\mu_{\beta_{l}}(z)
		\big \{
		\1[  \mu_{\beta_{l}}(z) \ge 0, u \le \phi_{k}^{+}(z)]
		+
		\1[  \mu_{\beta_{l}}(z) < 0, u \le \phi_{k}^{-}(z)]
		\big \}
		+
		\epsilon
		\nabla \bar{\mu}(z),\\
		f_{k,l}^{-}(z, u)
		&:=&
		\mu_{\beta_{l}}(z)
		\big \{
		\1[  \mu_{\beta_{l}}(z) \ge 0, u \le \phi_{k}^{-}(z)]
		+
		\1[  \mu_{\beta_{l}}(z) < 0, u \le \phi_{k}^{+}(z)]
		\big \}
		-
		\epsilon
		\nabla \bar{\mu}(z).
	\end{eqnarray*}
	Using the triangle inequality, we can show that
	\begin{eqnarray*}
		|
		f_{k,l}^{+}(z,u)
		-
		f_{k,l}^{-}(z,u)
		|
		\le 
		| 
		\mu_{\beta_{l}}(z)
		| 
		\1[\phi_{k}^{-}(z) < u \le \phi_{k}^{+}(z)]
		+
		2 \epsilon \nabla \bar{\mu}(z).
	\end{eqnarray*}
	Thus, an application of $c_r$-inequality and Holder's inequality yields 
	\[
	\E
	| 
	f_{k,l}^{+}(z, u)
	-
	f_{k,l}^{-}(z, u)
	|^2
	\le 
	2
	\big \{
	\E
	\big [
	\mu_{\beta_{l}}(z)^2]
	\E\big [
	\1[\phi_{k}^{-}(z) < u \le \phi_{k}^{+}(z)]
	\big]
	+
	(2\epsilon)^2
	\E \big [ 
	\nabla \bar{\mu}(z)^2
	\big ]
	\big \}.
	\]
	Because $u$ is uniformly distributed and independent of $z$, 
	we have that 
	$$ \E[\1[\phi_{k}^{-}(z) < u \le \phi_{k}^{+}(z)]| z]
	\le \phi_{k}^{+}(z)- \phi_{k}^{-}(z) = 2 \epsilon \nabla \bar{\phi}, 
	$$
	which implies that   
	$
	\E
	| 
	f_{k,l}^{+}(z, u)
	-
	f_{k,l}^{-}(z, u)
	|^2
	\lesssim \epsilon
	$.
	Thus, 
	the bracketing number
	$N_{[ \ ]}(\epsilon, \mathcal{F}, L_{2}(\mathbb{P}))$
	is of polynomial order $(1/\epsilon)$
	and 
	the entropy is of smaller order than $\log(1/\epsilon)$.
	Therefore, the bracketing entropy satisfies  
	that $J_{[ \ ]}(\delta, \mathcal{F},  L_{2}(\mathbb{P})) \lesssim \int_{0}^{\delta} \sqrt{\log(1/\epsilon)} d \epsilon$
	for any $\delta \in (0,1)$
	and
	$
	J_{[ \ ]}(\delta, \mathcal{F},  L_{2}(\mathbb{P}))
	\to 0
	$
	as $\delta \to 0$.
	Hence, $\mathcal{F}$ is $\mathbb{P}$-Donsker
	\citep[see Section 2.5 of][for more details]{van1996weak}.
\end{proof}


\clearpage
\setcounter{section}{0} 
\setcounter{equation}{0} 
\setcounter{lemma}{0}\setcounter{page}{1}\setcounter{proposition}{0} %
\renewcommand{\thepage}{B-\arabic{page}} \renewcommand{\theequation}{B.%
\arabic{equation}}\renewcommand{\thelemma}{A.\arabic{lemma}} 
\renewcommand{\theproposition}{B.\arabic{proposition}} 

\section*{Appendix B. Tables}

This appendix contains Tables 1-8 referenced in Section 5.2 of the main text.

\begin{table}[ht]
	\centering
	\caption{Simulation Results for GII-COV}
	\begin{tabular}{lcccccccccc}
		\hline\hline
		&& \multicolumn{4}{c}{$n=200$} &       & \multicolumn{4}{c}{$n=1000$}       \\ \cline{2-6} \cline{8-11}
		& & MBIAS & AB    & STD   & CV95  &       & MBIAS & AB    & STD   & CV95 \\
		\hline
		Model 1&$\gamma$ & 0.0052 & 0.0110 & 0.0281 &  0.9570 &       & -0.0001 & 0.0015 & 0.0049 & 0.9590 \\
		&$\rho$   & -0.0043 & 0.0183 & 0.0419 &  0.9430 &       & -0.0009 & 0.0028 & 0.0075 &  0.9440 \\ 
		&      &       &       &       &       &       &       &       &       &             \\
		Model 2 &$\gamma$& 0.0038 & 0.0105 & 0.0246  & 0.9440 &       & 0.0002 & 0.0016 & 0.0045 & 0.9460 \\
		&$\alpha$ & 0.0039 & 0.0230 & 0.0463  & 0.9370 &       & 0.0008 & 0.0041 & 0.0107 & 0.9510 \\
		&$\rho$& -0.0034 & 0.0174 & 0.0341  & 0.9410 &       & -0.0009 & 0.0032 & 0.0097 & 0.9600 \\
		&&       &       &       &       &              &       &       &       &       \\
		Model 3&$\gamma$ & 0.0033 & 0.0115 & 0.0286 &  0.9500 &       & 0.0002 & 0.0021 & 0.0071 & 0.9610 \\
		&$\alpha$  & 0.0057 & 0.0241 & 0.0519 &  0.9530 &       & 0.0011 & 0.0053 & 0.0141 &  0.9460 \\
		&$\rho$ & -0.0060 & 0.0193 & 0.0443  & 0.9510 &       & -0.0011 & 0.0037 & 0.0110& 0.9600 \\
		\hline\hline
	\end{tabular} 
        \begin{minipage}{0.9\linewidth}
          \textit{Notes.}
          The number of replications for the Monte Carlo simulation
          is 1,000.
          The cross-sectional sample size $n$ is
          200 or 1,000.
          We report the mean bias (MBIAS),
          mean absolute bias (AB), standard deviation (STD)
          and the Monte Carlo coverage of a 95\% Wald-confidence interval
          (CV95). {For the GII-COV procedure, we used $R=10$ simulated samples across every Monte Carlo design.}
        \end{minipage}          
  	\label{tab1}%
\end{table}%

\begin{table}[!htbp]
	\centering
	\caption{Simulation Results for GII-1}
	\begin{tabular}{lcccccccccc}
		\hline\hline
		&& \multicolumn{4}{c}{$n=200$} &       & \multicolumn{4}{c}{$n=1000$}       \\ \cline{2-6} \cline{8-11}
		& & MBIAS & AB    & STD   & CV95  &       & MBIAS & AB    & STD   & CV95 \\		\hline
		Model 1  &$\gamma$        & 0.0541 & 0.0800 & 0.0918 & 0.9240 &       & 0.0061 & 0.0272 & 0.0337  & 0.9530 \\
		&$\rho$  & -0.0300 & 0.0968 & 0.1188 &  0.9450 &       & -0.0228 & 0.0438 & 0.0499 &  0.9310 \\
		& &       &       &       &       &       &       &       &       &                \\
		Model 2&$\gamma$  & 0.0532 & 0.0833 & 0.0990 & 0.9140 &       & 0.0055 & 0.0298 & 0.0370 & 0.9510 \\
		&$\alpha$ & 0.0073 & 0.0836 & 0.1049 & 0.9610 &       & 0.0006 & 0.0339 & 0.0425 & 0.9570 \\
		&$\rho$ & -0.0293 & 0.1062 & 0.1309 &  0.9470 &       & -0.0223 & 0.0479 & 0.0561 & 0.9280 \\
		&&       &       &       &       &       &       &       &       &               \\
		Model 3& $\gamma$& 0.0132 & 0.1478 & 0.3140 & 0.9910 &       & 0.0094 & 0.0611 & 0.0784 & 0.9570 \\
		&$\alpha$ & -0.0003 & 0.1898 & 0.2421 &  0.9580 &       & -0.0183 & 0.0793 & 0.0990 & 0.9470 \\
		&$\rho$  & 0.0067 & 0.2704 & 0.4558 &  0.9860 &       & 0.0164 & 0.1205 & 0.1536 & 0.9440 \\
		\hline\hline
	\end{tabular}%
	\label{tab2} 
        \begin{minipage}{0.9\linewidth}
          \small \textit{Notes.}
          For $n=200$, the bandwidth is $\lambda_n=.08$
          and for $n=1{,}000$, the bandwidth is $\lambda_n=.045$.
          Also see Table \ref{tab1}. {For the GII-1 procedure, we used $R=10$ simulated samples across every Monte Carlo design.}
        \end{minipage}          
\end{table}%

\begin{table}[ht]
		\centering
		\caption{Simulation Results for GII-2}
		\begin{tabular}{lcccccccccc}
		\hline\hline
		&& \multicolumn{4}{c}{$n=200$} &       & \multicolumn{4}{c}{$n=1000$}       \\ \cline{2-6} \cline{8-11}
		& & MBIAS & AB    & STD   & CV95  &       & MBIAS & AB    & STD   & CV95 \\		\hline
			Model 1  &$\gamma$        & 0.0539 & 0.0788 & 0.0939 & 0.9180 &       & 0.0091 & 0.0271 & 0.0333  & 0.9400 \\
			&$\rho$  & -0.0010 & 0.0936 & 0.1185 &  0.9550 &       & -0.0011 & 0.0386 & 0.0485 &  0.9490 \\
			& &       &       &       &       &       &       &       &       &                \\
			Model 2&$\gamma$  & 0.0495 & 0.0791 & 0.0928 & 0.9260 &       & 0.0077 & 0.0282 &0.0350 & 0.9460 \\
			&$\alpha$         & 0.0111 & 0.0820 & 0.1026 & 0.9400 &       & 0.0008 & 0.0324 & 0.0405 & 0.9490 \\
			&$\rho$           & -0.0064 & 0.1022 & 0.1293 &  0.9510 &     & -0.0023 & 0.0437 & 0.0547 & 0.9430 \\
			&&       &       &       &       &       &       &       &       &               \\
			Model 3& $\gamma$& 0.0180 &   0.1011 &   0.1389 &   0.9570 &       &  0.0057  &  0.0434  &  0.0549&    0.9470\\
			&$\alpha$ & 0.0002 &   0.1425&    0.1803 &   0.9540 &               & -0.0071   & 0.0565 &   0.0705  &  0.9440 \\
			&$\rho$  & 0.0243 &   0.2266  &  0.2979 &   0.9400 &                & 0.0042  &  0.0980  &  0.1152  &  0.9420 \\
			\hline\hline
		\end{tabular}%
		\label{tab3}%

                \begin{minipage}{0.9\linewidth}
          \small \textit{Notes.}
          For the first-step of GII-2,
          the bandwidth is $\lambda_n=.03$ and $R=10$ simulated data replications are used; in the second-step, the bandwidth is $\lambda_n=.003$ and we use $R=300$ simulated data sets.          
          Also see  Table \ref{tab1}.
        \end{minipage}          
\end{table}%

\begin{table}[ht]
	\centering
	\caption{Simulation Results for Nelder-Mead Simplex-based Search}
	\begin{tabular}{rcccccccccc}
          \hline\hline
		&& \multicolumn{4}{c}{$n=200$} &       & \multicolumn{4}{c}{$n=1000$}       \\ \cline{2-6} \cline{8-11}
		& & MBIAS & AB    & STD   & CV95  &       & MBIAS & AB    & STD   & CV95 \\		\hline
		Model 1  &$\gamma$        & 0.0346  & 0.0893 & 0.1357 & 0.9580 &       & 0.0220 & 0.0443 &0.0559   &  0.9370 \\
		&$\rho$                   & -0.0231 & 0.0535 & 0.0725 &  0.9440 &       & 0.0062 & 0.0311 &  0.0430 &   0.9340 \\
		& &       &       &       &       &       &       &       &       &                \\
		Model 2& $\gamma$&   0.0459 & 0.0703& 0.0826& 0.9180 &       & 0.0225  &  0.0347  &  0.0384  &  0.9080 \\
		&$\alpha$ & 0.0041 & 0.0275 & 0.0509& 0.9280 &       & 0.0063   & 0.0248  &  0.0352  &  0.9190 \\
		&$\rho$  &-0.0371  &  0.0617&  0.0746 &  0.9300 &       & -0.0007 &   0.0373&    0.0509  &  0.9310 \\
		&&       &       &       &       &       &       &       &       &               \\
		Model 3& $\gamma$&   0.0198  &  0.0578 &   0.0859 &   0.9370 &       &  -0.0035  &  0.0498  &  0.0638 &   0.9440 \\
		&$\alpha$ & 0.0016  &  0.0048  &  0.0086 &   0.9550 &       & 0.0018&    0.0137 &   0.0274 &   0.9540 \\
		&$\rho$  &-0.0507  &  0.0508  &  0.0182  &  0.1550 &       & 0.0096  &  0.0271  &  0.0417  &  0.9390 \\
		\hline\hline
	\end{tabular}%
	\label{tab4}%

        \begin{minipage}{0.9\linewidth}
          \small \textit{Notes.}
           See  Table \ref{tab1}. {$R=10$ simulated samples across every Monte Carlo design.}
        \end{minipage}          
\end{table}%

\begin{table}[ht]
	\centering
	\caption{Simulation Results for Evolutionary Algorithm (Patternsearch)}
	\begin{tabular}{rcccccccccc}
		\hline\hline
		&& \multicolumn{4}{c}{$n=200$} &       & \multicolumn{4}{c}{$n=1000$}       \\ \cline{2-6} \cline{8-11}
		& & MBIAS & AB    & STD   & CV95  &       & MBIAS & AB    & STD   & CV95 \\		\hline
		Model 1  &$\gamma$        &0.2101 & 0.2335 & 0.2687 & 0.8680 &       & 0.0376 & 0.0578 & 0.0765  & 0.9360 \\
		&$\rho$  & 0.0826 & 0.1374 & 0.1581 &  0.8990 &       & 0.0154 & 0.0465 & 0.0591 &  0.9320 \\
		& &       &       &       &       &       &       &       &       &                \\
		Model 2&$\gamma$  & 0.0761  &  0.0951   & 0.1264  &  0.8990 &       & 0.0234  &  0.0339  &  0.0407  &  0.9140 \\
		&$\alpha$         & 0.0279   & 0.0854   & 0.1095  &  0.9230 &       & 0.0089  &  0.0337  &  0.0434  &  0.9380 \\
		&$\rho$           & -0.0006   & 0.1161 &   0.1494  &  0.9480 &     & -0.0006  &  0.0454 &   0.0576  &  0.9510 \\
		&&       &       &       &       &       &       &       &       &               \\
		Model 3& $\gamma$& 0.1110  &  0.1622  &  0.3047 &   0.9680 &       & 0.0076 &   0.0517  &  0.0722  &  0.9520 \\
		&$\alpha$ & 0.0532  &  0.1914  &  0.2945  &  0.9730 &       & 0.0021 &   0.0765  &  0.0996  &  0.9520 \\
		&$\rho$  &-0.0106  &  0.3228  &  0.4076  &  0.9500 &       & 0.0585 &   0.1376 &   0.1652  &  0.9290 \\
		\hline\hline
	\end{tabular}%
	\label{tab5}%

        \begin{minipage}{0.9\linewidth}
          \small \textit{Notes.}
           See  Table \ref{tab1}. {$R=10$ simulated samples across every Monte Carlo design.}
        \end{minipage}          
\end{table}%

\vspace{0.5cm}

\begin{table}
	\centering
	\caption{Comparison of Bias and Standard Deviation between GII-COV and GII-1. 
        }
        (MBIAS and STD of GII-1 relative to these of GII-COV) \\
	\begin{tabular}{rcccccc}
		\hline\hline
          &     & \multicolumn{2}{c}{$n=200$}     &   & \multicolumn{2}{c}{$n=1000$}   \\  
          \cline{3-4}
          \cline{6-7}
		& & MBIAS &STD&  &MBIAS&STD\\\hline 
		Model 1 &  $\gamma$     & 10.40 & 3.27   &   & -61.00 &6.88\\
		&  $\rho$     & 0.71 &  2.84  &   & 25.33 &6.65\\
		 &      &   &  & & & \\
		Model 2 & $\delta$      & 14.00    &   4.02   & & 27.5 &8.22\\
		&   $\alpha$    & 1.87 &  2.27  &   & 0.75 &3.97\\
		&  $\rho$     & 8.62 &  3.84  &   & 24.78 &5.78\\
		 &      &   &  & & & \\
		Model 3 &   $\gamma$    & 4.00     &  10.98   &  & 47.00& 11.04\\
		&   $\alpha$    & -0.05 &  4.66 &    & -16.64& 7.20\\
		&  $\rho$     & 1.12 &   10.29  &  & -14.91 &13.84\\
		\hline\hline
	\end{tabular}%
	\label{tab7}%

        \begin{minipage}{0.55\linewidth}
          \small \textit{Notes.}
          We report the mean bias (MBIAS) and standard deviation (STD)
          of GII-1 relative to those of GII-COV.
          The number of replications for the Monte Carlo simulation
          is 1,000.
          The cross-sectional sample size, $n$, is
          200 or 1,000.
        \end{minipage}          

\end{table}%

\clearpage

\begin{table}
	\centering
	\caption{Comparison of Bias and Standard Deviation between GII-COV and GII-2. 
        }
        (MBIAS and STD of GII-2 relative to these of GII-COV) \\
	\begin{tabular}{lcccccc}
		\hline\hline
          &     & \multicolumn{2}{c}{$n=200$}     &   & \multicolumn{2}{c}{$n=1000$}   \\  
          \cline{3-4}
          \cline{6-7}
		& & MBIAS &STD&  &MBIAS&STD\\\hline 
		Model 1 &  $\gamma$     & 10.37 & 3.34   &   & -91.00 &6.80\\
		&  $\rho$               & 0.24 &  2.82  &   & 1.22 &6.47\\
		 &      &   &  & & & \\
		Model 2 & $\delta$      & 13.03    &   3.77   & & 38.50 &7.78\\
		&   $\alpha$            & 2.84 &  2.22  &   & 1.00 &3.78\\
		&  $\rho$               & 1.88 &  3.79  &   & 2.56 &5.64\\
		 &      &   &  & & & \\
		Model 3 &   $\gamma$    & 5.45     & 4.86    &  & 28.50& 7.73\\
		&   $\alpha$            & -0.67 &  3.47 &       & -6.45& 4.96\\
		&  $\rho$               & 3.63 &   6.71  &      & -3.82 &10.47\\
		\hline\hline
	\end{tabular}%
	\label{tab8}%

        \begin{minipage}{0.55\linewidth}
          \small \textit{Notes.}
          We report the mean bias (MBIAS) and standard deviation (STD)
          of GII-2 relative to those of GII-COV.
          The number of replications for the Monte Carlo simulation
          is 1,000.
          The cross-sectional sample size, $n$, is
          200 or 1,000.
        \end{minipage}          
\end{table}%

\begin{table}
	\centering
	\caption{Comparison of Raw Computing Time (in seconds)}
	\begin{tabular}{cccc}
		\hline\hline
		\hspace{1.8cm} &  \hspace{1.5cm}    &\ \ $n=200$ \ \ &\ \  $n=1000$ \ \ \\\hline
		Model 1 	&  GII-COV    & 0.3353&	0.5734   \\
		&  GII-1    &0.1134&	0.2738\\
		&  GII-2    & 6.5922&	10.6386
		\\
		&  NM    &0.3148&	0.7579
		\\
		&  PS    & 0.3873&	0.8722
		\\\hline
		Model 2 	&  GII-COV    & 0.2283&	0.7783
		\\
		&  GII-1     & 0.0915&	0.1871
		\\&  GII-2     & 7.8088&	11.2114
		\\
		&  NM    & 0.3808&	0.9430
		\\
		&  PS    & 0.5561&	1.4457
		\\\hline
		Model 3 	&  GII-COV    & 0.0981&	0.4053
		\\
		&  GII-1    & 0.0473&	0.0635
		\\&  GII-2    & 2.1736&	3.9593
		\\
		&  NM    & 0.1456&	0.2977
		\\
		&  PS    & 0.2125&	0.5160
		\\\hline
		\hline
	\end{tabular}%
	\label{tab6}%

        \begin{minipage}{0.55\linewidth}
          \small \textit{Notes.}
          The entries represent the average
          execution time (in seconds)
          across one-hundred
          Monte Carlo replications.
          We report the results for 
           GII-COV,
          naive implementation of GII-K (GII-1),          
          two-step version of GII-K (GII-2),
          Nelder-Mead simplex algorithm (NM),
          Patternsearch algorithm (PS).          
          The cross-sectional sample size, $n$, is
          200 or 1,000.
        \end{minipage}          
\end{table}%

\appendix

\section*{Supplementary Materials for ``Indirect Inference with a Non-Smooth Criterion Function''}

\section{Implementation Details: Examples} 

In this section, we verify that Assumptions 1-3 in the paper are satisfied
for each of the Examples 2-4. 
In addition, for each example we give the specific change-of-variables (COV) needed to implement generalized indirect inference (GII) with COV (GII-COV).

\subsection*{Example 2 (Ordered Probit Model with Individual Effects).}

Let
$x_{i}= (x_{i1},...,x_{iT})\in \R^{d_x}\times \R^T$,
$y_{i}=(y_{i1},...,y_{iT})'\in \{0,1,2,...,J\}^T$, a  $d_z\times T$
matrix $z_{i}=(z_{i1},...,z_{iT})' $,
and
$w_{i}=(w_{i1},...,w_{iT})' \in \R^{T}$
for $i=1, \dots, n$.
Also, we set the parameter
$\theta=(\delta_1,\delta_2,..,\delta_J, \sigma, \gamma')'$.
\vspace{0.2cm}

\noindent\textbf{Assumption 1}
\begin{itemize}
	\item[(a)]
	The observed variables $x_{i}$ and $y_{i}$ are assumed to be iid and $x_i$ and $w_{i}$ are assumed to be independent.
	\item[(b)]
	The innovations $\{w_{i}\}_{i=1}^{n}$ are iid, and follow the standard normal distribution, which has continuously differentiable probability density function.
	\item[(c)]
	The (utility) function
	$$h(x_{i}, w_{i}; \theta)=x_{i}'\gamma+\sigma v_i+w_i $$
	is twice continuously differentiable in both $w\in \R$ and $(\gamma,\sigma)$ given $x_i$.
	\item[(d)]We assume that the parameter space $\Theta$ of $\theta$ is compact, with the added restriction that $\delta_1< \dots <\delta_J$. 
\end{itemize}

\noindent\textbf{Assumption 2}
\begin{itemize}
	\item[(a)] This assumption holds by construction and the compactness of $\Theta$.
	\item[(b)] Typically, one uses a seemingly unrelated regression (SUR) model as the auxiliary model and the moment function is then of the form given in Section 5.1: $$m(y_{i}, z_{i},  \beta)=\left[ \begin{array}{c}{z_{i 1}
		(y_{i 1}-z_{i 1}^{\prime} \beta_{1})} \\ {\vdots} \\ {z_{i T}(y_{i T}-z_{i T}^{\prime} \beta_{T})}\end{array}\right],$$ for some variables $z_{it}$ that are exogenous at time $t$. The function $\beta \mapsto m(y, z, \beta)$ is continuous for any $(y, z)\in\mathcal{Y}\times\mathcal{Z}$.
	\item[(c)] From the definition of the moment function $m(\cdot)$, we have 
	\[
	||z_{it}(y_{it}-z'_{it}\beta_t)||
	\leq||z_{it}y_{it}||+||z_{it}z_{it}'\beta||
	\leq||z_{it}y_{it}||+||z_{it}z_{it}'||\; ||\beta||\leq ||z_{it}y_{it}||+C||z_{it}||,
	\] 
	where the last inequality follows from compactness  of $\mathcal{B}$. Defining $\bar{m}_i=||z_{it}y_{it}||+C||z_{it}||$, we see that $\bar{m}_i$ has finite second moment if $y_{it}$ and $z_{it}$ have finite second moment for all $i,t$. Since $y_{it}$ is a step function, this is satisfied. And since $z_{it}$ is comprised of $x_{it}$ and lags of $x_{it},y_{it}$ the assumption follows so long as $x_{it}$ has a finite second moment for all $(i,t)$. 	
\end{itemize}

\noindent\textbf{Assumption 3}
\begin{itemize}
	\item[(a)] Rewrite the data generating process as 
	\[
	y_{it}=\sum_{j=0}^J j \1\left[{ c^j_{it}(\theta))<u_{it}\leq  c^{j+1}_{it}(\theta)}\right],
	\]
	where 
	\[
	c_{it}^0(\theta)=0, c_{it}^{J+1}(\theta)=1 \mbox{ and }  c^j_{it}(\theta)=\Phi(\delta_j-x_{it}'\gamma-\sigma \nu_i),
	\]	which has the explicit form as in the assumption.  Given that $\delta_j<\delta_{j+1}$ and the normal CDF is strictly monotonic and twice-continuously differentiable,  the random functions are with $ c^j(\theta)<c^{j+1}(\theta)$ and twice-continuously differentiability.
	\item[(b)]The derivatives of $ c_{it}^j(\theta)$ wrt $\theta$ satisfy
	\[
	\|\partial_\theta c_{it}^j(\theta)\|^2=\|\phi(\delta_j-x_{it}'\gamma-\sigma \nu_i)[-x_{it}', -v_i]\|^2\leq (||x_{i,t}||^2+v_i^2),
	\]
	where $\phi$ is the standard normal PDF,  and
	\[
	\begin{split}
	\|\partial_\theta^2 c_{it}^j(\theta)\|^2&=\|(-\delta_j+x_{it}'\gamma+\sigma v_i )\phi(\delta_j-x_{it}'\gamma-\sigma \nu_i)[-x_{it}', -v_i]'[-x_{it}', -v_i]\|^2\\
	&\leq 	(-\delta_j+x_{it}'\gamma+\sigma v_i )^2 \|[x_{it}', v_i]'[x_{it}', v_i]\|^2\\
	&\leq 	\left(\delta_j^2+(x_{it}'\gamma)^2+(\sigma v_i )^2\right) \|[x_{it}', v_i]'[x_{it}', v_i]\|^2\\
	&\leq \left(C_\delta+C_\gamma ||x_{it}||^2+C_\sigma v_i^2\right)\|[x_{it}', v_i]'[x_{it}', v_i]\|^2
	\end{split}
	\]
	where the last inequality follows from the compactness of $\Theta$. 
	Therefore,  $\mathbb{E} [\|  \nabla^{\ell} \bar{c}_{it}\|^2 ]<\infty$ so long as the sixth moment of $x_{it}$ is finite ($v_i$ is Gaussian so all its moments exist).
\end{itemize}

\subsection*{Critical Point Functions and COV}
The COV function in this example is then given by 
\begin{eqnarray*}
	u^r_{it}(\theta,\theta^*)=
	\left \{
	\begin{array}{ll}
		\frac{c^1_{it}(\theta)}{c^1_{it}(\theta^*)}u^r_{it}, & \mathrm{if} \ \ 0 \le u^r_{it} \le c^1_{it}(\theta^*)\\
		c^1_{it}(\theta) +    \frac{c^2_{it}(\theta)-c^1_{it}(\theta)}{c^2_{it}(\theta^*)-c^1_{it}(\theta^*)}\left(u^r_{it}-c^1_{it}(\theta^*)\right), & \mathrm{if} \ \ c^1_{it}(\theta^*) < u_{it}^{r} \le c^2_{it}(\theta^*) \\
		&  \vdots\\
		
		c^J_{it}(\theta) +     \frac{1-c^J_{it}(\theta)}{1-c^J_{it}(\theta^*)}\left(u^r_{it}-c^J_{it}(\theta^*)\right), & \mathrm{if} \ \ c^J_{it}(\theta^*) < u_{it}^{J} \le 1.
	\end{array}
	\right.
\end{eqnarray*}

Although in this cases, there are multiple points of discontinuity, per time step, there is only one COV needed to replace $u_{it}^r$. Similar to Example 1, since the past discontinuity does not flow on to the simulation algorithm of time $t$, the Jacobian term will not accumulate over time. Therefore, the moment function used in GII-COV is given by becomes
\[
M_{n}(\theta, \theta^{\ast}, \beta)
=
\frac{1}{nR}
\sum_{i=1}^{n}
\sum_{t=1}^{T}
\sum_{r=1}^{R}
m \big (y^r_{it}(\theta^*),z_{it},\beta \big)
w^r_{it}(\theta,\theta^* ),
\]
where $w^r_{it}(\theta,\theta^* ) = \partial u_{it}^{r}(\theta, \theta^{\ast})/ \partial u_{it}^{r}$.

\subsection*{Example 3 (Switching-type Models).}
Let $y_t\in \R_+$,    $z_{t}=(1,y_{t-1})'$, recall $\epsilon_t=(v_t,u_t)'$,  where $u_t\sim \text{U}[0,1]$,  $v_t\sim \text{Exp}(1)$,
for $t=1,2,...,T$. 
Also, we set the parameter
$\theta=(\phi, \mu)'$.\\
\vspace{0.2cm}
\noindent\textbf{Assumption 1}
\begin{itemize}
	\item[(a)] There are no exogenous variables.
	\item[(b)] 
	{ The random innovation in this case has two component, $v_t \sim \text{Exp}(1)$ and $u_t\sim \text{U}(0,1)$. Both components are iid and independent of each other, hence the assumption holds by construction.} 
	
	\item[(c)] 
	{{The function $h(\epsilon_t;\theta)=(\mu v_t, u_t)'$ is twice differentiable in $\mu$. }} 
	\item[(d)] The parameter space is compact by assumption. 	
\end{itemize}
\noindent\textbf{Assumption 2} 
\begin{itemize}
	\item[(a)] This assumption holds by compactness of the parameter space. 
	\item[(b)] For this example, we take the moment function to be the first-order conditions from a least squares regression: for $z_{t}=(1,y_{t-1})'$,$$ m(y_{t},z_{t},\beta)=z_{t}(y_{t}-z_{t}'\beta).  $$The assumption then follows.
	\item[(c)] Repeating the same arguments used to verify Assumption 2(c) in Example 2, it is simple to show that the assumption is satisfied. 
\end{itemize}
\noindent\textbf{Assumption 3} 
\begin{itemize}
	\item[(a)]We must slightly alter the original assumption in the main text to fit this more general structure. Rewrite the model in the form
	\[
	y_t=(\phi y_{t-1}+\mu v_t)\1[u_t\leq \phi]+\phi y_{t-1} \1[u_t>\phi].
	\] The above is a generalized version of Assumption 3(a) where, with reference to Assumption 3(a) in the main text, 
	\begin{flalign*}
	g(s_t;\theta)&=\alpha_1(\theta)\1[0\leq u_t<\phi]+\alpha_2(\theta)\1[u_t>\phi],\\\alpha_1(\theta)&=\phi y_{t-1}+\mu v_t,\\\alpha_2(\theta)&=\phi y_{t-1}.
	\end{flalign*}This is now in the form of Assumption 3(a) expect that the functions $\alpha_j(\theta)$ depend on $\theta$. However, these functions are differentiable in $\theta$ and thus do not create any further discontinuities. We then have 
	\[
	c_t^0(\phi)=0, c_t^1(\phi)=\phi \mbox{ and } c_t^2(\phi)=1,
	\] 
	are twice-continuously differentiable, and $ c_t^0<c_t^1<c_t^2$ as long as $0<\phi<1$. 
	\item[(b)] Assumption 3(b) is satisfied as long as $0<\phi<1$.
\end{itemize}

\subsection*{Critical Point Functions and COV}
Let $\theta^*$ be a value at which we wish to evaluate the simulated outcomes. Then, the COV we use is on $u^r_t$, and we replace $u^r_t$ with
\[
u^r_t(\phi,\phi^*)=
\left \{
\begin{array}{lc}
\frac{\phi}{\phi^*}u^r_t, & \mbox{ if } u^r_t\leq \phi^*\\
\frac{1-\phi}{1-\phi^*}(u^r_t-\phi^*)+\phi, & \mbox{ if } u^r_t> \phi^*.\\
\end{array}
\right .
\]
Unlike the previous example, or the example treated in the main paper, the discontinuity $y_{t-k}$, $k=1,\dots,t-1$, has a flow-on-effect on future values of $y_t$. Nonetheless, the COV leads to simulate values of $y_t^r(\theta,\theta^*)$ recursively according to
\[
y^r_{t}(\theta,\theta^*)=\phi y^r_{t-1}(\theta, \theta^*)-\frac{\log(w^r_t)}{\mu} \1[u^r_t\leq\phi^*], 
\]for $y_0(\theta,\theta^*)=0$. That is, although $\phi$ still shows up in $y^r_{t}(\phi,\phi^*)$, the COV has pushed it out of the indicator and now $\phi$ only shows up in a differentiable fashion. In this case, the auxiliary moment function becomes
\[
m_{t}(y^r_{t}(\phi^*,\phi),z_{t},\beta)\prod_{s=1}^tw^r_{s}(\phi,\phi^*).
\]

\subsection*{Example 4 (G/G/1 Queue).} 
\noindent\textbf{Assumption 1} 
\begin{itemize}
	\item[(a)] There are no exogenous variables.
	\item[(b)] The random innovation here can be taken as the joint vector of inter-arrival times of customers and the service times, $w_i$ and $v_i$. These terms are mutually independent and identically distributed by assumption. \footnote{In the original assumption, we set the random innovation to be a sequence of iid random variables such that its distribution does not depends on the parameter of interest. However, we are interested in their distributional parameters $\theta_w$ and $\theta_v$, which is a slight modification of the assumption. This modification is immaterial since, for all common distributions used for $w_i$ and $v_i$, we can always generate these variables using an appropriate transformation of an iid variable that does not depend on $\theta$. }
	
	\item[(c)] The state variable is $s_i=(v_i, w_i)',$ which implies that the function $h(\cdot)$ is the identity map, and satisfies the assumption by construction. 
	\item[(d)] The parameter spaces are compact by assumption. 	
\end{itemize}
\noindent\textbf{Assumption 2} 
\begin{itemize}
	\item[(a)] This assumption holds by compactness of $\theta$. 
	\item[(b)] Following \cite{heggland2004estimating}, we consider as our auxiliary moments those from a Gaussian mixture model with two components: let $\beta=(\mu_1,\sigma_1^2,\mu_2,\sigma^2_2,\pi)$, where $\pi$ denotes the mixing proportion, $(\mu_l,\sigma^2_l)'$, $l=1,2$, denotes the mean and variance of the normal model, and consider the auxiliary moments
	\begin{flalign*}
	m(y_i,\beta):=\begin{bmatrix}(1-\gamma_i(\beta))(y_i-\mu_1)\\\gamma_i(\beta)(y_i-\mu_1)\\(1-\gamma_i(\beta))\left\{(y_i-\mu_1)^2-\sigma_1^2\right\}\\\gamma_i(\beta)\left\{(y_i-\mu_2)^2-\sigma_2^2\right\}\\\gamma_i(\beta)-\pi
	\end{bmatrix},
	\end{flalign*}where $$\gamma_i(\beta):=\frac{{\pi} \phi_{\mu_2,\sigma_2}\left(y_{i}\right)}{(1-{\pi}) \phi_{\mu_1,\sigma_1}\left(y_{i}\right)+{\pi} \phi_{\mu_2,\sigma_2}\left(y_{i}\right)},$$ and where $\phi_{\mu,\sigma}$ denotes the normal pdf with mean $\mu$ and variance $\sigma^2$. The auxiliary moment function $m(y_i,\beta)$ is continuous in $\beta$ for all $y$.
	
	\item[(c)] By construction, for $y_i$ with bounded support, $\sup_{\beta\in\mathcal{B}}\|\gamma_i(\beta)\|\leq M<\infty$ for all $i$. Therefore, $$\|m(y_i,\beta)\|\leq M\left(\|y_i-\mu_1\|+\|y_i-\mu_2\|+\|(y_i-\mu_1)^2-\sigma_1^2\|,\|(y_i-\mu_2)^2-\sigma_2^2\|\right). $$From compactness of $\mathcal{B}$, we then have that, for some non-random constant $C$, $\|m(y_i,\beta)\|\leq  C\|y_i^2\|$. For all widely used DGP for $y_i$, $\mathbb{E}[y_i^2]<\infty$, and the assumption follows. 
	
\end{itemize}
\noindent\textbf{Assumption 3} 
\begin{itemize}
	\item[(a)]Define $$e_i=\sum_{j=1}^{i-1}y_j-\sum_{j=1}^{i-1}w_j.$$
	
	Rewrite the model in the form
	\[
	y_i=v_i\1[u^w_i\leq F_w(e_i; \theta_w)]+(v_i+e_i+w_i)\1[u^w_i>F_w(e_i;\theta_w)]
	\] 
	
	Similar to Example 3, we have $\alpha_0(\theta_v)=v_i$ and $\alpha_1(\theta)=v_i+e_i+w_i$. They are no longer known constants, but they are known functions of $\theta$ given $\{u^v_1, u^w_1, ..., u^v_{i-1}, u^w_{i-1}\}$, the sequence of standard uniforms used to generate the $v_j$ and $w_j$ terms. In the original algorithm, the states $s_i$ are not differentiable in $\theta$, we shall explain how COV are used to construct a differentiable algorithm for generating $s_i$ sequentially.  
	
	The critical value functions are
	\[
	c_i^0(\theta)=0, c_i^1(\theta)= F_u(e_i; \theta_u), \mbox{ and } c_i^2(\theta)=1;
	\]	
	we have $c_i^0<c_i^1<c_i^2$ as long as $w_i$ is an absolutely continuous random variable.   

	\item[(b)]  This follows from (a).	
	The twice differentiability is guarantee by the sequence of COV up to $i-1$, and twice continuously differentiability of $F_w$ and $F_v$ in $\theta$ and continuous ${\partial_{x}\partial_\theta} f_x(x,\theta)$ for $x$ being $v$ and $ w$ respectively. We shall see this more in the COV section below. Continuous derivatives in a compact set is sufficient for the assumption. 
	
\end{itemize}

\subsection*{Critical Point Functions and COV}

There is an irregularity for the first customer. For $n=2,....,N$, the COV is performed by replacing the standard uniform random variable for simulating $w_n$ by 
\[
u_n(\theta,\theta^*)=
\begin{cases}
\frac{c_{i}^{1}(\theta)}{c_n(\theta^*)}u_n  \mbox{ if } u_n\leq c_{i}^{1}(\theta^*)\\
c_{i}^{1}(\theta) +
\frac{1- c_{i}^{1}(\theta)}{1-c_n(\theta^*)}\left(u_n-c_{i}^{1}(\theta^*)\right)\mbox{ if } u_n> c_{i}^{1}(\theta^*).\\
\end{cases}
\]
More specifically:
\begin{itemize}
	\item  $y_1=v_1$, COV is not needed. 
	\item For the second customer, the original non-differentiable algorithm is 
	\[
	y_2=v_2\1[u^w_2>F_w(v_1;\theta_w)]+(v_2+w_2-y_1)\1[u^w_2\leq F_w(v_1;\theta_w)].
	\]
	For the COV, we replace the random uniform, $u^w_2$, for simulating $w_2$ with $ u_2(\theta,\theta^*)$,  then have
	\[
	y_2(\theta,\theta^*)=v_2(\theta_v)\1[u^w_2>c_2(\theta^*)]+(v_2(\theta_v)+w_2(u_2(\theta,\theta^*))-y_1(\theta_v))\1[u^w_2\leq c_2(\theta^*)]
	\]
	and the COV has pushed the parameters of interest out of the indicator function.  
	\item Assume that COV is performed up to the $i-1$th customer and ensured twice-differentiability of $y_j(\theta,\theta^*)$ for $j=1,2,...,i-1$, hence 
	\[
	e_i(\theta)=\sum_{j=1}^{n-1}y_j(\theta,\theta^*)-\sum_{j=1}^{n-1}w_j\left(u_j(\theta,\theta^*)\right)
	\]
	is now twice-continuously differentiable. For the $i$th draw,   
	\[
	y_i=v_i\1[u^w_i>c_i^1(\theta)]+(v_i+w_i-e_i(\theta))\1[u^w_i\leq c_i^1(\theta)]
	\]
	we perform similar COV, and obtain
	\[
	y_i(\theta,\theta^*)=v_i(\theta) \1[u^w_i>c_i(\theta^*)]+(v_i(\theta)+w_i(\theta)-e_i^1(\theta))\1[u^w_i\leq c_i(\theta^*)]. 
	\]
\end{itemize}
In the auxiliary moment function, like the Example 3 , we have the product of the Jacobian terms through time due to the sequence of COV.

\section{HOPP versus GII-COV}

In this section, we briefly compare and contrast the Hessian optimal partial proxy method (hereafter, HOPP) in \cite{joshi2016optimal} and the GII-COV approach in this paper. 

\subsection{HOPP}
Following \cite{joshi2016optimal}, assume that our goal is to calculate the derivative of an option price at the specific point $\theta^*\in \mbox{int}(\Theta)\subset \mathbb{R}_{}$. For $g:\mathcal{S}\times\Theta\rightarrow\mathbb{R}_{+}$ denoting the discounted payoff function of the option, which depends on the parameter $\theta$ and the random variable $S$, where $S$ has support $\mathcal{S}$ and density function $f(\cdot)$, the option price can be expressed as a function of $\theta$:
\[
G(\theta)=\mathbb{E}[g(S,\theta)]=\int_{\mathcal{S}}g(s,\theta)f(s)ds.
\]
Also, consider that $G(\theta)$ does not have an analytic form and that $g(\cdot,\theta)$ is non-differentiable in $\theta$, which is the case for many common option prices. In this case, the derivative can not be passed through the integral, which means that the standard approach of differentiating $g(\cdot,\theta)$, with respect to $\theta$, and then taking the expectation of the resulting quantity will be biased for $\partial_\theta G(\theta^*)$, the derivative of interest.

An alternative approach to calculate $\partial_\theta G(\theta^*)$ is to approximate this derivative using finite-differencing: $$\{G(\theta^*+h)-G(\theta^*)\}/h$$ for some small $h$. However, such a finite-differencing approach will be time consuming if $G(\theta)$ is difficult to calculate, and, in addition, for any non-zero $h$, the resulting approximation is still a biased estimator of $\partial_\theta G(\theta^*)$, where the order of the bias is $O(h)$. 

The goal of the original Hessian optimal partial proxy method (HOPP), \cite{joshi2016optimal}, is to construct an approximation to the derivative of the option price that is faster to calculate and which has smaller bias than finite-differencing approaches. Even though $g(\cdot,\theta)$ is not differentiable in $\theta$, under very weak conditions on $g(\cdot,\theta)$, the derivative $\partial_\theta G(\theta)$ will still exist. Using this fact, \cite{joshi2016optimal} propose to use a change-of-variables (COV) approach, in conjunction with a second-order Taylor series, to approximate the derivative $\partial_\theta G(\theta)$. 

The first step in the application of HOPP is the realization  that we can represent the expectation in question as an integral over a uniform random variable. In particular, if the function $g(\cdot)$ is of the form $\1[\alpha_0\leq S\leq \alpha_1]l(S)$, where $l(\cdot)$ is twice-differentiable, for $u\in[0,1]$, we can  represent $G(\theta)$ as
$$G(\theta)=\int_{0}^{1} \1\{u\in\left[\alpha_0, \alpha_{1}\right]\}f(\theta, u)l[f(\theta, u)] d u,$$where $f$ denotes the algorithm that takes in $(\theta, u)\in\Theta\times [0,1]$ and produces the random variable $S$. Under weak conditions on $f(\cdot)$, we have $f(\theta, u) \in\left[\alpha_{0}, \alpha_{1}\right]$ if and only if $u \in\left[c^{0}(\theta), c^{1}(\theta)\right]$, which allows us to rewrite the integral as $$G(\theta)=\int_{ c^{0}(\theta)}^{ c^{1}(\theta)} l[f(\theta, u)] d u.$$ The functions $c^j(\theta)$, $j=0,1$, are referred to as critical point functions. 

To alleviate the integrals dependence on $\theta$ in the bounds of integration, HOPP \textit{\textbf{must consider a particular}}  change-of-variables (COV) around the point we wish to calculate the derivatives. Say that we eventually wish to calculate $\partial_{\theta} G(\theta^*)$ and $\partial^2_{\theta} G(\theta^*)$, then HOPP considers the COV 
\begin{flalign*}
u(\theta, \theta^*)&=u+\left(\theta-\theta^*\right) \gamma(u)+\frac{1}{2}\left(\theta-\theta^*\right)^{2} \delta(u),\\
\gamma(u)&=\partial_\theta c^{0}\left(\theta^*\right)+\frac{\partial_\theta c^{1}\left(\theta^*\right)-\partial_\theta c^{0}\left(\theta^*\right)}{ c^{1}\left(\theta^*\right)- c^{0}\left(\theta^*\right)}\left(u- c^{0}\left(\theta^*\right)\right),
\\
\delta(u)&=\partial^2_{\theta} c^{0}\left(\theta^*\right)+\frac{\partial^2_{\theta} c^{1}\left(\theta^*\right)-\partial^2_{\theta} c^{0}\left(\theta^*\right)}{ c^{1}\left(\theta^*\right)- c^{0}\left(\theta^*\right)}\left(u- c^{0}\left(\theta^*\right)\right)\end{flalign*}These specific choices of $\gamma(\cdot)$ and $\delta(\cdot)$ are needed to ensure that: one, the dependence on $\theta$ in the bounds of integration is completely removed, and in a manner that does not affect the value of the integral; two, derivatives calculated using this COV will agree with the exact derivative, at least up to a third-order term. That being said, it is important to realize that in the majority of applications, the above functions are not analytically tractable, due to $c^0(\theta),c^1(\theta)$ being intractable, and numerical methods must be used to approximate the functions $\gamma(\cdot)$ and $\delta(\cdot)$. 

Using this COV and expanding the integral using Taylor's theorem around $(\theta-\theta^*)$, up the third-order, then yields the following form for the integral  
$$G(\theta)=\int_{ c^{0}\left(\theta^*\right)}^{ c^{1}\left(\theta^*\right)} l[f(\theta, u(\theta, \theta^*))] \frac{\partial u(\theta, \theta^*)}{\partial u} d u.$$ \cite{joshi2016optimal} demonstrate that derivatives calculated from the above will coincide with $\partial_\theta G(\theta^*)$ and $\partial^2_{\theta}G(\theta^*)$, up to a third-order term that is $O(\|\theta-\theta^*\|^{3})$.

Stated in words, through a clever COV, the HOPP procedure is able to bring the underlying differentiability of $G(\theta)$ to the fore and produce a derivative approximation that is quicker and more accurate than finite-differencing (\citealp{joshi2016optimal}). However, it is also clear from the above that the HOPP procedure \textbf{\textit{explicitly requires}} that the criterion of interest be differentiable. Whithout this, the Taylor series arguments that underly the HOPP procedure would be invalidated and the resulting theoretical results developed in \citet{joshi2016optimal} would not be valid. That is, HOPP is not applied in situations where the criterion of interest is non-differentiable. Instead, HOPP is applied to settings where the criterion is differentiable, but where we would like to obtain computationally convenient and accurate approximations for the derivatives of this criterion.

\subsection{GII-COV}
The goal of the GII-COV procedure is to obtain consistent and asymptotically normal (CAN) parameter estimates using derivative-based optimization routines, and in the specific context where the simulated endogenous variables are discontinuous in the parameter of interest, denoted by $\theta$. To carry out such a task, GII-COV uses a COV approach to construct approximate derivatives that are smooth for all possible values of the sample size $n$, irrespective of the simulation size, $R$, and uniformly for all $\theta\in\Theta$. That is, in contrast to HOPP, the goal of GII-COV is \textbf{not} to approximate the derivative of an expectation, but to approximate derivatives of sample functions, and to ensure that these approximation are regular enough to allow for CAN parameter estimation. 

Recall that the LM-II approach is based on simulated auxiliary moments $M_{n} : \Theta \times \mathcal{B} \rightarrow \mathbb{R}^{d_{\beta}}$, given by 
$$M_{n}(\theta, \beta) :=\frac{1}{n R} \sum_{i=1}^{n} \sum_{r=1}^{R} m\left(y_{i}^{r}(\theta), z_{i}, \beta\right).$$In this context, the idea behind the GII-COV approach is to replace the non-smooth, in $\theta$, moment function $m\left(y_{i}^{r}(\theta), z_{i}, \beta\right)$ with an approximation so that derivatives of this approximation exist and can be used to estimate the unknown parameter $\theta$. 

To carry out the above task,  GII-COV relies on a COV approach around the point where we wish to calculate the derivative, denoted by $\theta^*$. For $u_{i}^{r}$ denoting the uniforms that are used to simulate the outcomes $y_{i}^{r}(\theta)$, GII-COV uses the following COV:
$$u_{i}^{r}\left(\theta, \theta^{*}\right) :=
c_{i}^{j}(\theta) +
\frac{c_{i}^{j+1}(\theta)-c_{i}^{j}(\theta)}{c_{i}^{j+1}\left(\theta^{*}\right)-c_{i}^{j}\left(\theta^{*}\right)}\left\{u_{i}^{r}-c_{i}^{j}\left(\theta^{*}\right)\right\},$$ where $j=0,\dots,J$ and $J$ denotes the total number of possible discontinuities for $y_{i}^r(\theta)$ (see Section three in the main paper for details). Once the COV has been obtained, GII-COV approximates the discontinuous $y_{i}^{r}(\theta)$ with new simulated outcomes $$y_{i}^{r}\left(\theta, \theta^{*}\right)=\sum_{j=0}^{J} \alpha_{j} \1\left[c_{\mathrm{i}}^{j}(\theta)<u_{i}^{r}\left(\theta, \theta^{*}\right) \leq c_{i}^{j+1}(\theta)\right],$$ and the original moment function is then approximate using 
\begin{flalign*}
m_{i}^{r}\left(\theta, \theta^{*}, \beta\right) &:=m\left(y_{i}^{r}\left(\theta, \theta^{*}\right), z_{i,} \beta\right) \cdot w_{i}^{r}\left(\theta, \theta^{*}\right)\\w_{i}^{r}\left(\theta, \theta^{*}\right)&:=\frac{c_{i}^{j+1}(\theta)-c_{i}^{j}(\theta)}{c_{i}^{j+1}\left(\theta^{*}\right)-c_{i}^{j}\left(\theta^{*}\right)}.
\end{flalign*}

In comparison with the COV required by HOPP, the COV required by GII-COV is simple. This is because, in contrast to HOPP, GII-COV operates on sample functions and does not operate on integrals. It is this fact that necessitates the different COV used in the two procedures: since HOPP is interested in derivatives of integrals (or expectations), it must use a COV that respects the bounds of integration; since GII-COV is interested in derivatives of sample functions, GII-COV can use a much simpler COV. Indeed, it is important to note that the HOPP approach, as described in \cite{joshi2016optimal}, is not meant o approximate non-differentiable sample quantities, such as $\partial_\theta y_{i}^{r}(\theta)$, whereas the GII-COV approach is specifically designed to approximate such quantities, such as $\partial_\theta y_{i}^{r}(\theta)$. 

In conclusion, the GII-COV and HOPP procedures both employ a clever COV and share similar ideas. However, the GII-COV procedure operates on sample functions that are not differentiable, i.e., the individual simulated outcome $y_{i}^{r}(\theta)$, whereas HOPP operates on differentiable functions (expectations or integrals). The nature of the COV used in the two procedures is also completely different, with GII-COV procedure requiring a much simpler COV due to the fact that it operates on sample functions and not integrals. 

Lastly, we note that the  HOPP procedure is a pointwise procedure, and the resulting theoretical validity is only guaranteed in a pointwise sense. In contrast, since the goal of the GII-COV procedure is to obtain consistent and asymptotically normal estimators, the GII-COV procedure has been constructed to ensure that the derivatives used in estimation exist for all sample sizes, $n$, all choices of the simulation size, $R$, and uniformly in the parameter space, for all $\theta\in\Theta$. This is the content of Proposition 1, and Theorem 1 in the main paper.

\section{Derivative Estimation}

In this section, we briefly present, in general terms, the main ideas underlying automatic differentiation. Since automatic differentiation is a complex and active research field, we only provide the intuition behind the method, and refer the interested reader to more specialized texts for full details.

\subsection{Pathwise Derivatives via Automatic Differentiation }
The problem of sensitivity computation, i.e., derivative computation, for criterion functions constructed via simulations has received much attention both in finance and engineering disciplines; we refer the reader to \cite{fu2006gradient}  for an overview and discussion of various methodologies. In the econometrics community, the most prominent method for derivative computation is numerical finite-differencing. 

For a criterion $Q_n(\theta)$, the simplest finite-differencing approach constructs an estimate of the Jacobian at the point $\theta^*$, $\partial_\theta Q_n(\theta^*)$, using $$\{Q_n(\theta^*+h)-Q_n(\theta^*)\}/h,$$ 
where $h$ is a differencing parameter. While such an estimator is intuitive and often simple to construct, there are a few well-known trade-offs when using this method. Most important is the fact that the choice of $h$ leads to a bias-variance trade-off in terms of the accuracy with which the finite-differencing estimator approximates $\partial_\theta Q_n(\theta^*)$.


An alternative to numerical finite-differencing methods is to instead compute derivatives using the so-called ``pathwise'' approach. The pathwise approach does not construct the derivative indirectly, as with the finite-differencing, but instead directly differentiates the algorithm that is used to construct the simulated criterion. For instance, in our case, this requires differentiating, say, at the point $\theta^*$, the sequence of steps that are needed to construct the simulated criterion, including the mechanism that is used to simulate the endogenous variables, and with each intermediate derivative produced via the chain rule. Such a procedure implicitly requires that all of the required intermediate derivatives are known, or can be exactly calculated. Once each intermediate derivative has been calculated, the derivative $\partial_\theta Q_n(\theta^*)$ is simply calculated via the chain rule. If each intermediate derivative can be exactly calculated, the resulting pathwise derivative is equal to $\partial_\theta Q_n(\theta^*)$. That is, in contrast to finite-differencing, the pathwise approach yields \textbf{\textbf{the derivative in question}} and not an estimate thereof. The superiority of this method over finite-differencing approaches is discussed in detail in \cite{glasserman2003monte}. 


While many simulated criterion functions admit application of this pathwise method, the issue is
how to carry out this differentiation efficiently, since, if done naively, it can be very slow when there are a large number of steps needed to form the criterion. In the paper, we recommend the use of the automatic differentiation (AD) techniques to calculate these derivatives. AD refers to a suite of computationally efficient tools that bridge the gap between numeric and symbolic differentiation; i.e., AD computes derivatives through accumulation of partial derivative values during code execution to generate exact numerical derivatives, and does so by respecting the mathematical rules of function composition and chain rule differentiation. 

Implementation of AD depends on how the intermediate derivatives are calculated and stored. In general, AD is most commonly implemented using the so-called forward and backward modes of derivative calculation, with the particular application of interest determining which approach is more appropriate. The differences between the two approaches can be represented in terms of how the interleaving derivatives in the chain rule are evaluated and stored. For instance, the forward mode of AD traverses the chain rule for the derivatives of the algorithm from the inner-most step to the outer-most step. That is, we traverse the chain rule in the same manner in which we evaluate the function $Q_n(\theta)$, which is intuitively appealing.

In contrast to the forward mode, the backward mode of AD traverses the chain rule starting with the outer-most step of the algorithm and working inward. This means that the backward mode requires storing the entire string of
computations and so can require greater memory costs than the
the forward mode;  i.e., in the backward mode, each partial derivative that is needed to calculate the overall derivative must be stored, rather than simply evaluated as with the forward mode. In this way, the forward mode of AD can often be computationally more efficient than the backward model. We refer the reader to
Griewank--Walther (2008) for a detailed comparison on these two models of AD computation.

\subsubsection*{AD is not Symbolic Differentiation}

Symbolic differentiation is another tool widely complimented in many computer software that takes in a mathematical expression, $Q_n$ in our case, and returns a symbolic expression of the derivative $\partial_\theta Q_n$. The problem of using symbolic differentiation in many problems of statistical inference is usually associated with the complexity of the original algorithm for computing $Q_n$. In cases of insufficient code, the program is unable to convert $\partial_\theta Q_n$ into one single expression. Even for cases where the expression $\partial_\theta Q_n$ is obtainable, it is often slow to execute as certain sub-expressions appear in various places in the symbolic expression, and hence evaluated several times.

Unlike symbolic differentiation, AD is a mechanism for evaluating derivative values $\partial_\theta Q_n(\theta)|_{\theta=\theta^*}$ without directly calculating the expression analytically. Rather, taking a flat view of the algorithm $\theta \rightarrow Q_n$, AD considers a sequence of intermediate value, where each mapping  and its derivatives are relatively simple to evaluate. This means that the program AD uses to compute the gradient has exactly the same structure as the function we are evaluating, which implies that AD will yield manageable execution times. However, similar to symbolic differentiation, AD also returns an exact derivatives of the original algorithm evaluated at $\theta^*$ up to floating point error.

\section{Further Numerical Results: AD versus Numerical Derivatives}
In this section, we compare the impact of using AD versus standard numerical derivatives. To this end, we compare two versions of GII-COV: the first version is the same procedure that was implemented in Section five of the main paper, which uses automatic differentiation techniques to obtain derivatives used within a Newton-Raphson procedure; the second version of GII-COV replaces the derivatives obtained via AD in the Newton-Raphson procedure with those based on central finite-differences using an optimal differencing parameter.

The results for GII-COV with AD are given in Table \ref{tab2_su}, and the result of GII-COV with finite-differencing are given in Table \ref{tab3_su}. Comparing the results of Tables \ref{tab2_su} and \ref{tab3_su}, we see that across all chosen measures of accuracy, GII-COV based on AD is more accurate than GII-COV based on numerical derivatives. This is not surprising and parallels other results based on comparing the accuracy of AD and numerical derivatives more broadly (see \citealp{renaud1997automatic}, \citealp{baydin2018automatic} and \citealp{kucukelbir2017automatic}). 

Given the striking difference between the results in Tables \ref{tab2_su} and \ref{tab3_su}, we speculate that in many econometric models, it may be possible to obtain substantive accuracy gains by simply switching from the more common numerical finite-differencing derivatives estimators to derivatives calculated using AD techniques.

\begin{table}[h]
	\centering
	\caption{Simulation Results for GII-COV procedure using numerical derivatives.}
	\begin{tabular}{lcccccccccc}
		\hline\hline
		&& \multicolumn{4}{c}{$n=200$} &       & \multicolumn{4}{c}{$n=1000$}       \\ \cline{2-6} \cline{8-11}
		& & MBIAS & AB    & STD   & CV95  &       & MBIAS & AB    & STD   & CV95 \\
		\hline
		Model 1&$\gamma$ & 0.0121 & 0.0185 & 0.0261 &  0.9220 &       & 0.0015 & 0.0015 & 0.0090 & 0.9660 \\
		&$\rho$   & -0.0007 & 0.0287 & 0.0528 &  0.8960 &       & 0.0043 & 0.0047 & 0.0240 &  0.9610 \\ 
		&      &       &       &       &       &       &       &       &       &             \\
		Model 2 &$\gamma$& 0.0584 & 0.1035 & 0.0940  & 0.9800 &       & 0.0058 & 0.0244 & 0.0271 & 1.000 \\
		&$\alpha$ & 0.0059 & 0.0266 & 0.0463  & 0.9460 &       & 0.0002 & 0.0094 & 0.0146 & 0.9280 \\
		&$\rho$& -0.0289 & 0.0416 & 0.0491  & 0.9040 &       & -0.0081 & 0.0141 & 0.0189 & 0.9260 \\
		&&       &       &       &       &              &       &       &       &       \\
		Model 3&$\gamma$ & 0.0076 & 0.0787 & 0.3567 &  0.9960 &       & 0.0001 & 0.0042 & 0.0068 & 0.9500 \\
		&$\alpha$  & 0.0096 & 0.0187 & 0.0607 &  0.9840 &       & 0.0011 & 0.0053 & 0.0141 &  0.9460 \\
		&$\rho$ & -0.0093 & 0.0494 & 0.4016  & 0.9940 &       & 0.0011 & 0.0017 & 0.0030& 0.9600 \\
		\hline\hline
	\end{tabular} 
	\begin{minipage}{0.9\linewidth}
		\textit{Notes.}
		The number of replications for the Monte Carlo simulation
		is 1,000.
		The cross-sectional sample size $n$ is
		200 or 1,000.
		We report the mean bias (MBIAS),
		mean absolute bias (AB), standard deviation (STD)
		and the Monte Carlo coverage of a 95\% Wald-confidence interval
		(CV95).
	\end{minipage}          
	\label{tab2_su}%
\end{table}%

\begin{table}[h]
	\centering
	\caption{Simulation Results for GII-COV procedure using automatic differentiation.}
	\begin{tabular}{lcccccccccc}
		\hline\hline
		&& \multicolumn{4}{c}{$n=200$} &       & \multicolumn{4}{c}{$n=1000$}       \\ \cline{2-6} \cline{8-11}
		& & MBIAS & AB    & STD   & CV95  &       & MBIAS & AB    & STD   & CV95 \\
		\hline
		Model 1&$\gamma$ & 0.0052 & 0.0110 & 0.0281 &  0.9570 &       & -0.0001 & 0.0015 & 0.0049 & 0.9590 \\
		&$\rho$   & -0.0043 & 0.0183 & 0.0419 &  0.9430 &       & -0.0009 & 0.0028 & 0.0075 &  0.9440 \\ 
		&      &       &       &       &       &       &       &       &       &             \\
		Model 2 &$\gamma$& 0.0038 & 0.0105 & 0.0246  & 0.9440 &       & 0.0002 & 0.0016 & 0.0045 & 0.9460 \\
		&$\alpha$ & 0.0039 & 0.0230 & 0.0463  & 0.9370 &       & 0.0008 & 0.0041 & 0.0107 & 0.9510 \\
		&$\rho$& -0.0034 & 0.0174 & 0.0341  & 0.9410 &       & -0.0009 & 0.0032 & 0.0097 & 0.9600 \\
		&&       &       &       &       &              &       &       &       &       \\
		Model 3&$\gamma$ & 0.0033 & 0.0115 & 0.0286 &  0.9500 &       & 0.0002 & 0.0021 & 0.0071 & 0.9610 \\
		&$\alpha$  & 0.0057 & 0.0241 & 0.0519 &  0.9530 &       & 0.0011 & 0.0053 & 0.0141 &  0.9460 \\
		&$\rho$ & -0.0060 & 0.0193 & 0.0443  & 0.9510 &       & -0.0011 & 0.0037 & 0.0110& 0.9600 \\
		\hline\hline
	\end{tabular} 
	\label{tab3_su}%
\end{table}%

\end{document}